\newtheorem{thm}{Theorem}[section]
\newtheorem{lem}[thm]{Lemma}
\newtheorem{prop}[thm]{Proposition}
\theoremstyle{definition}
\theoremstyle{remark}
\newtheorem{prob}{Problem}
\def\NN{\mathbb{N}}
\newcommand{\abs}[1]{\left\vert#1\right\vert}
\newcommand{\pont}[3]{{#1}_{{#2} \rightarrow {#3}}}
\newcommand{\costA}[1]{v{\mathrm{A}_{#1}}}
\newcommand{\costN}[1]{v{\mathrm{N}_{#1}}}
\newcommand{\costSP}[1]{v{\mathrm{SP}_{\!#1}}}
\newcommand{\costSPN}[2]{{\mathrm{V}^{#2}_{\!#1}}}
\newcommand{\opt}{\protect\ensuremath\mathrm{opt}}
\newcommand{\Opt}{\protect\ensuremath\mathrm{OPT}}
\newcommand{\optStar}{\protect\ensuremath\mathrm{optStar}}
\newcommand{\distanceA}[1]{\opt{\mathrm{A}_{#1}}}
\newcommand{\distanceN}[1]{\opt{\mathrm{N}_{#1}}}
\newcommand{\DistanceSP}[1]{\opt{\mathrm{SP}_{\!#1}}}
\newcommand{\DistanceNSP}[2]{\opt{\mathrm{NSP}^{#2}_{\!#1}}}
\newcommand{\DistanceV}[2]{\opt{\mathrm{V}^{#2}_{\!#1}}} 
\newcommand{\A}{\mathtt{a}}
\newcommand{\B}{\mathtt{b}}
\newcommand{\C}{\mathtt{c}}
\newcommand{\D}{\mathtt{d}}
\newcommand{\E}{\mathtt{e}}
\newcommand{\MSA}{\mathbf{MSA}}
\newcommand{\NMSA}{\mathbf{NMSA}}
\newcommand{\gap}{\protect\ensuremath\texttt{-}}
\newcommand{\probl}[1]{\protect\ensuremath\mathbf{#1}}
\newcommand{\sol}{\protect\ensuremath\mathrm{Sol}}
\newcommand{\vet}[1]{\vec{#1}}
\newcommand{\cjtoIndex}[1]{V_{#1}}
\newcommand{\conjuntoBool}[1]{\mathcal{B}^{{#1}}}
\newcommand{\metricA}{\mathbb{M}^\mathrm{W}}
\newcommand{\metricC}{\mathbb{M}^\mathrm{C}}
\newcommand{\metricN}{\mathbb{M}^\mathrm{N}}
\newcommand{\generalcost}{v}
\newcommand{\Generalcost}{V}
\newcommand{\cost}{cStar}
\newcommand{\qmax}{Q_{\max}}
\providecommand{\keywords}[1]
{
  \small	
  \textbf{Keywords:} #1
}
\title{Algorithms for normalized multiple sequence alignments}
\author{Eloi Araujo$^{1,}\footnote{Corresponding author}\qquad$ Diego P.~Rubert$^1$ \\ Luiz Rozante$^2\qquad$ Fábio V.~Martinez$^1$ \medskip \\
  \small $^1$ Faculdade de Computação \\[-1ex] \small Universidade Federal de Mato Grosso do Sul \\[-1ex] \small Brazil \\
  \small $^2$ Centro de Matemática, Computação e Cognição \\[-1ex] \small Universidade Federal do ABC \\[-1ex] \small Brazil
}
\begin{document}
\maketitle

\begin{abstract}
Sequence alignment supports numerous tasks in bioinformatics, natural
language processing, pattern recognition, social sciences, and other
fields. While the alignment of two sequences may be performed swiftly
in many applications, the simultaneous alignment of multiple sequences
proved to be naturally more intricate. Although most multiple sequence
alignment (MSA) formulations are NP-hard, several approaches have been
developed, as they can outperform pairwise alignment methods or are
necessary for some applications.  Taking into account not only
similarities but also the lengths of the compared sequences
(i.e.~normalization) can provide better alignment results than both
unnormalized or post-normalized approaches. While some normalized
methods have been developed for pairwise sequence alignment, none have
been proposed for MSA. This work is a first effort towards the
development of normalized methods for MSA.  We discuss multiple
aspects of normalized multiple sequence alignment (NMSA). We define
three new criteria for computing normalized scores when aligning
multiple sequences, showing the NP-hardness and exact algorithms for
solving the NMSA using those criteria. In addition, we provide
approximation algorithms for MSA and NMSA for some classes of scoring
matrices.

\keywords{Multiple sequence alignment ($\MSA$), Normalized multiple
  sequence alignment ($\NMSA$), Algorithms and complexity}

\end{abstract}

\section{Introduction}\label{sec:intro}

Sequence alignment lies at the foundation of bioinformatics. Several
procedures rely on alignment methods for a range of distinct purposes,
such as detection of sequence homology, secondary structure
prediction, phylogenetic analysis, identification of conserved motifs
or genome assembly. On the other hand, alignment techniques have also
been reshaped and found applications in other fields, such as natural
language processing, pattern recognition, or social
sciences~\cite{AT2000,apostolico1997,BL2002,marzal1993}.

Given its range of applications in bioinformatics, extensive efforts
have been made to improve existing or developing novel methods for
sequence alignment. The simpler ones compare a pair of sequences in
polynomial time on their lengths, usually trying to find editing
operations (insertions, deletions, and substitutions of symbols) that
transform one sequence into another while maximizing or minimizing
some objective function called edit distance~\cite{HAR2009}. This
concept can naturally be generalized to align multiple
sequences~\cite{WLXZ2015}, adding another new layer of algorithmic
complexity, though. In this case, most multiple sequence alignment
(MSA) formulations lead to NP-hard
problems~\cite{Elias2006}. Nevertheless, a variety of methods suitable
for aligning multiple sequences have been developed, as they can
outperform pairwise alignment methods on tasks such as phylogenetic
inference~\cite{OR2006}, secondary structure prediction~\cite{CB1999}
or identification of conserved regions~\cite{Clustal2011}.

In order to overcome the cost of exact solutions, a number of MSA
heuristics have been developed in recent years, most of them using the
so-called progressive or iterative
methods~\cite{HTHI1995,Clustal2014,TPP1999,MCoffee2006}. Experimental
data suggest that the robustness and accuracy of heuristics can still be
improved, however~\cite{WLXZ2015}.

Most approaches for pairwise sequence alignment define edit distances
as absolute values, lacking some normalization that would result in
edit distances relative to the lengths of the sequences. However, some
applications may require sequence lengths to be taken into
account. For instance, a difference of one symbol between sequences of
length 5 is more significant than between sequences of length 1000. In
addition, experiments suggest that normalized edit distances can
provide better results than both unnormalized or post-normalized edit
distances~\cite{marzal1993}. While normalized edit distances have been
developed for pairwise sequence
alignment~\cite{arslan1999,marzal1993}, none have been proposed for
MSA to the best of our knowledge.

In this work, we propose exact and approximation algorithms for
normalized MSA (NMSA). This is a first step towards the development of
methods that take into account the lengths of sequences for computing
edit distances when multiple sequences are compared.
The remainder of this paper is organized as
follows. Section~\ref{sec:pre} introduces concepts related to sequence
alignment and presents normalized scores for NMSA, followed by the
complexity analysis of NMSA using those scores in
Section~\ref{sec:complexity}. Next, Sections~\ref{sec:exatos}
and~\ref{sec:approx} describe exact and approximation algorithms,
respectively. Section~\ref{sec:concl} closes the paper with the
conclusion and prospects for future work.

\section{Preliminaries}\label{sec:pre}

An \emph{alphabet} $\Sigma$ is a finite non-empty set of
\emph{symbols}. A finite sequence $s$ with $n$ symbols in $\Sigma$ is
seen as $s(1) \cdots s(n)$. We say that the \emph{length} of $s$,
denoted by $\abs{s}$, is $n$. The sequence $s(p) \cdots s(q)$,
with $1 \le p \le q \le n$, is denoted by $s(p\!:\!q)$. If $p >
q$, $s(p\!:\!q)$ is the \emph{empty} sequence which length is equal to zero
and it is denoted by $\varepsilon$. We denote the sequence resulting
from the concatenation of sequences $s$ and $t$ by $s \cdot t = st$.
A sequence of
$n$ symbols~$\A$ is denoted by~$\A^{n}$. A $k$-tuple $S$ over
$\Sigma^*$ is called a \emph{$k$-sequence} and we write $s_1, \ldots,
s_k$ to refer to $S$, where $s_i$ is the $i$-th sequence in $S$.
We denote $S = \emptyset$ if every sequence of $S$ is a empty sequence.

Let $\Sigma_{\gap} := \Sigma \cup \{\gap\}$, where $\gap \not\in \Sigma$
and the symbol $\gap$ is called a \emph{space}. Let $S = s_{1},
\ldots, s_{k}$ be a $k$-sequence. An \emph{alignment} of $S$ is a
$k$-tuple $A = [s'_{1}, \ldots, s'_{k}]$ over $\Sigma_{\gap}^{*}$,
where
\begin{enumerate}
\item[($a$)] each sequence $s'_{h}$ is obtained by inserting spaces in
  $s_{h}$;
  \item[($b$)] $\abs{s'_{h}} = \abs{s'_{i}}$ for each pair $h, i$, with
    $1 \le h, i \le k$; 
    \item[($c$)] there is no $j$ in $\{1, \ldots, k\}$ such
      that $s'_{1}(j) = \ldots = s'_{k}(j) = \gap$. 
\end{enumerate}
Notice that alignments, which are $k$-tuples
over $\Sigma_{\gap}^*$, are written enclosed by square brackets
``$[~]$''.
The sequence $A(j) = [s'_{1}(j), \ldots, s'_{k}(j)] \in \Sigma_{\gap}^k$ is the
\emph{column $j$} of the alignment $A = [s'_{1}, \ldots, s'_{k}]$ and
$A[j_{1} \!:\! j_{2}]$ is the alignment
defined by the columns $j_{1}, j_{1} + 1, \ldots, j_{2}$ of $A$.
We say that the pair $[s'_{h}(j), s'_{i}(j)]$ \emph{aligns} in
$A$ or, simply, that $s'_{h}(j)$ and $s'_{i}(j)$ are \emph{aligned} in
$A$, and $\abs{A} = \abs{s'_{i}}$ is the \emph{length} of the
alignment~$A$.
It is easy to check that $\max_{i} \{\abs{s_{i}} \} \le
\abs{A} \le \sum_{i} \abs{s_{i}}$.
We denote by $\mathcal{A}_{S}$ the set of all alignments of $S$.

An alignment can be used to represent \emph{editing operations} of
\emph{insertions}, \emph{deletions} and \emph{substitutions} of
symbols in sequences, where the symbol $\gap$ represents insertions or
deletions. An alignment can also be represented in the matrix format
by placing one sequence above another.
Thus, the alignments
\begin {eqnarray*}
  [\A \A \A \gap, \A \B \gap \gap, \gap \C \A \C]
  & \mbox{and} &
  [\gap \A \A \A \gap, \A \B \gap \gap \gap, \gap \C \A \gap \C]
\end {eqnarray*}
of 3-sequence $\A \A \A, \A \B, \C \A \C$ can be represented respectively as
\begin {eqnarray*}
\left[
\begin {array} {cccc}
\A & \A & \A & \gap \\
\A & \B & \gap & \gap \\
\gap & \C & \A & \C
\end {array}
\right]
& \mbox{and} &
\left[
\begin {array} {ccccc}
\gap & \A & \A & \A & \gap \\
\A & \B & \gap & \gap & \gap \\
\gap & \C & \A & \gap & \C
\end {array}
\right].
\end {eqnarray*}

Let $I = \{i_{1}, \ldots, i_{m}\} \subseteq \{ 1, \ldots, k \}$ be a
set of indices such that $ i_{1} < \cdots < i_{m}$ and let $A =
[s'_{1}, \ldots, s'_{k}]$ be an alignment of a $k$-sequence $S = s_{1}, \ldots,
s_{k}$. We write $S_I$ to denote the $m$-sequence $s_{i_{1}}, \ldots,
s_{i_{m}}$. The alignment of $S_I$ \emph{induced} by $A$ is the
alignment $A_{I}$ obtained from the alignment $A$, considering only
the corresponding sequences in $S_I$ and, from the resulting
structure, removing columns where all symbols are $\gap$.
In the
previous example, being $A = [\A \A \A \gap, \A \B \gap \gap, \gap \C \A
  \C]$ an alignment of $\A \A \A, \A \B, \C \A \C$, we have 
\[
\left[
  \begin{array}{ccc}
    \A & \A & \A \\
    \A & \B & \gap
  \end {array}
  \right]
\]
is an alignment of $\A \A \A, \A \B$ induced by $A$.

A \emph{$k$-vector} $\vec{\jmath} = [j_1, \ldots, j_k]$
is a $k$-tuple, where
$j_i \in
\mathbb{N} = \{0, 1, 2, \ldots\}$. We say that $j_i$ is the $i$-th element of
$\vec{\jmath}$. The $k$-vector $\vec{0}$ is such that all its elements
are zero. If $\vec{\jmath}$ and $\vec{h}$ are $k$-vectors, we write
$\vec{\jmath} \le \vec{h}$ if $j_i \le h_i$ for each $i$; and
$\vec{\jmath} < \vec{h}$ if $\vec{\jmath} \le \vec{h}$ and
$\vec{\jmath} \not= \vec{h}$.
A sequence of $k$-vectors
$\vec{\jmath}_1, \vec{\jmath}_2, \ldots$ is in \emph{topological
  order} if $\vec{\jmath}_i < \vec{\jmath}_{h}$ implies $i < h$.
Given two $k$-vectors, say
$\vec{a} = [a_1,a_2 \ldots a_k]$
$\vec{b} = [b_1,b_2 \ldots b_k]$, we
say that $\vec{a}$ \emph{precedes} $\vec{b}$
when there exists $l \in \mathbb{N}$ such that $a_i = b_i$ for each $i>l$ and
$a_l < b_l$.
A sequence of $k$-vectors
$\vec{\jmath}_1, \vec{\jmath}_2, \ldots$ is in \emph{lexicographical order}
if $\vec{\jmath}_i$ precedes $\vec{\jmath}_{h}$ for each $i < h$.
Clearly, a lexicographical order is a special case of topological order.

Consider $S = s_1, \ldots, s_k$ a $k$-sequence with $n_i = \abs{s_i}$
for each $i$ and we call $\vec{n} = [n_1, \ldots, n_k]$ the \emph{length of $S$}. Let
$\cjtoIndex{\vec{n}} = \{\vec{\jmath} : \vec{\jmath} \le \vec{n} \}$.
Therefore, 
$\abs{\cjtoIndex {\vec{n}}} = \prod_i (\abs{s_i} + 1)$ which implies that
if $n_{i} = n$ for all $i$, then
$\abs{\cjtoIndex {\vec{n}}} = (n+1)^{k}$. Define $S(\vec{\jmath}) = s_1(j_1),
\ldots, s_k(j_k)$ a \emph{column} $\vec{\jmath}$ in $S$ and we say that
$S(1\!:\!\vec{\jmath}) = s_{1}(1\!:\!j_{1}), \ldots, s_{k}(1\!:\!j_{k})$ is
the \emph{prefix of $S$ ending in $\vec{\jmath}$}. Thus, $S =
S(1\!:\!\vec{n})$. Besides that, if $A$ is an alignment and $\vec {j}
= [j, j, \ldots, j]$, then $A[\vec {j}] = A(j)$.

Denote by $\conjuntoBool{k}$ the set of $k$-vectors $[b_1, \ldots,
  b_k]$, where $b_i \in \{0, 1\}$ for each $i$.
Now, for $\vec{b} \le \vec{j}$, where $\vec{j} = [j_1, \ldots, j_k]$, define
\[
\vec{b} \cdot S(\vec{j}) = [x_{1}, \ldots, x_{k}] \in
\Sigma_{\gap}^{k}\,,
\]
such that
\[
x_{i} = \left\{ \begin{array}{ll}
  s_{i}(j_{i})\,, & \mbox{if $b_{i} = 1$} \\
  \gap\,, & \mbox{otherwise}\,.
\end{array} \right.
\]
Therefore, given an alignment $A$ of
$S(1\!:\!\vec{n})$, there exists $\vec{b} \in \conjuntoBool{k}$, with
$\vec{b} \le \vec{n}$, such that $A(\abs{A}) = \vec{b} \cdot
S(\vec{n})$.
In this case, notice that $b_{i} = 1$ if and only if
$s_i(n_i)$ is in the $i$-th row of the last column of $A$ and we
say that $\vec{b}$ \emph{defines} the column $\abs{A}$ of alignment $A$.
For $\vec{b} \le \vec{\jmath}$, we also
define the operation
\[
\vec{\jmath} - \vec{b} = [\jmath_{1} - b_{1}, \ldots, \jmath_{k} - b_{k}]\,.
\]
Notice that $\abs{\mathcal{B}_{k}} = 2^{k}$.
Figure~\ref{fig:example1} shows an example of alignment using vectors
to define columns and operations.

\begin{figure}[ht]
  \begin{eqnarray*}
  A = \left[
    \begin{array}{ccccc}
      \A & \B & \C & \gap & \gap\\
      \gap & \B & \C & \A & \gap\\
      \B & \gap & \B & \gap & \A\\
      \A & \A & \A & \A & \A\\
      \C & \gap & \gap & \gap & \gap
      \end{array}
    \right]
  & \Rightarrow &
  \left[
    \begin{array}{c}
      \\
      \\
      A[1:4] \\
        \\ \\
      \end{array}
    \right]
  \left[
    \begin{array}{c}
      \\
      \\
      \left[ 0,0,1,1,0 \right]  \cdot
      S(1:[3,3,3,5,1]) \\
      \\ \\
      \end{array}
    \right]
  \end{eqnarray*}
  \caption{The length of alignment $A$ of 
    $S = \A\B\C, \B\C\A, \B\B\A, \A\A\A\A\A, \C$ is 
    $\vec{n} = [\abs{\A\B\C}, \abs{\B\C\A}, \abs{\B\B\A}, \abs{\A\A\A\A\A}, \abs{\C}]= [3,3,3,5,1]$.
    Here, $S(\vec{n}) = [\C, \A, \A, \A, \C]$ and since 
    $\vec{b_5} = [0, 0, 1, 1, 0]$ define the last column of $A$, we have
    $S (\vec{n} - \vec{b_5}) = \A\B\C, \B\C\A, \B\B, \A\A\A\A, \C$.
    The alignment defined by the first four columns of $A$ is $A[1:4]$
    which is an alignment of $S(1:\vec{n} - \vec{b_5})$ and the last column of
    $A$ is $\vec{b_5} \cdot S(\vec{n})$.
    It is interesting to note
    that $A$ can be completely defined by $5$-vectors
    $\vec{b_1} = [1,0,1,1,1],\vec{b_2} = [1,1,0,1,0], \vec{b_3} = [1,1,1,1,0],\vec{b_4} = [0,1,0,1,0],\vec{b_5} = [0,0,1,1,0]$ where $b_i$ represents column $i$of alignment $A$.
    }\label{fig:example1}
  \end{figure}

For a problem $\probl{P}$, we call $\mathbb{I}_{\probl{P}}$ the set of
instances of $\probl{P}$. If $\probl{P}$ is a decision problem, then
$\probl{P}(I) \in \{\texttt{Yes}, \texttt{No}\}$ is the image of an
instance $I$. If $\probl{P}$ is an optimization (minimization)
problem, there is a set $\sol(I)$ for each instance $I$, a function
$v$ defining a non-negative rational number for each $X \in \sol(I)$,
and a function $\opt_v(I) = \min_{X \in \sol(I)} \{v(X)\}$. We use
$\opt$ instead of $\opt_v$ if $v$ is obvious.
Let $\mathbf{A}(I) = v(X \in  \sol(I))$ a 
solution computed by an algorithm $\mathbf{A}$ with input~$I$.
We say that $\mathbf{A}$ is an
\emph{$\alpha$-approximation} for $\probl{P}$ if $\mathbf{A}(I) \le
\alpha \, \opt(I)$ for each $I \in \mathbb{I}_{\probl{P}}$.
We say that $\alpha$ is an \emph{approximation factor}
for~$\probl{P}$.

The \emph{alignment problem} is a collection of decision and
optimization problems whose instances are finite subsets of $\Sigma^*$
and $\sol(S) = \mathcal{A}_S$ for each instance $S$. Function $v$,
used for scoring alignments, is called \emph{criterion} for
$\probl{P}$ and we call $v[A]$ the \emph{cost of the alignment} $A$.
The \emph{$v$-optimal alignment} $A$ of $S$ is such that $v[A] =
\opt(S)$. Thus, we state the following general optimization problems
using the criterion $v$ and integer $k$:
\begin{prob}[Alignment with criterion $v$]
  Given a $k$-sequence $S$, find the cost of a
  $v$-optimal alignment of~$S$.
\end{prob}
We also need the decision version of the alignment problem with
criterion $v$ where we are given a $k$-sequence $S$ and a number $d
\in \mathbb{Q}_\ge$, and we want to decide whether there exists an
alignment $A$ of $S$ such that $v[A] \le d$.

Usually the cost of an alignment $v$ is defined from a scoring
matrix. A \emph{scoring matrix} $\gamma$ is a matrix whose
elements are rational numbers.
The matrix has rows and columns indexed by symbols in $\Sigma_{\gap}$.
For $\A, \B \in \Sigma_{\gap}$ and a scoring matrix $\gamma$,
we denote by $\pont{\gamma}{\A}{\B}$ the entry of $\gamma$ in row
$\A$ and column $\B$. The value $\pont{\gamma}{\A}{\B}$ defines the score
for a substitution if $\A, \B \in \Sigma$, for an insertion if $\A =
\gap$, and for a deletion if $\B = \gap$. The entry
$\pont{\gamma}{\gap}{\gap}$ is not defined.

\subsection{$\costA{\gamma}$- and $\costN{\gamma}$-score for scoring alignments of 2-sequences}

Consider a scoring matrix $\gamma$. Let $S = s, t \in \Sigma^{*}$ be a 2-sequence whose length is $\vec{n} = [n
= \abs{s}, m = \abs{t}]$. A simple criterion for scoring alignments
of 2-sequences using the function $\costA{\gamma}$ follows. For an alignment
$[s', t']$ of $s, t$ we define
\[
\textstyle \costA{\gamma}[s', t'] = \sum_ {j = 1}^{|[s', t']|}
\pont{\gamma}{s'\!(j)}{t'\!(j)}\,.
\]
We say that $\costA{\gamma}[s', t']$ is a
$\costA{\gamma}$-\emph{score} of $s, t$. The optimal function for this
criterion is
called \emph{unweighted edit distance} and
denoted by $\distanceA{\gamma}$, and
an $\distanceA{\gamma}$-optimal alignment is also called 
an A-\emph{optimal alignment}.
When $\pont{\gamma}{\A}{\A} = 0$ and $\pont{\gamma}{\A}{\B} = 1$
for each $\A \not= \B \in \Sigma_{\gap}$, $\distanceA{\gamma}$ is also known as
\emph{Levenshtein distance}~\cite{levenshtein1966}.

Now, suppose that $n \ge m$. Needleman and Wunch~\cite{needleman1970}
proposed an $O(n^2)$-time algorithm for computing
\[
\distanceA{\gamma}(s, t) = \distanceA{\gamma}(S) =
\min_{A \in \mathcal{A}_S} \costA{\gamma} [A] =
\left\{
\begin{array}{ll}
  0 & \mbox{if $S = \emptyset$},\\
  \min_{\vec{b} \in \conjuntoBool{k}, \vec{b} \le \vec{n}} \Big\{ \distanceA {\gamma} (S[1:\vec{n} - \vec{b}) + \costA{\gamma} [\vec{b} \cdot S] \Big\} & \mbox{otherwise}.
  \end{array}
\right.
\]
If $\distanceA{\gamma}$ is a Levenshtein
distance, Masek and Paterson~\cite{masek1980} presented an $O(n^{2} /
\log n)$-time algorithm using the ``\emph{Four Russian's
  Method}''. Crochemore, Landau and Ziv-Ukelson~\cite{crochemore2002}
extended $\distanceA{\gamma}$-score supporting 
scoring matrices with real numbers and  describing an $O(n^{2} / \log
n)$-time algorithm. Indeed, there is no algorithm to determine
$\distanceA{\gamma}(s, t)$ in $O(n^{2 - \delta})$-time for any
$\delta> 0$, unless SETH is
false~\cite{DBLP:journals/siamcomp/BackursI18}. Andoni, Krauthgamer
and Onak~\cite{andoni2010polylogarithmic} described a nearly linear
time algorithm approximating the edit distance within an approximation
factor $\textrm{poly}(\log n)$. Later, Chakraborty et
al.~\cite{chakraborty2020approximating} presented an
$O(n^{2-2/7})$-time $\alpha$-approximation for edit distance, where
$\alpha$ is constant.


Marzal and Vidal~\cite{marzal1993} defined another criterion for
scoring alignments of two sequences called
$\costN{\gamma}$-\emph{score}, which is a normalization of
$\costA{\gamma}$-score, as follows:
\[
\costN{\gamma}[A] =
\left\{
\begin{array}{ll}
  0\,, & \mbox{if $\abs{A} = 0$}\,, \\
  \costA{\gamma}[A] / \abs{A}\,, & \mbox{otherwise}\,.
\end{array}
\right.
\]
The optimal function for this criterion as known as
\emph{normalized edit distance} and it is denoted by $\distanceN{\gamma}$, and
an $\distanceN{\gamma}$-optimal alignment is also called 
a N-\emph{optimal alignment} of $S = s, t$.

A naive dynamic programming algorithm was proposed by Marzal and
Vidal~\cite{marzal1993} to obtain an N-optimal alignment of two
sequences in $O(n^3)$-time. Using fractional programming, Vidal,
Marzal and Aibar~\cite{vidal1995} presented an algorithm with running
time $O(n^3)$, but requiring only $O(n^2)$-time in practice which is
similarly to the
classical (unnormalized) edit distance algorithm. Further, Arslan and
Egecioglu~\cite{arslan1999} described an $O(n^2 \log n)$-time
algorithm to this problem.

Let $S = s,t$ a 2-sequence whose length is
$\vec{n}$.
A simple heuristic for determining a close value for $\distanceN{\gamma}(s,t)$ computes first $\distanceA{\gamma}(s,t)$  
and then divides it by the maximum length $L(\vec{n}) = L(S)$ of an optimum alignment of $S$.
In order to find $L(S)$, we use the same dynamic programming strategy for computing $\distanceA{\gamma}(s,t)$, i.e.,
\[
L(S) = L(\vec{n}) = \left\{
\begin{array}{ll}
  0 & \mbox{if $S = \emptyset$}\\
  \max_{\vec{b} \in \conjuntoBool{k}, \vec{b} \le \vec{n}, \distanceA{\gamma}(S) = \distanceA{\gamma}(S - \vec{b}) + \costA{\gamma} (\vec{b} \cdot S)}
    \Big\{ L(\vec{n} - \vec{b}) \Big\} + 1
       & \mbox{otherwise},
  \end{array}
\right.
\]
in $O(nm)$-time. The following theorem shows that this heuristic is
a simple approximation algorithm to find a N-optimal
alignment.

\begin{thm} \label{thm:approx-1}
  Let $s, t$ be a 2-sequence of length $[n, m]$ and let
  $L(s,t)$ be the maximum length of an A-optimal alignment of $s, t$. Then,
  \[
  \frac{\distanceA{\gamma}(s,t)}{L(S)} \le 2 \, \distanceN{\gamma}(s,
  t)\,,
  \]
  and it can be computed in $O(n^2)$-time if $n = m$. Moreover, this
  ratio is tight, i.e., for any positive rational $\varepsilon$, there
  exists a scoring matrix $\gamma$, sequences $s, t$ and an
  $A$-optimal alignment of $s,t$ with maximum length $A$ such that
  \[
  \frac{\distanceA{\gamma}(s, t)}{\abs{A}} = \frac{\costA{\gamma}[A]}{\abs{A}} = (2 -
  \varepsilon)\,\distanceN{\gamma}(s, t).
  \]
\end{thm}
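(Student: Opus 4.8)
The plan is to split the statement into three parts: the inequality $\distanceA{\gamma}(s,t)/L(S) \le 2\,\distanceN{\gamma}(s,t)$, the running-time claim, and the tightness construction. For the inequality, let $A^\star$ be an N-optimal alignment of $s,t$, so that $\distanceN{\gamma}(s,t) = \costA{\gamma}[A^\star]/\abs{A^\star}$ (assuming $\abs{A^\star}>0$; the case $S=\emptyset$ is trivial). Since $A^\star$ is in particular some alignment of $s,t$, we have $\distanceA{\gamma}(s,t) \le \costA{\gamma}[A^\star]$. The key geometric fact is a lower bound on $\abs{A^\star}$ relative to $L(S)$: every alignment of $s,t$ has length between $\max\{n,m\}$ and $n+m$, and the A-optimal alignment realizing the maximum length satisfies $L(S) \le n+m$, while $\abs{A^\star} \ge \max\{n,m\} \ge (n+m)/2 \ge L(S)/2$. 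Chaining these, $\distanceA{\gamma}(s,t)/L(S) \le \costA{\gamma}[A^\star]/L(S) \le 2\,\costA{\gamma}[A^\star]/\abs{A^\star} = 2\,\distanceN{\gamma}(s,t)$. I would double-check the edge case where $\costA{\gamma}[A^\star]=0$ but the chosen A-optimal alignment used for $L(S)$ has strictly positive cost — this cannot happen since $\distanceA{\gamma}(s,t)\le\costA{\gamma}[A^\star]=0$ forces $\distanceA{\gamma}(s,t)=0$ too, so both sides are $0$.

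For the running time, I would invoke the Needleman–Wunsch $O(n^2)$ computation of $\distanceA{\gamma}(s,t)$ cited in the excerpt, together with the displayed dynamic program for $L(S)$, which runs in $O(nm)$-time by the same table-filling argument; when $n=m$ both are $O(n^2)$. Nothing subtle here beyond noting that the $L$-recurrence only needs the already-computed $\distanceA{\gamma}$ values at each cell, so it is a second linear-size sweep over the same DP table.

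For tightness, given $\varepsilon>0$ I would exhibit a family where the A-optimal alignment is forced to be short (close to $\max\{n,m\}$) while some longer, higher-per-column-but-lower-absolute-cost alignment achieves the normalized optimum near half of $\costA{\gamma}[A]/\abs{A}$. A natural candidate: take $s = \A^n$ and $t = \A^n$ over a scoring matrix where matches $\pont{\gamma}{\A}{\A}$ cost $1$ and a gap (insertion/deletion) costs something slightly above $1/2$ per symbol, say $\pont{\gamma}{\A}{\gap} = \pont{\gamma}{\gap}{\A} = 1/2 + \delta$ for small $\delta$. Then the all-match alignment has length $n$ and absolute cost $n$, so $\costA{\gamma}[A]/\abs{A} = 1$; it is A-optimal provided $1 \le 2(1/2+\delta)$, and it is the unique one of length $n$. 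Meanwhile the alignment that inserts all of $s$ against all gaps and then all of $t$ against all gaps has length $2n$ and cost $2n(1/2+\delta) = n(1+2\delta)$, giving normalized score $(1+2\delta)/2$. So $\distanceN{\gamma}(s,t) \le (1+2\delta)/2$, and the ratio is at least $1/\big((1+2\delta)/2\big) = 2/(1+2\delta)$, which is $2-\varepsilon$ for $\delta$ chosen small enough in terms of $\varepsilon$. I would then verify that no intermediate alignment beats $(1+2\delta)/2$ — a short case analysis on the number $g$ of gap columns shows the normalized cost is $\big((n-g')\cdot 1 + 2g'\cdot(1/2+\delta)\big)/(n+g')$-type expressions that are minimized at the extremes — and confirm $A$ is genuinely the max-length A-optimal alignment.

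The main obstacle is the tightness part: one must be careful that the "stretched" alignment is actually feasible and optimal for the normalized criterion, and that the max-length A-optimal alignment is the short all-match one and not something longer that would spoil the denominator. Getting the scoring matrix entries and the $\delta$–$\varepsilon$ bookkeeping exactly right, and ruling out intermediate alignments, is where the real work lies; the inequality and the runtime are routine.
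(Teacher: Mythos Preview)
Your argument for the inequality and the running time is essentially identical to the paper's: both compare an N-optimal alignment $B$ with an A-optimal alignment $A$ of maximum length, use $\costA{\gamma}[A]\le\costA{\gamma}[B]$, and bound $\abs{A}\ge\max\{n,m\}\ge(n+m)/2\ge\abs{B}/2$.

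For tightness you take a genuinely different route. The paper uses two \emph{distinct} symbols, setting $s=\A^n$, $t=\B^n$ with $\pont{\gamma}{\A}{\gap}=\pont{\gamma}{\B}{\gap}=1/\varepsilon$ and $\pont{\gamma}{\A}{\B}=2/\varepsilon-1$; then every alignment with $k$ mismatch columns costs $2n/\varepsilon-k$, so the unique A-optimum is $[\A^n,\B^n]$ and the all-gap alignment gives normalized cost $1/\varepsilon$. You instead take $s=t=\A^n$ with a \emph{nonzero match cost} $\pont{\gamma}{\A}{\A}=1$ and gap cost $1/2+\delta$. Structurally the two tricks are the same: make matching one symbol barely cheaper than splitting it into two gap columns, so the short all-aligned alignment is A-optimal while the long all-gap alignment nearly halves the per-column cost. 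Your construction is correct (the monotonicity check you defer is the observation that $(n+2\delta(n-k))/(2n-k)$ is increasing in $k$ for $\delta<1/2$), and solving $2/(1+2\delta)=2-\varepsilon$ gives the exact $\delta$. The paper's choice has the mild advantage that it keeps $\pont{\gamma}{\A}{\A}=0$, so the scoring matrix is not immediately excluded from the standard classes $\metricC,\metricA$ discussed later; your matrix violates that convention, which is permitted by the theorem but worth flagging. Both constructions only yield the advertised ratio for $\varepsilon\le 1$; the paper's proof in fact shows only $\ge(2-\varepsilon)\,\distanceN{\gamma}$ rather than equality, so your more careful endpoint analysis is if anything slightly sharper.
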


\begin{proof}
  Let $A$ be an $A$-optimal alignment with maximum length computed by
  the heuristic above in $O(nm)$-time. Let $B$ be an
  N-optimal alignment. Thus, $\costA{\gamma}[A] \le
  \costA{\gamma}[B]$. Moreover,
  since $\abs{B} \le n + m$, $n + m \le 2\,\max\{n, m
  \}$ and $\max\{n, m
  \} \le 2\abs{A}$, we have that $\abs{A} \ge \abs{B}/2$. Therefore,
  $\costN{\gamma}[A] = \frac{\costA{\gamma}[A]}{\abs{A}} \le
  \frac{\costA{\gamma}[B]}{\abs{A}} \le
  \frac{\costA{\gamma}[B]}{\abs{B}/2} = 2\,\distanceN{\gamma}(s,
  t)\,.$

  We present now a 2-sequence $s, t$ and a scoring matrix $\gamma$ such that
  the solution given by the heuristic is at least $(2 - \varepsilon) \distanceN{\gamma}(s,
  t)$ for any
  $\varepsilon$ in $\mathbb{Q}_>$. Let $\Sigma = \{ \A, \B \}$,
  $\gamma$ be a scoring matrix such that $\pont{\gamma}{\A}{\gap} =
  \pont{\gamma}{\B}{\gap} = 1/\varepsilon$ and $\pont{\gamma}{\A}{\B}
  = 2/\varepsilon - 1 $ and $\A^n, \B^n \in \Sigma^*$, where $n$ in
  a positive integer. Observe that the $\costA{\gamma}$-score of any
  alignment of $(\A^n, \B^n)$, where $[\A, \B]$ is aligned in $k$
  columns, is equal to $2n/\varepsilon - k$. Thus, $ \distanceA{\gamma}(\A^n,
  \B^n) = \min_{0 \le k \le n} \{ 2n/\varepsilon - k \} =
  2n/\varepsilon - n = (2/\varepsilon - 1) \,n$ which implies $[\A^n,
    \B^n]$ is the $A$-optimal alignment with maximum length.  Since
  $\distanceN{\gamma}(\A^n, \B^n) \le \costN{\gamma}([\A^n\gap^n,
    \gap^n\B^n]) = 1/\varepsilon$, it follows
  \begin{align*}
    \frac{\distanceA{\gamma}(\A^n, \B^n)}{\abs{[\A^n, \B^n]}}
    & = \frac{\costA{\gamma}[\A^n, \B^n]}{\abs{[\A^n, \B^n]}} =
    \frac{(2/\varepsilon - 1)\,n}{n} =
    \frac{2 - \varepsilon}{\varepsilon} =
    (2 - \varepsilon)\, \costN{\gamma}([\A^n\gap^n,
    \gap^n\B^n])
    \ge (2 - \varepsilon)\,\distanceN{\gamma}(\A^n, \B^n)\,.
  \end{align*}
  
\end{proof}

We define now classes of scoring matrices. The most common class of scoring
matrices $\metricC$ has the following properties: for all $\A,
\B, \C, \in \Sigma_ {\gap}$, we have (a) $\pont{\gamma}{\A}{\B} > 0$
if $\A \not= \B$, and $\pont{\gamma}{\A}{\B} = 0$ if $\A = \B$; (b)
$\pont{\gamma}{\A}{\B} = \pont{\gamma}{\B}{\A}$; and (c)
$\pont{\gamma}{\A}{\C} \le \pont{\gamma}{\A}{\B} +
\pont{\gamma}{\B}{\C}$. The class $\metricA$ of scoring matrices is
such that, for all symbols $\A, \B, \C \in \Sigma$, we have (a)
$\pont{\gamma}{\A}{\gap} = \pont{\gamma}{\gap}{\A} > 0$; (b)
$\pont{\gamma}{\A}{\B} > 0$ if $\A \not= \B$, and
$\pont{\gamma}{\A}{\B} = 0$ if $\A = \B$; (c) if
$\pont{\gamma}{\A}{\B} < \pont{\gamma}{\A}{\gap} +
\pont{\gamma}{\gap}{\B}$, then $\pont{\gamma}{\A}{\B} =
\pont{\gamma}{\B}{\A}$; (d) $\pont{\gamma}{\A}{\gap} \le
\pont{\gamma}{\A}{\B} + \pont{\gamma}{\B}{\gap}$; and (e) $\min\{
\pont{\gamma}{\A}{\C}, \pont{\gamma}{\A}{\gap} +
\pont{\gamma}{\gap}{\C} \} \le \pont{\gamma}{\A}{\B} +
\pont{\gamma}{\B}{\C}$. Moreover, the class $\metricN$ is such that
(a) $\metricN \subseteq \metricA$ and (b) $\pont{\gamma}{\A}{\gap} \le
2\,\pont{\gamma}{\B}{\gap}$ for each $\A, \B \in \Sigma$.

For a set $S$, we say that $f: S \times S \rightarrow \mathbb{R}$ is a
\emph{distance function} or \emph{metric on $S$} if $f$ satisfies for all $s, t, u
\in S$,:  
\begin{enumerate}
  \item $f(s, s) = 0$
    (\emph{reflexivity});
  \item $f(s, t) > 0$ if $s \not= t$
    (\emph{strict positiveness});
  \item $f(s, t) = f(t, s)$ (\emph{symmetry}); and
    \item $f(s, u) \le f(s, t) + f(t, u)$ (\emph{triangle inequality}).
\end{enumerate}

If a given criterion $v$ depends on a scoring matrix $\gamma$ and it is a
metric on $\Sigma^*$, we say that the scoring matrix $\gamma$
\emph{induces} a $\opt_v$-distance on $\Sigma^*$.
Sellers~\cite{sellers1974} showed that
matrices in~$\metricC$ induce an $\distanceA{\gamma}$-distance on
$\Sigma^*$.
Araujo and Soares~\cite{araujo2006} showed that $\gamma
\in \metricA$ and $\gamma \in
\metricN$ if and only if $\gamma$ induces an
$\distanceA{\gamma}$-distance and an $\distanceN{\gamma}$-distance on $\Sigma^*$
respectively. Figure~\ref{fig:sets1}
shows the relationship between these classes.

\begin{figure}[hbt]
  \begin{center}
    \begin{tikzpicture}
      \node
      at (0, .8) {$\metricA$};
      \node
      at (-.8, 0) {$\metricC$};
      \node
      at (.9, 0) {$\metricN$};
      
      \draw (0,0) ellipse (2.5cm and 1.2cm);
      \draw (-.8,0) ellipse (1cm and .6cm);
      \draw (.8,0) ellipse (1cm and .6cm);
    \end{tikzpicture}
  \end{center}
  \vspace*{-.4cm}
  \caption{Relationship between scoring matrices. Araujo and
    Soares~\cite{araujo2006} showed that $\metricC \subseteq
    \metricA$, $\metricN \subseteq \metricA$, $\metricC \not\subseteq
    \metricN$ and $\metricN \not\subseteq \metricC$. Moreover, the
    scoring matrix $\gamma$ such that $\pont{\gamma}{\A}{\A} = 0$ for
    each $\A$ and $\pont{\gamma}{\A}{\B} = 1$ for each $\A \not= \B$
    is in $\metricC \cap \metricN$, which implies that $\metricC \cap
    \metricN \not= \emptyset$.}\label{fig:sets1}
\end{figure}

Given a scoring function $\generalcost$ for alignments
of 2-sequences $s, t$ that
depends on a scoring matrix, we say that two scoring matrices
$\gamma$ and $\rho$ are \emph{equivalent} considering $\generalcost$
when
$\generalcost_\gamma[A] \le \generalcost_\gamma[B]$ if and only if
$\generalcost_\rho[A] \le \generalcost_\rho[B]$ for any pair of
alignments $A, B$ of sequences $s, t$. If $\rho$ is a matrix obtained
from $\gamma$ by multiplying each entry of $\gamma$ by a constant $c >
0$, then $\costA{\rho}[A] = c \cdot \costA{\gamma}[A]$ and
$\costN{\rho}[A] = c \cdot \costN{\gamma}[A]$, which implies that
$\gamma$ and $\rho$ are equivalent. As a consequence, when
the scoring function is $\costA{\gamma}$ or $\costN{\gamma}$
and it is convenient,
we can suppose that all entries of $\gamma$ are integers instead of
rationals.

\subsection{$\costSP{\gamma}$-score for $k$ sequences}

Consider a scoring matrix $\gamma$. Let $S = s_{1}, \ldots, s_{k}$ be
a $k$-sequence whose length is $\vec{n}$
and $A = [s'_{1}, \ldots, s'_{k}]$ be an alignment of
$S$. The criterion $\costSP{\gamma}$, also called \emph{SP-score}, for
scoring the alignment $A$ is
\begin{eqnarray}
\textstyle \costSP{\gamma}[A] = \sum_{h=1}^{k-1} \sum_{i=h+1}^{k}
\costA{\gamma}[A_{\{h, i\}}]\,.\label{exp:def-sp}
\end{eqnarray}

We define $\DistanceSP{\gamma}$ as the optimal function for the
criterion $\costSP{\gamma}$. An alignment $A$ of $S$ such that
$\costSP{\gamma}[A] = \DistanceSP{\gamma}(S)$ is called
\emph{$\costSP{\gamma}$-optimal alignment}.  Regardless its decision
or optimization version, we call the associated problem as \emph{multiple sequence
  alignment problem} ($\MSA$). Formally,

\begin{prob}[Multiple sequence alignment]
  Let $\gamma$ be a fixed scoring matrix. Given a $k$-sequence $S$,
  find $\DistanceSP{\gamma}(S)$.
\end{prob}

In order to compute $\DistanceSP{\gamma}$, we extend the definition of
$\costSP{\gamma}$ considering a column of an alignment $A = [s'_1,
  \ldots, s'_k]$ as its parameter. Thus, $\costSP{\gamma}[A(j)] = \costSP{\gamma}[A[\vec{j}]] =
\sum_{i < h} \pont{\gamma}{s'_{i}(j)}{s'_h (j)}$ assuming that
$\pont{\gamma}{\gap}{\gap} = 0$ and $\vec{j} = [j, \ldots, j]$ and
\begin{align}
  \DistanceSP{\gamma}(S) &= \min_{A \in \mathcal{A}_S} \costSP{\gamma}[A] =
  \left\{
  \begin{array}{ll}
  0 & \mbox{if $S = \emptyset$}\\
  \textstyle \min_{\vec{b} \in \conjuntoBool{k}, \vec{b} \le \vec{n}}
    \Big\{
  \DistanceSP{\gamma}(S(1\!:\!\vec{n} - \vec{b})) +
  \costSP{\gamma}[\vec{b} \cdot S(\vec{n})]\Big\} & \mbox{otherwise}.
  \end{array}
  \right.
  \label{eq:optSP}
\end{align}

Recurrence~(\ref{eq:optSP}) can be computed using a dynamic
programming algorithm, obtaining $D(\vec{\jmath}) =
\DistanceSP{\gamma}(S(1\!:\!\vec{\jmath}))$ for all $\vec{\jmath} \le
\vec{n}$. This task can be performed by generating all indexes of $D$
in lexicographical order, starting with $D(\vec{0}) = 0$, as presented
in Algorithm~\ref{alg-varias1}.

\begin{algorithm}[th!]\caption{}\label{alg-varias1}

  \begin{algorithmic}[1]

    \REQUIRE $S = s_{1}, \ldots, s_{k} \in (\Sigma^{*})^{k}$ 

    \ENSURE $\DistanceSP{\gamma}(S)$

    \vspace{0.2cm}
  
    \STATE $D(\vec{0}) \gets 0$

    \FOR{each $\vec{\jmath} \le \vec{n}$ in lexicographical order} 

      \STATE $D(\vec{\jmath}) \gets \min_{\vec{b} \in \conjuntoBool{k}, \,
        \vec{b} \le \vec{\jmath}} \big\{ D(\vec{\jmath} - \vec{b}) +
      \costSP{\gamma}[\vec{b} \cdot S(\vec{\jmath})] \big\}$

    \ENDFOR
      
    \RETURN $D(\vec{n})$

  \end{algorithmic}
  
\end{algorithm}

Suppose that $\abs{s_i} = n$ for each $i$. Notice that the space to
store the matrix $D$ is $\Theta((n + 1)^{k})$ and thus
Algorithm~\ref{alg-varias1} uses $\Theta((n + 1)^{k})$-space. Besides
that, Algorithm~\ref{alg-varias1} checks, in the worst case,
$\Theta(2^{k})$ entries for computing all entries in the matrix $D$
and each computation spends $\Theta(k^{2})$-time. Therefore, its
running time is $O(2^{k}k^{2}(n + 1)^{k})$. Observe that when the
distance is small, not all entries in $D$ need to
be computed, such as in the Carrillo and Lipman's
algorithm~\cite{carrillo1988}. 

We can also describe an obvious but unusual variant of this problem
that consider a scoring matrix for each
pair of the sequences. Formally, considering an scoring matrix array $\vet{\gamma} = [ 
\gamma^{(12)},
\gamma^{(13)}, \ldots,
\gamma^{(1k)},
\gamma^{(23)}, \ldots, 
\gamma^{(2k)},
\ldots,
\gamma^{((k-1)k)} ]$
and an alignment $A$ of a $k$-sequence,
\begin{eqnarray*}
\costSP{\vec{\gamma}}[A] = \sum_{h=1}^{k-1} \sum_{i=h+1}^{k}
\costA{\gamma^{(hi)}}[A_{\{h, i\}}]\,\label{exp:def-extsp}.
\end{eqnarray*}
Thus, given a $k$-sequence $S$, we ask for finding $\DistanceSP{\vec{\gamma}} (S) = \min_{A \in \mathcal{A}_S} \{ \costSP{\vec{\gamma}}[A] \}$.
Algorithm~\ref{alg-varias1} can be easily modified in order to solve this extended version with same time and space complexity.
This is important here because this version is used as a subroutine of one of the normalized version.

\subsection{$\costSPN{\gamma}{i}$-score for $k$ sequences}

In this section we define a new criteria to normalize the
$\costSP{\gamma}$-score of a multiple alignment.
The new criteria for aligning
sequences takes into account the length of the alignments according to
the following:
\begin{align}
  \costSPN{\gamma}{1}[A] &= \left\{ \begin{array}{ll}
    0\,, & \mbox{if $\abs{A} = 0$\,,} \\
    \costSP{\gamma}[A]/\abs{A}\,, & \mbox{otherwise}\,, \\
    \end{array} \right. \label{criterion:V1} \\
  \costSPN{\gamma}{2}[A] &= \textstyle \sum_{h=1}^{k-1} \sum_{i=h+1}^{k}
  \costN{\gamma}[A_{\{h, i\}}]\,, \label{criterion:V2} \\
  \costSPN{\gamma}{3}[A] &= \left\{ \begin{array}{ll}
    0\,, & \mbox{if $\abs{A} = 0$\,,} \\
    \costSP{\gamma}[A] \big/ \big(\sum_{h=1}^{k-1} \sum_{i=h+1}^{k}
    |A_{\{h, i\}}| \big)\,, \big. & \mbox{otherwise}\,.
  \end{array} \right. \label{criterion:V3} 
\end{align}

We define $\DistanceNSP{\gamma}{z}$ as the optimal function for the
criterion $\costSPN{\gamma}{z}$, i.e., for a given $k$-sequence $S$
\[
\DistanceNSP{\gamma}{z} (S) = \min_{A \in \mathcal{A}_S} \costSPN{\gamma}{z}[A]. 
\]
An alignment $A$ of $S$ such that
$\costSPN{\gamma}{z}[A] = \DistanceNSP{\gamma}{z}(S)$ is called
\emph{$\costSPN{\gamma}{z}$-optimal alignment}. Moreover, regardless
its decision or optimization version, we establish the \emph{criterion
  $\costSPN{\gamma}{z}$ for the normalized multiple sequence alignment
  problem} ($\NMSA\text{-}z$), for $z = 1, 2, 3$. Formally,
for a fixed scoring matrix $\gamma$ and $z \in \{1, 2, 3\}$,

\begin{prob}[Normalized multiple sequence alignment with score $\costSPN{\gamma}{z}$] \label{pro:NMSA}
  Given a $k$-sequence $S$, find $\DistanceNSP{\gamma}{z}(S)$.
\end{prob}

An interesting question is whether the definitions above represent the same criterion. i.e., would it be possible that the optimal alignment for a given criterion $\costSPN{\gamma}{z}$ also represents an optimal alignment for another criterion $\costSPN{\gamma}{z'}$, $z \not= z'$, regardless of the sequences and scoring matrices? The answer is no and Figure~\ref{fig:fig3} shows examples that support this claim.

\begin{figure}
  \[
  \gamma = 
    \begin{array}{c|cccc}
      & \A & \B & \C & \gap\\
      \hline
      \A & 0 & 9 & 9 & 10\\ 
      \B & 9 & 0 & 9 & 10\\ 
      \C & 9 & 9 & 0 & 10\\ 
      \gap & 10 & 10 & 10 & 0\\ 
      \end{array}
  \hspace{1cm}
    A = \left[
  \begin{array}{c}
      \A \\ \B \\ \C
  \end{array}
    \right]
    \hspace{1cm}
    B = \left[
  \begin{array}{cc}
      \A & \gap \\ \B & \gap \\ \gap & \C
  \end{array}
  \right]
    \hspace{1cm}
    C = \left[
  \begin{array}{ccc}
      \A & \gap & \gap \\ \gap & \B & \gap \\ \gap & \gap & \C
  \end{array}
    \right]
    \]
    \vspace{1cm}
    \[
  \delta = 
    \begin{array}{c|cccc}
      & \A & \B & \C & \gap\\
      \hline
      \A & 0 & 7 & 7 & 9\\ 
      \B & 7 & 0 & 7 & 9\\ 
      \C & 7 & 7 & 0 & 9\\ 
      \gap & 9 & 9 & 9 & 0\\ 
      \end{array}
  \hspace{1cm}
    D = \left[
  \begin{array}{ccc}
    \A & \B & \C \\
    \A & \C & \B \\
    \C & \B & \A
  \end{array}
    \right]
    \hspace{1cm}
    E = \left[
  \begin{array}{cccc}
    \A & \B & \C & \gap\\
    \A & \gap & \C & \B \\
    \C & \B & \A & \gap
  \end{array}
  \right]
    \hspace{1cm}
    F = \left[
  \begin{array}{ccccc}
    \A & \B & \C & \gap & \gap\\
    \A & \gap & \C & \B & \gap\\
    \gap & \gap & \C & \B & \A 
  \end{array}
    \right]
    \]
    \caption{Observe first that $A$, $B$ or $C$ is an $\costSPN{\gamma}{z}$-optimal alignment of 3-sequence $\A, \B, \C$ for each $z$. Besides, since $\costSPN{\gamma}{1}[A] = 27.0, \costSPN{\gamma}{2}[A] = 27.0, \costSPN{\gamma}{3}[A] = 9.0; \costSPN{\gamma}{1}[B] = 24.5, \costSPN{\gamma}{2}[B] = 29.0, \costSPN{\gamma}{3}[B] = 9.8; \costSPN{\gamma}{1}[C] = 20.0, \costSPN{\gamma}{2}[C] = 30.0, \costSPN{\gamma}{3}[C] = 10.0$, we have
that $C$ is an optimal alignment for
      criterion $\costSPN{\gamma}{1}$ but it is not for criteria $\costSPN{\gamma}{2}$ or
      $\costSPN{\gamma}{3}$,
      which implies that an optimal alignment for criterion $\costSPN{\gamma}{1}$
      is different when we compare it to criteria  $\costSPN{\gamma}{2}$ and
      $\costSPN{\gamma}{3}$.
      Now, observe that $D$, $E$ or $F$ is an $\costSPN{\delta}{z}$-optimal alignment of 3-sequence $\A\B\C, \A\C\B, \C\B\A$ for each $z$. Besides, since $\costSPN{\delta}{2}[D] = 16.33, \costSPN{\delta}{3}[D] = 5.44; \costSPN{\delta}{2}[E] = 17.16, \costSPN{\delta}{3}[E] = 5.81; \costSPN{\delta}{2}[F] = 16.20, \costSPN{\delta}{3}[F] = 5.53$, we have that $D$ is an optimal alignment for
      criterion $\costSPN{\delta}{3}$ but it is not for criterion $\costSPN{\delta}{2}$
      which implies that an optimal alignment for criterion $\costSPN{\delta}{2}$
      is different when we compare it to criterion $\costSPN{\delta}{3}$.
    }\label{fig:fig3}
  \end{figure}

\section{Complexity}\label{sec:complexity}

We study now the complexity of the multiple sequence alignment problem
for each new criterion defined in Section~\ref{sec:pre}. We consider
the decision version of the computational problems and we prove
$\NMSA\text{-}z$ is NP-complete for each $z$ even though the following
additional
restrictions for the scoring matrix $\gamma$ hold:
$\pont{\gamma}{\A}{\B} = \pont{\gamma}{\B}{\A}$ and
$\pont{\gamma}{\A}{\B} = 0$ if and only if $\A = \B$ for each pair
$\A, \B \in \Sigma_\gap$. Elias~\cite{Elias2006} shows that, even
considering such restrictions, $\MSA$ is NP-complete.
We start showing a
polynomial time reduction from $\MSA$ to $\NMSA\text{-}z$.

For an instance $(S, C)$ of $\MSA$ ($\NMSA\text{-}z$),
where $S = s_1, \ldots, s_k$ is a $k$-sequence
and $C$ is an integer, $\MSA(S,C)$ ($\NMSA\text{-}z(S,C)$) is the decision problem version asking whether there exists an alignment $A$ 
of $S$ such that $\costSP{\gamma}[A] \le C$ ($\costSPN{\gamma}{i}[A] \le C$).

Consider a fixed alphabet $\Sigma$ and scoring matrix $\gamma$ with
the restrictions above that are $\pont{\gamma}{\A}{\B} = \pont{\gamma}{\B}{\A}$ and
$\pont{\gamma}{\A}{\B} = 0$. We also assume each entry of
$\gamma$ is integer. Let $\sigma \not\in \Sigma_\gap$ be a new
symbol and $\Sigma^\sigma = \Sigma \cup \{\sigma\}$ and $G$
the maximum number in $\gamma$.
We define a scoring matrix $\gamma^\sigma$ such that
$\pont{\gamma^\sigma\!\!\!}{\A}{\B} = \pont{\gamma}{\A}{\B},
\pont{\gamma^\sigma\!\!\!}{\A}{\sigma} =
\pont{\gamma^\sigma\!\!\!}{\sigma}{\A} = G$ and
$\pont{\gamma^\sigma\!\!\!}{\sigma}{\sigma} = 0$ for each pair $\A,
\B \in \Sigma_\gap$.
Notice that $\pont{\gamma^\sigma}{\A}{\B} = \pont{\gamma^\sigma}{\B}{\A}$ and
$\pont{\gamma^\sigma}{\A}{\B} = 0$ if and only if $\A = \B$ for each pair
$\A, \B \in \Sigma^\sigma_\gap$ and therefore $G \ge 1$.
Also, we denote $S^L =
s_1\sigma^L, \ldots, s_k\sigma^L$, where $L = Nk^2MG$, $M =
\max_i\{\abs{s_i}\}$ and $N = \binom{k}{2} M$.

Let $A$ be an alignment of $S^L$.
A $\sigma$-\emph{column} in $A$
is a column where every symbol is equal to $\sigma$.
The \emph{tail} of $A$ is the alignment $A[j+1: \abs{A}]$
if each its column is $\sigma$-columns but $A(j)$ is not;
in this case the \emph{tail length} of $A$ is $\abs{A} - j$ and
the column $j$ is the \emph{tail base}.
We say that an alignment of $S^L$ is
\emph{canonical} if its tail length is~$L$.
If
$A = [s_1'', \ldots, s_k'']$ is the alignment of $S$, then we denote by
$A^L$ the canonical alignment $[s_1''\sigma^L, \ldots, s_k''\sigma^L]$
of $S^L$, i.e., $A^L[1:\abs{A}] = A$ and $A^L[\abs{A}+1:\abs{A^L}] = [\sigma^L, \sigma^L, \ldots, \sigma^L]$.
Notice that $\abs{A^L} \ge L$.
Then, we establish below a lower bound for $\costSP{\gamma^\sigma}[A^L]$ and $\costSP{\gamma}[A]$.



\begin{prop} \label{prop:vSP-upper}
  Let $A$ be an alignment of a $k$-sequence $S = s_1, s_2, \ldots, s_k$. Then,
  $\costSP{\gamma^\sigma}[A^L] = \costSP{\gamma}[A] \le k^2MG$.
\end{prop}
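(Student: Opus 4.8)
The plan is to establish the two assertions separately: first the identity $\costSP{\gamma^\sigma}[A^L] = \costSP{\gamma}[A]$, and then the bound $\costSP{\gamma}[A] \le k^2 M G$. For the identity, I would unwind the definition of SP-score as a sum over pairs $\{h,i\}$ of the unweighted edit score $\costA{\gamma}[A_{\{h,i\}}]$, and reduce the claim to the pairwise statement $\costA{\gamma^\sigma}[(A^L)_{\{h,i\}}] = \costA{\gamma}[A_{\{h,i\}}]$ for every pair. The canonical alignment $A^L$ is obtained from $A$ by appending $L$ columns each equal to $[\sigma,\ldots,\sigma]$; hence the induced pairwise alignment $(A^L)_{\{h,i\}}$ is just $A_{\{h,i\}}$ with $L$ extra columns $[\sigma,\sigma]$ appended (no all-gap columns are created, so nothing is removed when forming the induced alignment). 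Each such appended column contributes $\pont{\gamma^\sigma}{\sigma}{\sigma} = 0$, and on the original columns $\gamma^\sigma$ agrees with $\gamma$ by construction (the only entries where they could differ involve $\sigma$, which does not occur in $A_{\{h,i\}}$ since $\sigma \notin \Sigma_\gap$). Summing $0$'s over the tail and the matching values over $A$'s columns gives $\costA{\gamma^\sigma}[(A^L)_{\{h,i\}}] = \costA{\gamma}[A_{\{h,i\}}]$; summing over all $\binom{k}{2}$ pairs yields the identity.

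For the upper bound, I would bound each pairwise term $\costA{\gamma}[A_{\{h,i\}}]$ separately. The induced alignment $A_{\{h,i\}}$ has length at most $|A| \le \sum_i |s_i| \le kM$, and each of its columns has score at most $G$, the maximum entry of $\gamma$; therefore $\costA{\gamma}[A_{\{h,i\}}] \le kMG$. Actually a sharper count suffices: $A_{\{h,i\}}$ is an alignment of $s_h, s_i$, so its length is at most $|s_h| + |s_i| \le 2M$, giving $\costA{\gamma}[A_{\{h,i\}}] \le 2MG$; summing over the $\binom{k}{2} = \frac{k(k-1)}{2} \le \frac{k^2}{2}$ pairs gives $\costSP{\gamma}[A] \le \binom{k}{2}\cdot 2MG = k(k-1)MG \le k^2 MG$. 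Either counting route closes the bound; I would use whichever the surrounding text's later estimates prefer to match.

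The only point requiring a little care — and the place I'd be most careful about — is the bookkeeping in forming the induced alignment $(A^L)_{\{h,i\}}$: one must check that appending the $\sigma$-tail to $A$ does not interact with the "remove all-gap columns" step in the definition of induced alignments. Since the tail columns of $A^L$ are all-$\sigma$ (never all-gap) and the first $|A|$ columns of $A^L$ are literally the columns of $A$, restricting to rows $h,i$ and deleting all-gap columns acts on the first $|A|$ columns exactly as it does in forming $A_{\{h,i\}}$, and leaves all $L$ tail columns untouched. So the induced alignment is precisely $A_{\{h,i\}}$ followed by $L$ copies of $[\sigma,\sigma]$, and the rest is the routine summation above. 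No genuine obstacle is expected; this is essentially a definitional verification plus an elementary counting bound.
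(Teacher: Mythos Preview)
Your proposal is correct and follows essentially the same route as the paper: establish the identity pairwise via the observation that appending $[\sigma,\sigma]$-columns contributes zero (since $\pont{\gamma^\sigma}{\sigma}{\sigma}=0$ and $\gamma^\sigma$ agrees with $\gamma$ on $\Sigma_\gap$), then bound each induced pairwise alignment by $2MG$ using $|A_{\{h,i\}}|\le |s_h|+|s_i|\le 2M$ and sum over $\binom{k}{2}\le k^2/2$ pairs. Your extra care about the ``remove all-gap columns'' step in forming $(A^L)_{\{h,i\}}$ is a valid point the paper glosses over, but otherwise the arguments coincide.
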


\begin{proof}
  Suppose that $A = [s_1', \ldots, s_k']$ and then $A^L = [s_1'\sigma^L, \ldots, s_k'\sigma^L]$.
  Because $M = \max_i \abs{s_i}$, each alignment that is induced by two
  sequences in $A$ has at most $2M$ columns. Moreover, each entry in
  $\gamma$ is at most $G$. It follows that
  $\costA{\gamma}[s_h', s_i'] = \costA{\gamma}[s_h'\sigma^L,
  s_i'\sigma^L] \le 2MG$ for each pair $h, i$ and then
  \[
  \costSP{\gamma^\sigma}[A^L] = \costSP{\gamma}[A] =
  \sum_{h=1}^{k-1} \sum_{i=h+1}^k \costA{\gamma}[A_{\{h,i\}}] \le
  \sum_{h=1}^{k-1} \sum_{i=h+1}^k 2MG = \textstyle\binom{k}{2}2MG \le
  k^2MG\,.
  \]
\end{proof}

The two following result are useful to prove
Theorem~\ref{theo:NMSAz-NP-complete}, which is the main result of this
section.
  Let $C^1 := C^2 := C/L, C^3 := C/\big(\binom{k}{2}L\big)$ and $L :=
  Nk^2MG$.

\begin{lem} \label{prop:AVSNz-MSA}
  If $C \ge k^2MG$, then $\MSA(S, C) = \NMSA\text{-}z(S^L,
  C^z) = \textsf{Yes}$, for each $z = 1, 2, 3$.
\end{lem}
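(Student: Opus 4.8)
The plan is to prove both directions of the claimed equivalence by exploiting the fact that, with $C \ge k^2MG$, the bound $C^z$ on the normalized score is so generous that it is automatically satisfied by any canonical alignment, while conversely $\MSA(S,C)$ is trivially true because the unnormalized optimum is at most $k^2MG \le C$ by Proposition~\ref{prop:vSP-upper}. So first I would observe that $\MSA(S,C) = \textsf{Yes}$: take any alignment $A$ of $S$; by Proposition~\ref{prop:vSP-upper} we have $\costSP{\gamma}[A] \le k^2MG \le C$, so $A$ witnesses $\MSA(S,C)=\textsf{Yes}$.

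Next I would show $\NMSA\text{-}z(S^L, C^z) = \textsf{Yes}$ for each $z$ by exhibiting the canonical alignment $A^L$ as a witness. The key numerical fact is the lower bound on the relevant length quantity: for $z=1$ we have $\abs{A^L} \ge L$ (stated right before the proposition), for $z=3$ we need $\sum_{h<i} |A^L_{\{h,i\}}| \ge \binom{k}{2} L$ (each induced pairwise alignment contains the $L$ trailing $\sigma$-columns, hence has length at least $L$), and for $z=2$ we need $\sum_{h<i} \costN{\gamma}[A^L_{\{h,i\}}] \le C/L$, which follows because each term is $\costA{\gamma}[A_{\{h,i\}}]/\abs{A^L_{\{h,i\}}} \le 2MG/L$, so the sum is at most $\binom{k}{2} 2MG / L \le k^2MG/L \le C/L = C^2$. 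For $z=1$: $\costSPN{\gamma^\sigma}{1}[A^L] = \costSP{\gamma^\sigma}[A^L]/\abs{A^L} = \costSP{\gamma}[A]/\abs{A^L} \le k^2MG / L \le C/L = C^1$, using Proposition~\ref{prop:vSP-upper} and $\abs{A^L}\ge L$. For $z=3$: similarly $\costSPN{\gamma^\sigma}{3}[A^L] = \costSP{\gamma}[A] / \big(\sum_{h<i}|A^L_{\{h,i\}}|\big) \le k^2MG / \big(\binom{k}{2} L\big) \le C/\big(\binom{k}{2}L\big) = C^3$.

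The only mild subtlety — and the step I would be most careful about — is verifying the length lower bounds for the pairwise induced alignments $A^L_{\{h,i\}}$, since forming $A^L_{\{h,i\}}$ from $A^L$ involves deleting all-gap columns, and one must check that the $L$ trailing $\sigma$-columns of $A^L$ are never all-gap in the induced pair (indeed every entry there is $\sigma \ne \gap$, so none is deleted), giving $\abs{A^L_{\{h,i\}}} \ge L$ as needed; combined with $\costA{\gamma^\sigma}[A^L_{\{h,i\}}] = \costA{\gamma}[A_{\{h,i\}}]$ (the $\sigma$ against $\sigma$ columns contribute $0$), all the displayed inequalities go through. Since in each case we have produced an alignment of $S^L$ whose $\costSPN{\gamma^\sigma}{z}$-score is at most $C^z$, we conclude $\NMSA\text{-}z(S^L, C^z) = \textsf{Yes}$, completing the proof.
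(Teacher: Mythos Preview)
Your proposal is correct and follows essentially the same approach as the paper: pick any alignment $A$ of $S$, invoke Proposition~\ref{prop:vSP-upper} to get $\costSP{\gamma}[A]=\costSP{\gamma^\sigma}[A^L]\le k^2MG\le C$, and then use the length lower bounds $|A^L|\ge L$ and $|A^L_{\{h,i\}}|\ge L$ to push this through to each $\costSPN{\gamma^\sigma}{z}[A^L]\le C^z$. The only cosmetic differences are that for $z=2$ you bound each pairwise term by $2MG/L$ and sum, whereas the paper pulls the common denominator $L$ out of the sum and bounds the numerator by $\costSP{\gamma^\sigma}[A^L]\le C$ directly; and you are (rightly) more explicit than the paper about why the trailing $\sigma$-columns survive in the induced pairwise alignments.
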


\begin{proof}
  Suppose that $C \ge k^2MG$.
  Let $A$ be an alignment of $S$.
  From Proposition~\ref{prop:vSP-upper}, $\costSP{\gamma}[A]
  = \costSP{\gamma^\sigma}[A^L] \le k^2MG \le C$ which implies
  that $\MSA(S, C) =
  \textsf{Yes}$. Since $\costSP{\gamma^\sigma}[A^L] \le C$, we have
  that $\costSPN{\gamma^\sigma}{1}[A^L] =
  \costSP{\gamma^\sigma}[A^L]/L \le C/L = C^1$, and then $\NMSA\text{-}1(S^L,
  C^1) = \textsf{Yes}$. Since $\costSP{\gamma^\sigma}[A^L] \le C$
  and $|A_{\{h, i\}}^L| \ge L$, we have
  \[
  \costSPN{\gamma^\sigma}{2}[A^L]
  = \sum_{h = 1}^{k - 1} \sum_{i = k +
    1}^k \frac{\costA{\gamma^\sigma}[A_{\{h, i\}}^L]}{\big|A_{\{h,
      i\}}^L\big|} \le \frac{\displaystyle
    \sum_{h = 1}^{k - 1}
    \sum_{i = h + 1}^k \costA{\gamma^\sigma}[A_{\{h, i\}}^L]}{L} =
  \frac{\costSP{\gamma^\sigma}[A^L]}{L} \le
  \frac{C}{L} = C^2\,,
  \]
  and thus $\NMSA\text{-}2(S^L, C^2) = \textsf{Yes}$. Again, since
  $\costSP{\gamma^\sigma}[A^L] \le C$ and $|A_{\{h, i\}}^L| \ge L$,
  we have
  \[
  \costSPN{\gamma^\sigma}{3}[A^L] =
  \frac{\costSP{\gamma^\sigma}[A^L]}{\displaystyle\sum_{h = 1}^{k -
      1} \sum_{i = k + 1}^k \big|A_{\{h, i\}}^L\big|} \le
  \frac{\costSP{\gamma^\sigma}[A^L]}{\displaystyle\sum_{h = 1}^{k -
      1} \sum_{i = k + 1}^k L} =
  \frac{\costSP{\gamma^\sigma}[A^L]}{\binom{k}{2}L} \le
  \frac{C}{\binom{k}{2}L} = C^3\,,
  \]
  and then $\NMSA\text{-}3(S^L, C^3) = \textsf{Yes}$.

  Therefore, if $C \ge k^2MG$, then $\MSA(S, C) = \NMSA\text{-}z(S^L, C^z) = \textsf{Yes}$ for each $z$.
\end{proof}

\begin{lem} \label{prop:can-opt-align}
  There exists a canonical alignment of $S^L$ which is
  $\costSPN{\gamma^\sigma}{z}$-optimal for each $z$.
\end{lem}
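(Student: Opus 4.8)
The plan is, for each fixed $z$, to take an arbitrary $\costSPN{\gamma^\sigma}{z}$-optimal alignment $A$ of $S^L$ and produce from it a canonical alignment that is no worse. First I would form an alignment $A_0$ of $S$ out of $A$ by replacing every $\sigma$ with a space and then deleting the columns that have become all-gap; the symbols of each $s_i$ keep their order and no all-gap column remains, so $A_0 \in \mathcal{A}_S$, and I would set $B := A_0^L$, which is canonical by the definition of $(\cdot)^L$. Everything then reduces to the inequality $\costSPN{\gamma^\sigma}{z}[B] \le \costSPN{\gamma^\sigma}{z}[A]$: since $A$ is optimal this forces equality, exhibiting a canonical $\costSPN{\gamma^\sigma}{z}$-optimal alignment.

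The mechanism rests on two monotonicity ingredients. (1) Turning a $\sigma$ into a space never increases the cost of an aligned pair, since $\pont{\gamma^\sigma}{\sigma}{\sigma} = 0 = \pont{\gamma^\sigma}{\gap}{\gap}$, $\pont{\gamma^\sigma}{\sigma}{\gap} = G \ge 0$, and $\pont{\gamma^\sigma}{\sigma}{x} = G \ge \pont{\gamma}{\gap}{x}$ for $x \in \Sigma$ (here $G$ is the largest entry of $\gamma$); hence $\costA{\gamma}[(A_0)_{\{h,i\}}] \le \costA{\gamma^\sigma}[A_{\{h,i\}}]$ for every pair, $\costSP{\gamma}[A_0] \le \costSP{\gamma^\sigma}[A]$, and $\costSP{\gamma^\sigma}[B] = \costSP{\gamma}[A_0]$ by Proposition~\ref{prop:vSP-upper}; moreover each $[\sigma,\gap]$ pair in a column of $A$ contributes a cost drop of exactly $G$ in this passage. (2) Every entry of $\gamma^\sigma$ is at most $G$, so $\costA{\gamma^\sigma}[A_{\{h,i\}}] \le G\,|A_{\{h,i\}}|$ and $\costSP{\gamma^\sigma}[A] \le G\sum_{h<i}|A_{\{h,i\}}|$. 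I would also record, for $z=1$, that evaluating $\costSPN{\gamma^\sigma}{1}$ at $D_0^L$ for a $\costSP{\gamma}$-optimal alignment $D_0$ of $S$, together with Proposition~\ref{prop:vSP-upper} (which gives $\DistanceSP{\gamma}(S) \le k^2MG$) and optimality of $A$, yields $\costSP{\gamma^\sigma}[A] \le |A|\cdot\DistanceSP{\gamma}(S)/L \le |A|\,k^2MG/L = |A|/N$.

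What then remains is to compare numerators and normalizers, and here the choice $L = Nk^2MG$ is essential. For $z=1$: writing $p$ for the number of pure-$\sigma$ columns of $A$ and $q'$ for the number of its columns made only of $\sigma$'s and spaces but containing at least one space, one has $|B| = |A| + L - p - q'$, and each of those $q'$ columns drops cost at least $(k-1)G$, so $\costSP{\gamma}[A_0] \le \costSP{\gamma^\sigma}[A] - q'(k-1)G$; clearing denominators reduces the target inequality to $\costSP{\gamma^\sigma}[A]\,(q' - (L-p)) \le q'(k-1)G\,|A|$, which is immediate when $q' \le L-p$ and otherwise follows from $\costSP{\gamma^\sigma}[A] \le |A|/N \le (k-1)G\,|A|$. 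For $z=3$: letting $a_c,g_c$ be the numbers of $\sigma$'s and spaces in column $c$ of $A$, a counting argument gives $\sum_{h<i}\bigl(|A_{\{h,i\}}| - |(A_0)_{\{h,i\}}|\bigr) = \sum_c\bigl(\binom{a_c}{2}+a_cg_c\bigr)$; since $\binom{a}{2}\le\tfrac{k-1}{2}a$ and $\sum_c a_c = kL$, the first part is at most $\binom{k}{2}L$, so the excess over $\binom{k}{2}L$ is at most $\sum_c a_cg_c$, which is absorbed using $\costSP{\gamma^\sigma}[A]-\costSP{\gamma}[A_0] \ge G\sum_c a_cg_c$ and $\costSP{\gamma^\sigma}[A]\le G\sum_{h<i}|A_{\{h,i\}}|$. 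For $z=2$: I would argue pair by pair — with $E_{hi}$ the number of columns of $A_{\{h,i\}}$ carrying a $[\sigma,\gap]$ pair, a count yields $|A_{\{h,i\}}|-|(A_0)_{\{h,i\}}| \le L + \tfrac12E_{hi}$ while that pair's cost drops by at least $G\,E_{hi}$, and $\costA{\gamma^\sigma}[A_{\{h,i\}}]\le G|A_{\{h,i\}}|$ then gives $\costN{\gamma^\sigma}[(A_0^L)_{\{h,i\}}]\le\costN{\gamma^\sigma}[A_{\{h,i\}}]$, whence the claim by summing over pairs.

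The hard part is precisely this last comparison: a priori $B$ may have strictly fewer columns (and strictly shorter induced pairwise alignments) than $A$, so one cannot simply invoke ``smaller cost over larger length''; the length deficit must be charged to $\sigma$'s that are necessarily misaligned in $A$, each of which costs at least $G$, and it is the hugeness of $L=Nk^2MG$ — which makes $\DistanceNSP{\gamma^\sigma}{z}(S^L)$, hence $\costSP{\gamma^\sigma}[A]$ relative to $G|A|$, small — that closes the three estimates above. I would also dispatch the degenerate cases ($S=\emptyset$, or $L=0$) separately, where $S^L=S$ and the trivial alignment is already canonical.
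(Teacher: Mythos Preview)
Your argument is correct and takes a genuinely different route from the paper. The paper argues by contradiction via a single local swap: assuming no canonical optimal alignment exists, it picks an optimal $A$ with maximal tail length (and, secondarily, maximal number of $\sigma$'s in the tail base), locates columns $p<q$ with $s_i'(p)=\sigma$, $s_i'(q)=\gap$ for some row $i$, swaps those two entries to obtain $A'$, and then classifies each induced pair $A_{\{h,i\}}$ into three types to verify $\costSPN{\gamma^\sigma}{z}[A']\le\costSPN{\gamma^\sigma}{z}[A]$, contradicting the extremal choice of $A$. You instead perform one global transformation---strip all $\sigma$'s at once, reappend the canonical tail---and control the resulting length deficits by counting misaligned $\sigma$'s and charging each against the cost $G$ it contributed in $A$. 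A pleasant byproduct of your approach is that for $z=2$ and $z=3$ you never invoke optimality of $A$: the inequalities $\costA{\gamma^\sigma}[A_{\{h,i\}}]\le G\,|A_{\{h,i\}}|$ and the counting bounds hold for \emph{every} alignment of $S^L$, so you actually prove that $A_0^L$ is no worse than $A$ unconditionally, which is stronger than the lemma. The paper's swap argument is lighter per step but needs the extremal-choice scaffolding and an induction in spirit; yours front-loads the combinatorics but finishes in one move.
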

\begin{proof}
  Suppose by contradiction that any canonical alignment of $S^L$ is
  not $\costSPN{\gamma^\sigma}{z}$-optimal. Let $A = [s_1', \ldots,
    s_k']$ be a $\costSPN{\gamma^\sigma}{z}$-optimal alignment of
  $S^L$ with maximum tail length and maximum number of $\sigma$ in the
  tail base. Note that, by hypothesis, $A$ is not canonical.

  Let $q$ be the index of the tail base of $A$. Since $A$ is not
  canonical, the column $q$ contains only symbols $\gap$ and
  $\sigma$. Let $p$ be the greatest index such that $p < q$ and there
  exists an integer $i$ where $s_i'(p) = \sigma$ and $s_i'(q) =
  \gap$. Let $A' = [s_1'', \ldots, s_k'']$ be an alignment of $S^L$
  such that $A'$ is almost the same as $A$, except for columns $p$ and
  $q$, that are defined as following. For each $h$, we have
  \[
  s_h'' = \left\{ \begin{array}{ll}
    s_h'\,, & \text{if $s_h'(p) \neq \sigma$~\text{or}~$s_h'(q) \neq
      \gap$}\,, \\
    s_h'(1 \!:\! p - 1) \cdot \gap \cdot s_h'(p+1 \!:\! q - 1) \cdot
    \sigma \cdot s_h'(q + 1 \!:\! \abs{s_h'})\,, & \text{otherwise}\,.
  \end{array} \right.
  \]

  Observe that either the tail length of $A'$ is greater than the tail
  length of $A$ or the tail lengths of $A$ and $A'$ are the same but
  the number of $\sigma$ in the tail base of $A'$ is greater than this
  number in $A$. Thus, by the choice of $A$, the alignment $A'$ is not
  $\costSPN{\gamma^\sigma}{z}$-optimal.

  Let $h, i$ be integers. We classify the induced alignment $A_{\{h,
    i\}}$ of $A$ as follows:
  \begin{itemize}
  \item \textit{Type 1}: if $\costA{\gamma^\sigma}[A_{\{h, i\}}] =
    \costA{\gamma^\sigma}[A_{\{h, i\}}']$ and $|A_{\{h, i\}}| =
    |A_{\{h, i\}}'|$\,;
  \item \textit{Type 2}: if $\costA{\gamma^\sigma}[A_{\{h, i\}}] \neq
    \costA{\gamma^\sigma}[A_{\{h, i\}}']$ and $|A_{\{h, i\}}| =
    |A_{\{h, i\}}'|$. In this case, the only possibility is that in
    one of the sequences, say $s_ h'$, is such that $s_h'(p) = \sigma$ and
    $s_h'(q) = \gap$, and the other, $s_i'$, is such that $s_i'(p) = x$
    and $s_i'(q) = \sigma$, where $x \in \Sigma$. By hypothesis, $G
    \ge \pont{\gamma}{x}{\gap} = 
    \pont{\gamma}{\gap}{x}$. Then,
    \[
    \costA{\gamma^\sigma}[A_{\{h, i\}}'] =
    \costA{\gamma^\sigma}[A_{\{h, i\}}] - G +
    \pont{\gamma}{x}{\gap} (= \pont{\gamma}{\gap}{x})
    \le
    \costA{\gamma^\sigma}[A_{\{h, i\}}]\,.
    \]
    Therefore, $\costA{\gamma^\sigma}[A_{\{h, i\}}'] \le
    \costA{\gamma^\sigma}[A_{\{h, i\}}]$\,;
  \item \textit{Type 3}: if $|A_{\{h, i\}}| \neq |A_{\{h, i\}}'|$. In
    this case, the only possibility is that in one of the sequences,
    say $s_h'$, is such that $s_h'(p) = \sigma$ and $s_h'(q) = \gap$, and
    the other, $s_i'$, is such that $s_i'(p) = \gap$ and $s_i'(q) =
    \sigma$. It follows that
    \[
    \costA{\gamma^\sigma}[A_{\{h, i\}}'] =
    \costA{\gamma^\sigma}[A_{\{h, i\}}] - 2G \qquad \text{and} \qquad
    |A_{\{h, i\}}| = |A_{\{h, i\}}'| + 1\,.
    \]
  \end{itemize}

  We consider now the case $z = 1$.
  Suppose that $\abs{A'} = \abs{A}$.
  Analyzing the types above,   
  $\costA{\gamma^\sigma}[A_{\{h, i\}}'] \le
  \costA{\gamma^\sigma}[A_{\{h, i\}}]$, which implies that 
  $\costSP{\gamma^\sigma}[A'] \le
  \costSP{\gamma^\sigma}[A]$ and
  $
  \costSPN{\gamma^\sigma}{1}(A') =
  \costSP{\gamma^\sigma}[A']/\abs{A'} \le
  \costSP{\gamma^\sigma}[A]/\abs{A} =
  \costSPN{\gamma^\sigma}{1}[A]$,
  which contradicts $A'$ is not
  $\costSPN{\gamma^\sigma}{1}$-optimal.
  Then, we assume $\abs{A'} \neq \abs{A}$ which implies that
  $\abs{A'} = \abs{A} - 1$ and at least one alignment
  $A_{\{h, i\}}$ is of type 3, meaning that
  $\costA{\gamma^\sigma}[A_{\{h, i\}}'] =
  \costA{\gamma^\sigma}[A_{\{h, i\}}] - 2G$. It follows that
  $\costSP{\gamma^\sigma}[A'] \le \costSP{\gamma^\sigma}[A] -
  2G$. Then,
  \begin{equation} \label{equ:V1}
    \costSPN{\gamma^\sigma}{1}[A'] =
    \frac{\costSP{\gamma^\sigma}[A']}{\abs{A'}} \le
    \frac{\costSP{\gamma^\sigma}[A] - 2G}{\abs{A}-1}\,.
  \end{equation}
  Let $B$ be a canonical alignment. By
  Proposition~\ref{prop:vSP-upper}, we have that
  $\costSP{\gamma^\sigma}[B] \le k^2MG$. By the choice of $A$, we have
  $\costSPN{\gamma^\sigma}{1}[A] \le
  \costSPN{\gamma^\sigma}{1}[B]$. Since $G \ge 1$ and $\abs{B} \ge L =
  Nk^2MG$, then
  \[
  \costSPN{\gamma^\sigma}{1}[A] \le \costSPN{\gamma^\sigma}{1}[B]
  = \frac{\costSP{\gamma^\sigma}[B]}{\abs{B}} \le \frac{k^2MG}{Nk^2MG} =
  \frac{1}{N} \le G\,.
  \]
  Since $\costSP{\gamma^\sigma}[A]/\abs{A} \le G$, we have
  $
  (\costSP{\gamma^\sigma}[A] - G)/(\abs{A}-1) \le
  \costSP{\gamma^\sigma}[A]/\abs{A}$
  which implies, by equation~\eqref{equ:V1}, $G \ge 1$ and by the definition of
  $\costSPN{\gamma^\sigma}{1}[A]$, that
  \[
  \costSPN{\gamma^\sigma}{1}[A'] \le
  \frac{\costSP{\gamma^\sigma}[A] - 2G}{\abs{A}-1} \le
  \frac{\costSP{\gamma^\sigma}[A] - G}{\abs{A}-1} \le
  \frac{\costSP{\gamma^\sigma}[A]}{\abs{A}} =
  \costSPN{\gamma^\sigma}{1}[A]\,,
  \]
  which contradicts again that $A'$ is not
  $\costSPN{\gamma^\sigma}{1}$-optimal. Thus, there exists a
  canonical alignment of $S^L$ which is
  $\costSPN{\gamma^\sigma}{1}$-optimal.

  Now, we consider the case $z = 2$. If an induced alignment
  $A_{\{h, i\}}$ is of type 1 or 2, then
  $\costA{\gamma^\sigma}[A_{\{h, i\}}'] \le
  \costA{\gamma^\sigma}[A_{\{h, i\}}]$ and $|A_{\{h, i\}}'| = |A_{\{h,
    i\}}|$, which implies that
  \begin{equation} \label{equ:vN-1}
    \costN{\gamma^\sigma}[A_{\{h, i\}}'] =
    \frac{\costA{\gamma^\sigma}[A_{\{h, i\}}']}{|A_{\{h, i\}}'|} \le
    \frac{\costA{\gamma^\sigma}[A_{\{h, i\}}]}{|A_{\{h, i\}}|} =
    \costN{\gamma^\sigma}[A_{\{h, i\}}]\,.
  \end{equation}
  If $A_{\{h, i\}}$ is of type 3, then $\costA{\gamma^\sigma}[A_{\{h,
    i\}}'] = \costA{\gamma^\sigma}[A_{\{h, i\}}] - 2G$ and $|A_{\{h,
    i\}}'| = |A_{\{h, i\}}| - 1$ which implies that
  \begin{equation} \label{equ:vN-2}
    \costN{\gamma^\sigma}[A_{\{h, i\}}'] =
    \frac{\costA{\gamma^\sigma}[A_{\{h, i\}}']}{|A_{\{h, i\}}'|} =
    \frac{\costA{\gamma^\sigma}[A_{\{h, i\}}] - 2G}{|A_{\{h, i\}}| -
      1}
    \le
    \frac{\costA{\gamma^\sigma}[A_{\{h, i\}}] - G}{|A_{\{h, i\}}| -  1}
    \le
    \frac{\costA{\gamma^\sigma}[A_{\{h, i\}}]}{|A_{\{h, i\}}|}
    \le \costN{\gamma^\sigma}[A_{\{h, i\}}]\,,
  \end{equation}
  where the first inequality is a consequence of $G \ge 1$ and the second, since $G$ is the maximum value in $\gamma^\sigma$ and
  therefore $G$ is an upper bound to $\costN{\gamma^\sigma}[A_{\{h, i\}}]$, is a consequence of $\costA{\gamma^\sigma}[A_{\{h, i\}}]/|A_{\{h, i\}}| \le G$.
  As a consequence of equations~\eqref{equ:vN-1} and~\eqref{equ:vN-2} we have that $\costSPN{\gamma^\sigma}{2}[A'] \le 
  \costSPN{\gamma^\sigma}{2}[A]$
  contradicting the assumption that $A'$ is not
  $\costSPN{\gamma^\sigma}{2}$-optimal. Thus, there exists a
  canonical alignment of $S^L$ which is
  $\costSPN{\gamma^\sigma}{2}$-optimal.
 
  Finally, we show the case when $z = 3$. We denote $T_j$ the set of
  all pairs $(h, i)$ such that $A_{\{h, i\}}$ is of type $j$. Recall
  that each induced alignment $A_{\{h, i\}}$ of types 1 and 2 are such
  that $\costA{\gamma^\sigma}[A_{\{h, i\}}'] \le
  \costA{\gamma^\sigma}[A_{\{h, i\}}]$ and $|A_{\{h, i\}}'| = |A_{\{h,
    i\}}|$. Thus, the total contribution of the induced alignments of
  types 1 and 2 to the $\costSPN{\gamma^\sigma}{3}$-score is
  \[
  \sum_{\mathclap{(h, i) \in T_1 \cup T_2}}
  \costA{\gamma^\sigma}[A_{\{h, i\}}'] \, \le \quad
  \sum_{\mathclap{(h, i) \in T_1 \cup T_2}}
  \costA{\gamma^\sigma}[A_{\{h, i\}}] \qquad \text{and} \qquad
  \sum_{\mathclap{(h, i) \in T_1 \cup T_2}} |A_{\{h, i\}}'| \, = \quad
  \sum_{\mathclap{(h, i) \in T_1 \cup T_2}} |A_{\{h, i\}}|\,.
  \]
  And since each alignment $A_{\{h, i\}}$ of type 3 is such that
  $\costA{\gamma^\sigma}[A_{\{h, i\}}'] =
  \costA{\gamma^\sigma}[A_{\{h, i\}}] - 2G$ and $|A_{\{h, i\}}'| =
  |A_{\{h, i\}}| - 1$, we have
  \[
  \sum_{\mathclap{(h, i) \in T_3}} \costA{\gamma^\sigma}[A_{\{h,
    i\}}'] \; = \; \sum_{\mathclap{(h, i) \in T_3}} \big(
  \costA{\gamma^\sigma}[A_{\{h, i\}}] - 2G \big) \qquad \text{and}
  \qquad \sum_{\mathclap{(h, i) \in T_3}} |A_{\{h, i\}}'| \; = \;
  \sum_{\mathclap{(h, i) \in T_3}} \big(|A_{\{h, i\}}| - 1\big)\,.
  \]
  It follows that
  \begin{align*} 
    \costSPN{\gamma^\sigma}{3}[A'] &=
    \frac{\costSP{\gamma^\sigma}[A']}{\abs{A'}} \; = \quad
    \frac{\displaystyle\sum_{\mathclap{(h, i) \in T_1 \cup T_2}}
      \costA{\gamma^\sigma}[A_{\{h, i\}}'] + \sum_{\mathclap{(h, i)
          \in T_3}} \costA{\gamma^\sigma}[A_{\{h,
        i\}}']}{\displaystyle\sum_{\mathclap{(h, i) \in T_1 \cup T_2}}
      |A_{\{h, i\}}'| + \sum_{\mathclap{(h, i) \in T_3}} |A_{\{h, i\}}'|}
    \le \;\quad \frac{\displaystyle\sum_{\mathclap{(h, i) \in T_1 \cup
          T_2}} \costA{\gamma^\sigma}[A_{\{h, i\}}] +
      \sum_{\mathclap{(h, i) \in T_3}}
      \big(\costA{\gamma^\sigma}[A_{\{h, i\}}] -
      2G\big)}{\displaystyle\sum_{\mathclap{(h, i) \in T_1 \cup T_2}}
      |A_{\{h, i\}}| + \sum_{\mathclap{(h, i) \in T_3}} \big(|A_{\{h,
        i\}}| - 1\big)} \\
    &=
    \frac{\costSP{\gamma^\sigma}[A] - 2\abs{T_3}G}{\left(\sum_{h = 1}^{k - 1}
    \sum_{i = h + 1}^k \abs{A_{\{h, i\}}}\right) - \abs{T_3}} \le
    \frac{\costSP{\gamma^\sigma}[A] - \abs{T_3}G}{\left(\sum_{h = 1}^{k - 1}
    \sum_{i = h + 1}^k \abs{A_{\{h, i\}}}\right) - \abs{T_3}} \le
    \frac{\costSP{\gamma^\sigma}[A]}{\sum_{h = 1}^{k - 1}
    \sum_{i = h + 1}^k \abs{A_{\{h, i\}}}} =
    \costSPN{\gamma^\sigma}{3}[A]\,,
  \end{align*}
  where the second inequality is a consequence of  $G \ge 1$, 
  and the last inequality is a consequence of
  $\costSP{\gamma^\sigma}[A]/\abs{A} \le G$ since $G$ is
  the maximum value in $\gamma^\sigma$ and then $G$ is an upper bound
  to $\costSPN{\gamma^\sigma}{3}[A]$. Thus,
  $\costSPN{\gamma^\sigma}{3}[A'] \le
  \costSPN{\gamma^\sigma}{3}[A]$, which contradicts the assumption
  that $A'$ is not $\costSPN{\gamma^\sigma}{3}$-optimal. Therefore,
  there exists a canonical alignment of $S^L$ which is
  $\costSPN{\gamma^\sigma}{3}$-optimal.
\end{proof}

\begin{thm} \label{theo:NMSAz-NP-complete}
  $\NMSA\text{-}z$ is NP-complete for each $z$.
\end{thm}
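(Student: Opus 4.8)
The plan is to combine the three preparatory results above into a standard two-part argument: membership in NP, and a polynomial-time reduction from $\MSA$, which Elias shows is already NP-complete under the restrictions imposed on $\gamma$. For membership, observe that an alignment of a $k$-sequence $T$ has at most $k$ rows and length at most $\sum_i|t_i|$, hence is a certificate of size polynomial in $|T|$, and the rational $\costSPN{\gamma^\sigma}{z}$ of such a certificate can be computed and compared with the threshold in polynomial time. For the reduction I would use precisely the map $(S,C)\mapsto(S^L,C^z)$ already set up, noting that $L=Nk^2MG$ is polynomial in $|S|$ (as $k,M\le|S|$ and $G$ is a constant of the fixed matrix $\gamma$), so that $S^L$ and the rationals $C^z$ have polynomial size, and that $\gamma^\sigma$ still obeys the required restrictions, as already remarked.

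It then remains to show $\MSA(S,C)=\textsf{Yes}$ if and only if $\NMSA\text{-}z(S^L,C^z)=\textsf{Yes}$. For the forward implication, given an alignment $A$ of $S$ with $\costSP{\gamma}[A]\le C$, I would pass to the canonical alignment $A^L$ of $S^L$: by Proposition~\ref{prop:vSP-upper} we have $\costSP{\gamma^\sigma}[A^L]=\costSP{\gamma}[A]\le C$, and since $|A^L|\ge L$ and $|A^L_{\{h,i\}}|\ge L$ for every pair $h,i$, substituting into~\eqref{criterion:V1}--\eqref{criterion:V3} gives $\costSPN{\gamma^\sigma}{z}[A^L]\le C^z$ for each $z$ — exactly the estimate already carried out inside the proof of Lemma~\ref{prop:AVSNz-MSA}.

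For the converse I would split on the magnitude of $C$. If $C\ge k^2MG$, then $\MSA(S,C)=\textsf{Yes}$ outright by Lemma~\ref{prop:AVSNz-MSA} (indeed already by Proposition~\ref{prop:vSP-upper}). If $C<k^2MG$, I would apply Lemma~\ref{prop:can-opt-align} to fix a canonical $\costSPN{\gamma^\sigma}{z}$-optimal alignment $A^L$ of $S^L$; the hypothesis then forces $\costSPN{\gamma^\sigma}{z}[A^L]\le C^z$. Writing $A:=A^L[1\!:\!|A^L|-L]$, which is an alignment of $S$, we have $\costSP{\gamma}[A]=\costSP{\gamma^\sigma}[A^L]$ by Proposition~\ref{prop:vSP-upper}, together with $|A|\le kM$, $|A_{\{h,i\}}|\le 2M$, $|A^L|=|A|+L$ and $|A^L_{\{h,i\}}|=|A_{\{h,i\}}|+L$. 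Unwinding $\costSPN{\gamma^\sigma}{z}[A^L]\le C^z$ for each $z$ then yields an inequality of the form $\costSP{\gamma}[A]\le C+\theta$, where $\theta$ is bounded by a ratio of the shape $C\cdot(\text{length overhead})/L$ with the length overhead $O(kM)$; since $C\le k^2MG-1$ and $L=\binom{k}{2}k^2M^2G$, this forces $\theta<1$ whenever $k\ge 3$, and integrality of $\costSP{\gamma}[A]$ and $C$ then upgrades this to $\costSP{\gamma}[A]\le C$, i.e.\ $\MSA(S,C)=\textsf{Yes}$. (Since $\MSA$ is polynomial for $k\le 2$, the reduction may be assumed to start from instances with $k\ge 3$.)

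The only genuinely delicate point I anticipate is this rounding step in the converse direction: one must check that the single value $L=Nk^2MG$ is large enough that, simultaneously for $z=1,2,3$, the slack $\theta$ introduced by normalizing against the slightly-too-long denominators $|A^L|$, $|A^L_{\{h,i\}}|$ and $\sum_{h<i}|A^L_{\{h,i\}}|$ stays strictly below $1$; everything else is bookkeeping resting on Proposition~\ref{prop:vSP-upper} and Lemmas~\ref{prop:AVSNz-MSA} and~\ref{prop:can-opt-align}.
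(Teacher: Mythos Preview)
Your proposal is correct and follows essentially the same approach as the paper: NP membership via alignments as polynomial certificates, then the reduction $(S,C)\mapsto(S^L,C^z)$ with the forward direction handled by passing to the canonical $A^L$ and the backward direction handled via Lemma~\ref{prop:can-opt-align} plus an integrality rounding to kill the slack $\theta<1$. The only cosmetic differences are that the paper bounds $|A^L|$ by $N+L$ (with $N=\binom{k}{2}M$) rather than your tighter $kM+L$, which makes the cancellation $NC/L=C/(k^2MG)<1$ cleaner, and that the paper places the case split on $C\gtrless k^2MG$ at the outset for both directions; your explicit remark that one may assume $k\ge 3$ is in fact also needed (implicitly) for the paper's bound $|A|\le N$ to hold.
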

\begin{proof}
  Given a $k$-sequence $S$, an alignment $A$ of $S$
  and a integer $C$, it is easy to check in polynomial time on the
  length of $A$ that $\costSPN{\gamma}{z}[A] \le C$ for $z \in \{1,
  2, 3\}$. Then, $\NMSA\text{-}z$ is in NP.

  Now, we prove that $\MSA(S, C) = \textsf{Yes}$
  if and only if $\NMSA\text{-}z(S^L, C^z) = \textsf{Yes}$ for each $z
  \in \{1, 2, 3\}$.
  If $C \ge k^2MG$, from Lemma~\ref{prop:AVSNz-MSA}
  the theorem is proved.
  Thus, we assume $C < k^2MG$.

  Suppose that $\MSA(S,
  C) = \textsf{Yes}$ and, hence, there exists an alignment $A$ such
  that $\costSP{\gamma}[A] \le C$:
  \begin{align*}
    \!\!\!\!\!\!\costSPN{\gamma^\sigma}{1}[A^L] &\!=\! 
    \frac{\costSP{\gamma^\sigma}[A^L]}{|A^L|} \le
    \frac{\costSP{\gamma^\sigma}[A^L]}{L} =
    \frac{\costSP{\gamma}[A]}{L} \le \frac{C}{L} = C^1\,, \\
    \!\!\!\!\!\!\costSPN{\gamma^\sigma}{2}[A^L] &\!=\! \sum_{h = 1}^{k - 1}
    \sum_{i = k + 1}^k \!\!\! \frac{\costA{\gamma^\sigma}[A_{\{h,
          i\}}^L]}{|A_{\{h, i\}}^L|} \!\le\! \sum_{h = 1}^{k - 1} \sum_{i
      = h + 1}^k \!\!\! \frac{\costA{\gamma^\sigma}[A_{\{h, i\}}^L]}{L}
    \!=\! \frac{\costSP{\gamma^\sigma}[A^L]}{L} \!=\!
    \frac{\costSP{\gamma}[A]}{L} \!\le\! \frac{C}{L} \!=\! C^2,
    \\
    \!\!\!\!\!\!\costSPN{\gamma^\sigma}{3}[A^L] &\!=\! 
    \frac{\costSP{\gamma^\sigma}[A^L]}{\displaystyle \sum_{h = 1}^{k - 1} \sum_{i =
        k + 1}^k |A_{\{h, i\}}^L|} \le
    \frac{\costSP{\gamma^\sigma}[A^L]}{\displaystyle \sum_{h = 1}^{k - 1} \sum_{i =
        k + 1}^k L} =
    \frac{\costSP{\gamma^\sigma}[A^L]}{\binom{k}{2}L} =
    \frac{\costSP{\gamma}[A]}{\binom{k}{2}L} \le
    \frac{C}{\binom{k}{2}L}= C^3\,,
  \end{align*}
  where the first inequality in each equation follows from either
  $A^L$ or each alignment induced by $A^L$ has length at least $L$ and
  the second inequality follows from $\costSP{\gamma}[A] \le
  C$. Thus, if $\MSA(S, C) = \textsf{Yes}$ then
  $\NMSA\text{-}z(S^L, C^z) = \textsf{Yes}$.
  
  Conversely, suppose that $\NMSA\text{-}z(S^L, C^z) = \textsf{Yes}$. It follows
  from Lemma~\ref{prop:can-opt-align} that there exists
  a canonical alignment $A^{L}$ of $S^L$ such that
  $\costSPN{\gamma^\sigma}{z}[A^{L}] \le C^z$ for each $z$. Thus, considering
  $\costSPN{\gamma^\sigma}{1}[A^L] \le C^1$, we have
  \begin{align*}
  \costSP{\gamma}[A] &= \costSP{\gamma^\sigma}[A^L] = (N +
  L)\,\frac{\costSP{\gamma^\sigma}[A^L]}{N + L} \le (N +
  L)\,\frac{\costSP{\gamma^\sigma}[A^L]}{|A^L|} = (N + L)\,
  \costSPN{\gamma^\sigma}{1}[A^L] \\
  &\le (N+L)\, C^1 = (N + L)\,\frac{C}{L} = \frac{NC}{L} + C <
  \frac{Nk^2MG}{L} + C = 1 + C\,,
  \end{align*}
  where the first equality holds since $A^L$ is canonical, the first
  inequality holds since $|A^L| \le N + L$ and the second and the
  third inequalities hold by hypothesis. Considering
  $\costSPN{\gamma^\sigma}{2}[A^L] \le C^2$, we have
  \begin{align*}
    \costSP{\gamma}[A] &= \costSP{\gamma^\sigma}[A^L] = (N + L)\,
    \frac{\costSP{\gamma^\sigma}[A^L]}{N + L} = (N + L) \sum_{h =
      1}^{k - 1} \sum_{i = h + 1}^k
    \frac{\costA{\!\gamma^\sigma}[A_{\{h, i\}}^L]}{N + L} \\
    &\le (N + L) \sum_{h = 1}^{k - 1} \sum_{i = h + 1}^k
    \frac{\costA{\!\gamma^\sigma}[A_{\{h, i\}}^L]}{|A_{\{h, i\}}^L|} = (N +
    L)\,\costSPN{\gamma^\sigma}{2}[A^L] \\
    &\le (N+L)\, C^2 = (N + L)\,\frac{C}{L} = \frac{NC}{L} + C < \frac{Nk^2MG}{L}
    + C = 1 + C\,,
  \end{align*}
  where the first equality holds since $A^L$ is canonical, the first
  inequality holds since, for each $h, i$, $|A_{\{h, i\}}^L| \le N +
  L$, and the second and the third inequalities hold by
  hypothesis. And finally, considering
  $\costSPN{\gamma^\sigma}{3}[A^L] \le C^3$, we have
  \begin{align*}
    \costSP{\gamma}[A] &= \costSP{\gamma^\sigma}[A^L] = \big(N +
    {\textstyle \binom{k}{2}}L\big) \,
    \frac{\costSP{\gamma^\sigma}[A^L]}{N + {\textstyle \binom{k}{2}}L} \\
    &\le \big(N + {\textstyle \binom{k}{2}}L\big)\,
    \frac{\costSP{\gamma^\sigma}[A^L]}{\displaystyle \sum_{h = 1}^{k
        - 1} \sum_{i = h + 1}^k \big|A_{\{h, i\}}^L\big|} = \big(N +
    {\textstyle \binom{k}{2}}L\big)\,\costSPN{\gamma^\sigma}{3}[A^L] \le \big(N + {\textstyle \binom{k}{2}}L\big)\, C^3\\
     &= \big(N + {\textstyle \binom{k}{2}}L\big)\,
    \frac{C}{\binom{k}{2}L} = \frac{NC}{\binom{k}{2}L} + C
    < \frac{N k^2 M G}{\binom{k}{2}L} + C = \frac{1}{\binom{k}{2}} +
    C \le 1 + C\,,
  \end{align*}
  where the first equality holds since $A^L$ is canonical, the first
  inequality holds since the sum of lengths of two sequences induced
  by a canonical alignment is at most $N + \binom{k}{2}L$ and the
  second and the third inequalities hold by hypothesis.
  Therefore, if $\NMSA\text{-}z(S^L, C^z) = \textsf{Yes}$ then
  $\costSP{\gamma}[A] < 1 + C$ for any $z \in \{1, 2, 3\}$. Since
  the entries in the scoring matrix are integers, we have that
  $\costSP{\gamma}[A]$ is an integer. And since $C$ is an integer,
  it follows that $\costSP{\gamma}[A] \le C$.
  
\end{proof}

\section{Exact algorithms} \label{sec:exatos}

In the following sections we describe exact dynamic programming algorithms for $\NMSA\text{-}z$,
with $z = 1, 2, 3$.

\subsection{$\NMSA\text{-}1$} \label{MMSA-1}

Let $S = s_{1}, \ldots, s_{k}$ be a $k$-sequence and $A = [s'_{1},
  \ldots, s'_{k}]$ be an alignment of $S$. As defined in
Equation~(\ref{criterion:V1}), $\costSPN{\gamma}{1}[A]$ takes into
account the length of $A$, and the optimal function is given by
$\DistanceV{\gamma}{1}(S) = \min_{A \in \mathcal{A}_S}
\big\{\costSPN{\gamma}{1}[A] \big\}$.
In the optimization version of $\NMSA\text{-}1$,
we are given a $k$-sequence $S$ and we want to compute
$\DistanceV{\gamma}{1}(S)$ for a fixed matrix $\gamma$. We can solve
$\NMSA\text{-}1$ by calculating the minimum SP-score considering every
possible length of an alignment. That is, we compute the entries of a
table $D$ indexed by $\cjtoIndex{\vec{n}} \times \{0, 1, \ldots, N\}$, where
$\vec{n} = [\abs{s_1}, \ldots, \abs{s_k}]$ and $N = \sum_{i = 1}^k |s_i|$. The entry $D(\vet{v}, L)$ stores the score
of an alignment of $S(\vet{v})$ of length $L$ with minimum
SP-score. Notice that $D(\vec{0}, 0) = 0$, $D(\vet{v} \not= \vec{0},
0) = D(\vec{0}, L \not= 0) = \infty$. Therefore, the table entries can
be calculated as:
\[
D(\vet{v}, L)\! = \!
\left\{ \!\! 
\begin{array}{ll}
\!0\,, & \mbox{\!\!\!\!if $\vet{v}\!=\!\vec{0}, L\!=\!0$}\,,\\
\!\infty\,, & \mbox{\!\!\!\!if $\vet{v}\!=\!\vec{0}, L\!\not=\!0$ or $\vet{v}\!\not=\!\vec{0}, L\!=\!0$}\,,\\
\!\min_{\vet{b} \in \mathcal{B}_{k}, \vet{b}\le \vet{v}}
\big\{
D(\vet{v} - \vet{b}, L-1) +  
\costSP{\gamma}[ \vet{b} \cdot S (\vet{v})]
\big\}, & \mbox{\!\!\!\!otherwise}\,.
\end{array}
\right.
\] 

Table $D$ is computed for all possible values of $L = 0 , \ldots,
N$. Consequently, $\DistanceV{\gamma}{1}(S) = \min_{L} \left\{
D(\vet{n}, L)/L\right\}$ is returned. Algorithm~\ref{alg-NMSA-1}
describes this procedure more precisely.

\begin{algorithm}[hbt]\caption{}\label{alg-NMSA-1}
\begin{algorithmic}[1]
\REQUIRE $k$-sequence $S = s_{1}, \ldots, s_{k}$ such that $n_i = \abs{s_i}$ 
\ENSURE $\DistanceV{\gamma}{1}(S)$ 
\STATE $D(\vec{0},0) \gets 0$
\STATE \textbf{for} each $L \not= 0$ \textbf{do} $D(\vec{0},L) \gets \infty$
\STATE \textbf{for} each $\vet{v} \not= \vec{0}$ \textbf{do} $D(\vet{v}, 0) \gets \infty$
\FOR{each $\vec{0} < \vet{v} \le \vet{n}$ in
  lexicographical order}
  \FOR{each $L \gets 1, 2, \ldots, N$}
    \STATE $D(\vet{v}, L) \gets \min_{\vet{b} \in \conjuntoBool{k}, \vet{b} \le \vet{v}} \big\{ D(\vet{v} - \vet{b}, L-1) + \costSP{\gamma}[\vet{b} \cdot S(\vet{v})] \big\}$ \label{alg-variascrit1-linha5}
  \ENDFOR
\ENDFOR
\RETURN{$\min_{L} \big\{D(\vet{n}, L)/L\big\}$} \label{alg-NMSA-1-linha7}
\end{algorithmic}
\end{algorithm}

Suppose that $n_i = \abs{s_i} = n$ for each $i$ which implies that
$\abs{\cjtoIndex{n}} = (n+1)^k$ and $N = nk$.
Notice that the space
to store the matrix $D$ is $\Theta(N \cdot \abs{\cjtoIndex{n}} = kn \cdot (n+1)^k)$.
The time
consumption of Algorithm~\ref{alg-NMSA-1} corresponds to the time
needed to fill the table $D$ up, plus the running time of
line~\ref{alg-NMSA-1-linha7}. Each entry of $D$ can be computed in
$O((2^k - 1)\cdot {k \choose 2}) = O(2^{k}k^{2})$-time. Therefore, the algorithm spends $O(2^{k}k^{2}
\cdot kn(n+1)^{k}) = O(2^{k}k^{3} (n+1)^{k+1})$-time to compute the entire table $D$.
Line~\ref{alg-NMSA-1-linha7} is computed
in $\Theta(N = kn)$-time. Therefore, the running time of
Algorithm~\ref{alg-NMSA-1} is $O(2^{k}k^{3} (n+1)^{k+1}) +
\Theta(N) = O(2^{k}k^{3} (n+1)^{k+1})$.

\subsection{$\NMSA\text{-}2$} \label{MMSA-2}

Let $S = s_{1}, \ldots, s_{k}$ be a $k$-sequence and $A = [s'_{1},
  \ldots, s'_{k}]$ be an alignment of $S$. As defined in
Equation~(\ref{criterion:V2}), $\costSPN{\gamma}{2}[A]$ takes into
account the lengths of the induced alignments in $A$.
In the optimization version of $\NMSA\text{-}2$, we are given a $k$-sequence $S$ and we want to compute
$\DistanceV{\gamma}{2}(S)$ for a fixed scoring matrix $\gamma$.

Let $\vet{L} = [L_{12}, \ldots, L_{hi}, 
  \ldots, L_{(k-1)k}]$ be a ${k \choose 2}$-vector
indexed by the set of pairs of integers $\{h, i\}$ such that $1 \le h < i \le
k$ and $L_{hi}$ denotes the element of $\vet{L}$ of index $\{h, i\}$.
The lengths of the induced alignments can be
represented by a vector $\vet{L}$. Thus, if $A$ is an alignment and
$|A_{\{h, i\}}| = L_{hi}$ for each pair $h, i$, we say that $\vet{L}$
is the \emph{induced length} of $A$. For a $k$-sequence $S = s_{1},
\ldots, s_{k}$, where $n_i = \abs{s_{i}}$ for each $i$, we define
\[
\mathbb{L} = \big\{ \vet{L} = [ {L_{12}, L_{13}, \ldots, L_{1k},
    L_{23}, \ldots L_{2k}, \ldots, L_{(k-1)k}}] : 0 \le L_{hi} \le
n_{h} + n_{i} \big\}.
\]
Therefore, $\mathbb{L}$ contains the induced length of alignment $A$ of $S$ for
all $A \in \mathcal{A}_S$.
Note that if $n$ is the length of each sequence
in $S$, then $\abs{\mathbb{L}} = (2n + 1)^{k \choose 2}$. Let $\vet{b}
= [ b_{1}, \ldots, b_{k} ]$ be a $k$-vector of bits. Overloading the minus
operator ``$-$'', we define $\vet{L} - \vet{b}$ to be a ${k \choose
  2}$-vector $\vet{L'}$
such that
\[
L'_{hi} =
\left\{
\begin{array}{ll}
L_{hi}\,, & \mbox{if $b_{h} = b_{i} = 0$}\,,\\
L_{hi}-1\,, & \mbox{otherwise}\,.
\end{array}
\right.
\]
Observe that if $\vet{L}$ is the induced length of an alignment $A$ of
$S(\vet{v})$ and $\vet{b} $ is a $k$-vector of bits such that $\vet{b}
\cdot S(\vet{v})$ is the last column of $A$, then $\vet{L'} = \vet{L}
- \vet{b}$ is the induced length of the alignment
$A(1\!:\!\abs{A}-1)$.

From a $k \choose 2$-vector $\vec{L}$ and a scoring matrix $\gamma$, we can
define an array of $k \choose 2$ scoring matrices $\vec{\gamma} = \gamma \times \vec{L} =
[\ldots, \gamma^{(hi)}, \ldots ]$ indexed by $\{1, \ldots k\} \times \{1, \ldots, k\}$
such that
\[
\pont{\gamma^{(hi)}}{\A}{\B} = \frac{\pont{\gamma}{\A}{\B}}{L_{hi}}, 
\]
for each $\A, \B \in \Sigma_{\gap}$.
Observe that
\[
\costSPN{\gamma}{2}[A] = \sum_{h=1}^{k-1} \sum_{i=h+1}^{k}
\left( \costN{\gamma}[A_{\{h, i\}}]
(= \frac{\costA{\gamma}[A_{\{h, i\}}]}{\abs{A_{\{h, i\}}}}) \right) =
\costA{\gamma^{(hi)}}[A_{\{h, i\}}] = \costSP{\vec{\gamma}}[A]
\]
where $\vec{\gamma} = \gamma \times [\abs{A_{\{1, 2\}}}, \ldots, \abs{A_{\{k-1, k\}}}]$.
Besides, if $\vec{\mathcal{L}}$ is the induced length 
of a $\costSPN{\gamma}{2}$-optimal alignment of $S$, then
we can compute $\DistanceV{\gamma}{2}$ through the recurrence
\begin{eqnarray*}
D_{\vet{\mathcal{L}}}(\vet{v}, \vet{L}) = 
\left\{
\begin{array}{ll}
0\,, & \mbox{if $\vet{v} = \vec{0}$ and $\vet{L} = \vec{0}$}\,,\\
\infty\,, & \mbox{if $\vet{v} = \vec{0}$ and $\vet{L} \not= \vec{0}$}\,,\\
\infty\,, & \mbox{if $\vet{v} \not= \vec{0}$ and 
$\vet{L} = \vec{0}$}\,,\\
\displaystyle \min_{\vet{b} \in \mathcal{B}_{k}, \vet{b}\le \vet{v}, \vet{b}\le \vet{L}}
\big\{ D_{\vet{\mathcal{L}}}(\vet{v} - \vet{b}, \vet{L} - \vet{b}) +
\costSP{\vet{\gamma}} [\vet{b} \cdot S (\vet{v})] \big\}\,,
& \mbox{otherwise},
\end{array}
\right.
\end{eqnarray*}
where $\vet{b} \le \vet{L}$ is also an overloading, meaning that
$\vet{L} - \vet{b} \ge \vec{0}$, and $\vec{\gamma} = \gamma \times \vec{\mathcal{L}}$.
In this case, $\DistanceV{\gamma}{2}(S) = D_{\vet{\mathcal{L}}}(\vet{v}, \vet{\mathcal{L}})$.

Algorithm~\ref{alg-variascrit2} described below finds $\DistanceV{\gamma}{2}(S)$.  

\begin{algorithm}[hbt]\caption{}\label{alg-variascrit2}
\begin{algorithmic}[1]
\REQUIRE A $k$-sequence $S = s_{1}, \ldots, s_{k}$ such that $n_i = \abs{s_{i}}$ 
\ENSURE $\DistanceV{\gamma}{2}(S)$
\FOR{each $\vet{\mathcal{L}} \in \vet{\mathbb{L}}$}
  \STATE $D_{\vet{\mathcal{L}}}(\vec{0},\vec{0}) \gets 0$
  \STATE \textbf{for} each $\vet{L} \not= \vec{0}$ \textbf{do} $D_{\vet{\mathcal{L}}}(\vec{0},\vet{L}) \gets \infty$
  \STATE \textbf{for} each $\vet{v} \not= \vec{0}$ \textbf{do} $D_{\vet{\mathcal{L}}}(\vet{v},\vec{0}) \gets \infty$
  \STATE $\vet{\gamma} \gets \gamma \times \vec{\mathcal{L}}$
  \FOR{each $\vec{0} < \vet{v} \le \vet{n}$ in lexicographical order}
    \FOR{each $\vet{L} \not= \vec{0}$ in lexicographical order}
      \STATE $D_{\vet{\mathcal{L}}}(\vet{v}, \vet{L}) =
      \min_{\vet{b} \in \conjuntoBool{k}, \vet{b} \le \vet{v}, \vet{b} \le \vet{L}} \big\{
      D_{\vet{\mathcal{L}}}(\vet{v} - \vet{b}, \vet{L} - \vet{b}) +
      \costSP{\vet{\gamma}} [\vet{b} \cdot S(\vet{v})] \big\}$
      \label{alg-variascrit2-linha5}
    \ENDFOR
  \ENDFOR    
\ENDFOR
\RETURN $\min_{\vet{\mathcal{L}} \in \vet{\mathbb{L}}} 
  \{ D_{\vet{\mathcal{L}}} (\vet{n}, \vet{\mathcal{L}}) \}$ \label{alg-variascrit2-linha7}
\end{algorithmic}
\end{algorithm}

For $k$ sequences of length $n$, Algorithm~\ref{alg-variascrit2} needs
$(2n+1)^{k \choose 2}\cdot (n+1)^{k}$ space to store the table
$D_{\vet{\mathcal{L}}}$. For each of the $(2n+1)^{k \choose 2}$ values
$\vet{\mathcal{L}} \in \vet{\mathbb{L}}$, table
$D_{\vet{\mathcal{L}}}$ is recalculated. Since the computation of
each entry takes $O(2^{k}k^{2})$-time, the total time is
\[
O\left( 2^{k}k^{2} \cdot (2n+1)^{k \choose 2} \cdot (2n+1)^{k \choose
  2} (n+1)^{k} = \Big(1 + \frac{1}{2n+1} \Big)^{k} (2n +
1)^{k^{2}}k^{2} \right)\,.
\]
Therefore, if $k \le 2n + 1$, the total time is 
$O\big((2n+1)^{k^{2}}k^{2})\big)$, since
$(1 + 1/(2n + 1))^{k} \le (1 + 1/k)^{k}
\le e \le 2.72$ is constant.

\subsubsection*{Existence of an alignment $A$ for a given $\vet{L}$}

Consider a 3-sequence $S = s_1, s_2, s_3$ with length $\vec{n}=[4, 3, 5]$ and
suppose we are interested in an alignment $A$ of $S$ with induced length equal to $\vec{\mathcal{L}} = [4, 7, 5]$, i.e.,
$\abs{A_{\{1,2\}}} = 4, \abs{A_{\{1,3\}}} = 7, \abs{A_{\{2, 3\}}} = 5$. 
Computing $D_{\vec{\mathcal{L}}}(\vet{v}, \vec{\mathcal{L}})$ as in previous section reveals
that $D_{\vec{\mathcal{L}}}(\vet{v}, \vec{\mathcal{L}}) = \infty$ which means that such alignment doesn't exist. 
In fact, we can check that each symbol in sequence 2 is aligned with one symbol in the first and
one symbol in the third sequence which implies that there are at least 3 symbols of the first sequence
aligned with symbols in the third sequence. Thus, $\abs{A_{\{1,3\}}} \le 6$.
Because, computing $D_{\vec{\mathcal{L}}}(\vet{v}, \vec{\mathcal{L}})$ spends exponential time,
it would be great if we could decide whether there exists an associated alignment with the $\vec{\mathcal{L}}$. 
before computing $D_{\vec{\mathcal{L}}}(\vet{v}, \vec{\mathcal{L}})$. Thus, an interesting problem arises and we define it below.

\medskip

\begin{prob}[Existence of an alignment that is associated with an induced length]  
\label{prob:EAIL}
Given the length $\vet{n}$ of a $k$-sequence and a $k
  \choose 2$-vector $\vet{L}$, decide whether there exists an
  alignment $A$ of $S$ such that $\vet{L}$ is the induced length of
  $A$.
\end{prob}

\medskip

We denote Problem~\ref{prob:EAIL} by {\bf EAIL}.
Considering $\vec{n}$ the length of the $k$ sequence $S = s_1, \ldots, s_k$,
an alternative way to represent $\vec{n}$ is through a matrix of integers $M$ that it is indexed by
$\{1, 2, \ldots, k\}$ and
\[
M(h, i) =
\left\{
\begin{array}{ll}
  \abs{s_h} & \mbox{if $h = i$},\\
  \abs{s_h} + \abs{s_i} - L_{hi} & \mbox{otherwise}
\end{array}
\right.
\]
i.e, $M(h, i)$ is the number of symbols in $s_h$ aligned
with symbols in $s_i$.
Thus, we decide whether exists a collection of
sets $c_{1}, \ldots, c_{k}$ such that
$\abs{c_{h} \cap c_{i}} = M[h, i]$,
where $c_h$ is a set of indices $j$ of some alignment $A$ of $S$ with $A[h, j] \not= \gap$.
%
This different way to see the problem is exactly another that is known as
\emph{Recognizing Intersection Patterns} ({\bf RIP}).
Notice that the instance of {\bf RIP} can be obtained in linear time
and no extra space from the instance of {\bf EAIL}.

Thus, consider an example where
$\vet{n} = [5, 5, 5]$ and $L_{12}= L_{13} = 8$ and $L_{23} =
10$. In this case, {\bf EAIL} returns \textsf{Yes} since
the induced alignments by the following
alignment with $11$ columns
\[
\left[
\begin{array}{ccccccccccc}
s_{1} (1) & s_{1} (2) & s_{1} (3) & s_{1} (4) & s_{1} (5) & 
\gap & \gap & \gap & \gap & \gap & \gap\\ 
s_{2} (1) & s_{2} (2) & \gap & \gap & \gap
& s_{2} (3) & s_{2} (4) & s_{2} (5) & \gap & \gap & \gap \\ 
\gap & \gap & s_{3} (1) & s_{3} (2) & \gap & \gap & \gap 
& \gap & s_{3} (3) & s_{3} (4) & s_{3} (5) 
\end{array}
\right]
\]
respects required restrictions.
On the other hand, in the alternative {\bf RIP} formulation of
this instance, we want to find a collection of three sets for the
matrix
\[
M =
\begin{array}{c|ccc}
& 1 & 2 & 3\\
\hline
1 & 5 & 2 & 2\\
2 & 2 & 5 & 0\\
3 & 2 & 0 & 5
\end{array}
\]
that satisfies the aforementioned property. The answer to {\bf RIP} is
\textsf{Yes}, as we can see in Figure~\ref{RIPyes}.
\begin{figure}[ht]
\begin{center}
\scalebox{.6}{\input{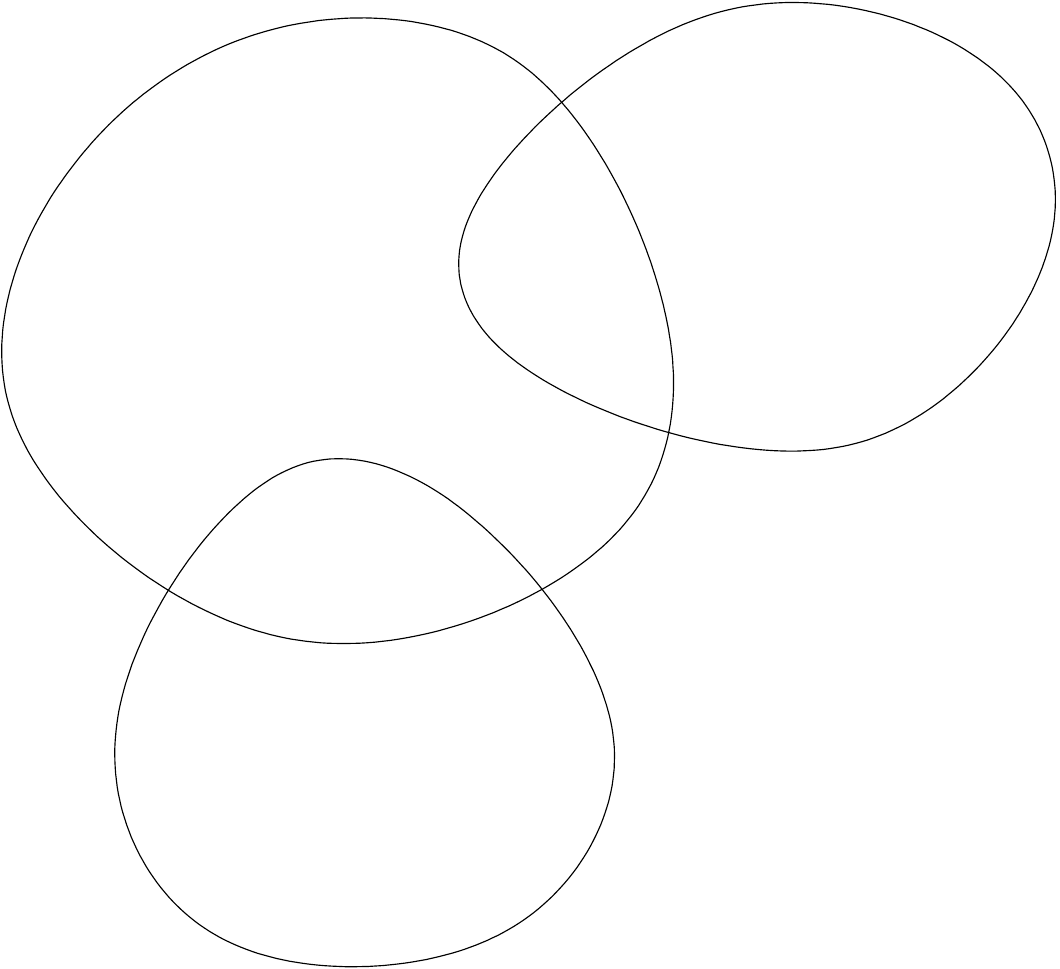_t}}
\end{center}
\caption{An instance of {\bf RIP} for which the answer is \textsf{Yes}.}
\label{RIPyes}
\end{figure}

In another example, suppose that $\vet{n} = [5, 5, 5]$, $L_{12}=
L_{13} = 7$ and $L_{23} = 10$. In this case the answer is
\textsf{No}. To check it, suppose by contradiction that there is an
alignment $A$ of a $3$-tuple $s_{1}, s_{2}, s_{3}$ for this
instance.
Since $L_{23} = 10$, we have that $s_2$ and $s_3$ has no ($5+5-10=0$) aligned symbol.
Since $L_{12} = 7$, then $s_{1}$ must have $5 + 5 - 7 = 3$ symbols 
aligned with $s_2$ that cannot be aligned with any symbol of $s_{3}$
because $s_2$ and $s_3$ has no symbol aligned.
On the other hand, since $L_{13} = 7$, we have that $s_{1}$ and $s_{3}$ must
have $3$ aligned symbols.
Since $s_1$ has 3 symbols not alignment and 3 symbols aligned with $s_3$,
it follows that $s_1$ has at least $6$ symbols, which is a
contradiction.

Chv\'atal~\cite{chvatal1980} showed that for a special class of
matrices $M$ where $M[i, i] = 3$ for every $i$, {\bf RIP} is
NP-complete. Therefore, we have the following result
\begin{thm}
  {\bf EAIL} is NP-complete when $\abs{s} = 3$ for each sequence in $k$-sequence
  and it is NP-hard if its lengths are arbitrary.
  \end{thm}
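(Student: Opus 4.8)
The plan is to settle both assertions through the correspondence with {\bf RIP} (Recognizing Intersection Patterns) already sketched above, using the fact — due to Chv\'atal~\cite{chvatal1980} — that {\bf RIP} is NP-complete already for symmetric integer matrices $M$ with $M[i,i]=3$ for every $i$. So the argument has three parts: membership in NP for the length-$3$ case, NP-hardness of that case by reduction from the restricted {\bf RIP}, and then NP-hardness for arbitrary lengths as an immediate consequence.

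For membership in NP when $\abs{s}=3$ for each sequence, note that an instance is just the integer $k$ together with the $\binom{k}{2}$-vector $\vet{L}$, whose entries satisfy $3\le L_{hi}\le 6$, so the instance has size $\Theta(k^2)$. I would take as certificate an alignment $A=[s_1',\ldots,s_k']$ of $S$ (equivalently the sets $c_h=\{j:s_h'(j)\neq\gap\}$). Since each $\abs{s_h}=3$ we have $\abs{A}\le 3k$ and $\abs{c_h}=3$, so the certificate is of polynomial size, and one checks in polynomial time that $\abs{A_{\{h,i\}}}=L_{hi}$ for every pair $h,i$. Hence {\bf EAIL} restricted to $\abs{s}=3$ lies in NP.

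For NP-hardness of the restricted case, given a {\bf RIP} instance — a symmetric $k\times k$ integer matrix $M$ with $M[i,i]=3$ and $0\le M[h,i]\le 3$ for $h\neq i$ — I would build the {\bf EAIL} instance with $\vet{n}=[3,3,\ldots,3]$ and $L_{hi}:=M[h,h]+M[i,i]-M[h,i]=6-M[h,i]$, which is a legal instance ($3\le L_{hi}\le 6$) computable in linear time. The claim to prove is that an alignment $A$ of a $k$-sequence of lengths $3,3,\ldots,3$ with induced length $\vet{L}$ exists iff there are sets $c_1,\ldots,c_k$ with $\abs{c_h}=3$ and $\abs{c_h\cap c_i}=M[h,i]$ for all $h\neq i$. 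In the forward direction, set $c_h:=\{j:s_h'(j)\neq\gap\}$; then $\abs{c_h}=\abs{s_h}=3=M[h,h]$, and since deleting all-$\gap$ columns when forming $A_{\{h,i\}}$ does not affect which columns are non-$\gap$ in both rows $h$ and $i$, the number of such columns is $\abs{s_h}+\abs{s_i}-\abs{A_{\{h,i\}}}=6-L_{hi}=M[h,i]=\abs{c_h\cap c_i}$. For the converse, let $U=\bigcup_h c_h$ and form the $k\times\abs{U}$ array whose column indexed by $u\in U$ places, in each row $h$ with $u\in c_h$, the next unused symbol of $s_h$ (read left to right) and $\gap$ otherwise; since $\abs{c_h}=\abs{s_h}$ each row is filled exactly and no column is all-$\gap$ ($u\in c_h$ for some $h$), so this is a legal alignment whose pair-$\{h,i\}$ induced length is $\abs{c_h\cup c_i}=6-M[h,i]=L_{hi}$. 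Composing with Chv\'atal's reduction gives NP-hardness, completing the NP-completeness proof for $\abs{s}=3$.

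Finally, with arbitrary lengths the problem contains the $\abs{s}=3$ case verbatim, so it is NP-hard a fortiori; I would note (but not claim membership in NP) that when $\vet{n}$ is encoded in binary the sequences, and hence any alignment certificate — and even the universe underlying an intersection-pattern certificate — may be of size exponential in the input, so the NP argument above does not carry over. The step requiring the most care is the two directions of the alignment/intersection-pattern dictionary, in particular verifying that removing all-$\gap$ columns is harmless to the intersection counts and that the reconstructed array in the converse direction genuinely satisfies conditions ($a$)–($c$) of an alignment; everything else is bookkeeping on top of the reformulation already given in the text.
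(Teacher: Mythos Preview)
Your proposal is correct and follows exactly the paper's approach: the paper's own proof is nothing more than the sentence invoking Chv\'atal's NP-completeness of {\bf RIP} for matrices with $M[i,i]=3$, relying on the {\bf EAIL}$\leftrightarrow${\bf RIP} correspondence already spelled out in the preceding discussion. You have simply made explicit what the paper leaves to the reader --- the NP-membership certificate, the two directions of the alignment/intersection-pattern dictionary, and the observation that binary-encoded arbitrary lengths obstruct the NP argument --- all of which is sound.
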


\subsection{$\NMSA\text{-}3$}

Let $S = s_{1}, \ldots, s_{k}$ be a $k$-sequence and $A = [s'_{1},
  \ldots, s'_{k}]$ be an alignment of $S$. As defined in
Equation~(\ref{criterion:V3}), $\costSPN{\gamma}{3}[A]$ takes into
account the lengths of induced alignments of $A$.
In the optimization version of $\NMSA\text{-}3$,
given a $k$-tuple $S$, the task is to determine $\DistanceV{\gamma}{3}(S)$ for a fixed matrix $\gamma$.

Let $N = \sum_i \abs{s_i}$. Notice that an alignment $A$ that has exactly only one symbol 
different of space in each column is such that $\sum_{h = 1}^{k-1} \sum_{i=h + 1}^{k} \abs{A_{(h, i)}} = (k-1)N$.

Here, each entry $D(\vet{v}, L)$ of $D$ stores the SP-score of an
alignment $A$ of the prefix $S(\vet{v})$ with the minimum SP-score
such that $\sum_{i < h} |A_{\{i,h\}}| = L$.
The Boolean vectors $\vet{b}$ are used to represent the contribution
to the sum of the lengths of the induced alignments. Thus, we define
$\|\vet{b}\| = {k \choose 2} - \sum_{h < i, b_{h} = b_{i} = 0}
1$. Notice that if $\vet{b} \cdot S (\vet{v})$ is the last column of an
alignment $A$ and $L = \sum_{h=1}^{k-1} \sum_{i= h+ 1}^{k} |A_{\{h,
  i\}}|$ is the sum of the lengths of the alignments induced by $A$,
then the sum of the lengths of the alignments induced by
$A(1\!:\!\abs{A}-1)$ is $L - \|\vet{b}\|$. Therefore,
\[
D(\vet{v}, L) = 
\left\{
\begin{array}{ll}
0\,, & \mbox{if $\vet{v} = \vec{0}$, $L = 0$}\,,\\
\infty\,, & \mbox{if $\vet{v} = \vec{0}$, $L \not= 0$ or $\vet{v} \not= \vec{0}$, $L = 0$}\,,\\
\displaystyle \min_{\vet{b} \in \mathcal{B}_{k}, \vet{b}\le \vet{v}, \| \vet{b} \| \le L}
\left\{
D(\vet{v} - \vet{b}, L - \| \vet{b} \|) +  
\costSP{\gamma}[\vet{b} \cdot S (\vet{v})] \right\}\,, & \mbox{otherwise}\,.
\end{array}
\right.
\]

Algorithm~\ref{alg-variascrit3} provides more details about the
procedure for computing $\DistanceV{\gamma}{3}$.

\begin{algorithm}[H]\caption{}\label{alg-variascrit3}
\begin{algorithmic}[1]
\REQUIRE a $k$-sequence $S = s_{1}, \ldots, s_{k}$ such that $n_i =
\abs{s_{i}}$
\ENSURE $\DistanceV{\gamma}{3}(S)$

\STATE $D(\vec{0},0) \gets 0$
\STATE \textbf{for} each $L \not= 0$ \textbf{do} $D(\vec{0},L) \gets \infty$
\STATE \textbf{for} each $\vet{v} \not= \vec{0}$ \textbf{do} $D(\vet{v},0) \gets \infty$
\FOR{each $\vec{0} < \vet{v} \le \vet{n}$ in lexicographical order}
  \FOR{$L \gets 1, 2, \ldots, N(k-1)$}
    \STATE $D(\vet{v}, L) \gets \min_{\vet{b} \in
      \conjuntoBool{k}, \vet{b} \le \vet{v}, \|\vet{b}\| \le L}
    \big\{
    D(\vet{v} - \vet{b}, L-\| \vet{b} \|) + 
    \costSP{\gamma} [ \vet{b} \cdot S(\vet{v}) ]
    \big\}$ \label{alg-variascrit3-linha5}
  \ENDFOR
\ENDFOR
\RETURN $\min_{L}\big\{ D(\vet{n}, L)/L \big\}$ \label{alg-variascrit3-linha6}
\end{algorithmic}
\end{algorithm}

Assume that all sequences in $S$ have length $n$.
In this case,
table $D$ has $(nk^{2} - nk + 1) \cdot
(n+1)^{k} = O(k^{2} (n+1)^{ k+1})$ entries. Since the time required to
determine each entry of $D$ is $O(2^{k}k^{2})$, the running time of
Algorithm~\ref{alg-variascrit3} is $O(2^{k}k^{2} \cdot k^{2} (n+1)^{ k+1}) = O(2^{k} k^{4} (n+1)^{k+1})$.

\section{Approximation algorithms for $\MSA$ and $\NMSA\text{-}2$}\label{sec:approx}

Gusfield~\cite{gusfield1993} described a 2-approximation algorithm for $\MSA$
assuming that $\gamma \in \metricC$. In this section, we adapt
Gusfield's algorithm, thus proposing a 6-approximation algorithm for $\MSA$
when $\gamma \in \metricA$ and a 12-approximation algorithm for
$\NMSA\text{-}2$ problem when $\gamma \in \metricN$.

We consider here a scoring function $\generalcost = \costA{\gamma}$
and $\generalcost = \costN{\gamma}$ when $\gamma \in \metricA$
and $\gamma \in \metricN$, respectively.
Also, for a 2-sequence $s, t$, define
$\opt (s, t) = \min_{A \in \mathcal{A}_{s, t}} \{ \generalcost[A]\}$ and an
$\generalcost$-\emph{optimal alignment} of $s, t$ is an alignment $A$ such 
$\generalcost[A] = \opt(s, t)$.
It follows from~\cite{araujo2006} that $\opt$
is a metric on $\Sigma^*$.
For a $k$-sequence $S$, define
$\Generalcost[A] = \sum_{h=1}^{k-1} \sum_{i= h+1}^k
\generalcost[A_{\{ h, i \}}]$ and $\Opt(S) = \min_{A \in
  \mathcal{A}_S} \generalcost(A)$ and 
a
\emph{$\Generalcost$-optimal alignment} of $S$ is an alignment $A$ such that
$\Generalcost[A] = \Opt(S)$.

\newcommand{\mystar}{\protect\ensuremath{X}}

Let $c$ be an integer, $1 \le c \le k$. A \emph{star $\mystar$ with
  center $c$} (also called \emph{$c$-star}) of the $k$-sequence
$S = s_1, \ldots, s_k$
is a collection of $k-1$ alignments: $\mystar_{h} = [s'_h,
  s^h_c]$ of $s_{h}, s_{c}$ for each $h < c$ and 
$\mystar_{h} = [s^h_c, s'_h]$ of $s_{c}, s_{h}$ for each $h > c$. The set
of all stars with center $c$ is denoted by $\mathcal{X}_c$. The score of the $c$-star
$\mystar$ is $\cost(\mystar) = \sum_{h \not= c}
\generalcost[\mystar_{h}]$ and an \emph{$\generalcost$-optimal star}
is one whose score is $\optStar(S) = \min_{\mystar \in \mathcal{X}_c,
  c \in \mathbb{N}} \{\cost(\mystar)\}$.
Notice that in a $\generalcost$-optimal $c$-star,
$\generalcost(\mystar_h) = \opt (s_h, s_c)$ if $h < c$ and $\generalcost(\mystar_h) = \opt (s_c, s_h)$ if $c < h$ and because the symmetry property of $\opt$, we have
\[
\optStar(S) = \min_c \left\{ \sum_{h \not= c} \opt (s_h, s_c)
\right\},
\]
and if $\abs{s} \le n$ for each $s \in S$,
$\optStar(S)$ can be computed in $O(k^2 n^2)$-time
when $\generalcost = \costA{\gamma}$ and
$O(k^2 n^3)$-time when $\generalcost = \costN{\gamma}$.

We say that alignment $A$ of $S$ and $c$-star $\mystar$ are \emph{compatible}
($A$ is compatible with $\mystar$ and $\mystar$ is compatible with $A$)
in $S$ when either $A_{\{ h, c \}}$ (when $h < c$) or $A_{\{ c, h \}}$
(when $c < h$) is equal to
$\mystar_h$ for each $h$.
Given the alignment $A$ of $S$, it is easy to obtain the $c$-star
$\mystar$ compatible with $A$ (and there exists only one) considering fixed
$c$.
On the other hand, an
important known result in alignment studies from Feng and
Doolitte~\cite{feng1987progressive} is that we can find an alignment
$A$ that is compatible with a given $c$-star $\mystar$ in $O(kn)$,
where $n \le \abs{s}$ for each sequence $s$ in $S$. In general in this case,
there exists many compatible alignments with $\mystar$.

It is easy to adapt the following result from
Gusfield~\cite{gusfield1993} to a $k$-sequence.
\begin{lem}\label{lemma:approx:2}
Given a $k$-sequence $S$, 
\[
\optStar(S) \le \frac{2}{k} \cdot \Opt(S)\,.
\]
\end{lem}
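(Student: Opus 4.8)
\section*{Proof proposal}

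The plan is to adapt Gusfield's averaging argument. First I would fix a $\Generalcost$-optimal alignment $A$ of $S$, so that $\Opt(S) = \Generalcost[A] = \sum_{h=1}^{k-1}\sum_{i=h+1}^{k}\generalcost[A_{\{h,i\}}]$. For each center $c \in \{1,\dots,k\}$ the alignment $A$ induces a unique compatible $c$-star, which I would call $\mystar^{c}$; its arms are exactly the induced pairwise alignments $A_{\{h,c\}}$ (with the order of indices taken as in the definition of compatibility), and hence $\cost(\mystar^{c}) = \sum_{h<c}\generalcost[A_{\{h,c\}}] + \sum_{h>c}\generalcost[A_{\{c,h\}}]$.

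Next I would relate $\optStar(S)$ to each $\cost(\mystar^{c})$. Since $A_{\{h,c\}}$ is an alignment of the pair $s_h,s_c$, we have $\generalcost[A_{\{h,c\}}]\ge\opt(s_h,s_c)$ (and likewise $\generalcost[A_{\{c,h\}}]\ge\opt(s_c,s_h)$), and because $\opt$ is symmetric — it is a metric on $\Sigma^*$ by the cited result of Araujo and Soares, so the order of the indices is immaterial — the formula for $\optStar$ recorded just before the lemma gives $\optStar(S) = \min_{c'}\sum_{h\neq c'}\opt(s_h,s_{c'}) \le \sum_{h\neq c}\opt(s_h,s_c) \le \cost(\mystar^{c})$ for every $c$.

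Then I would sum this inequality over the $k$ choices of center. On the right-hand side each unordered pair $\{h,i\}$ contributes $\generalcost[A_{\{h,i\}}]$ exactly twice — once when $c=h$ and once when $c=i$ — so $\sum_{c=1}^{k}\cost(\mystar^{c}) = 2\sum_{h<i}\generalcost[A_{\{h,i\}}] = 2\,\Opt(S)$. Combining with the previous bound, $k\cdot\optStar(S) \le \sum_{c=1}^{k}\cost(\mystar^{c}) = 2\,\Opt(S)$, which is exactly $\optStar(S)\le\frac{2}{k}\Opt(S)$; equivalently, by pigeonhole some center $c$ already satisfies $\cost(\mystar^{c})\le\frac{2}{k}\Opt(S)$.

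The proof is essentially a double-counting computation, so I do not expect a genuine obstacle; the only points needing care are (i) that replacing each induced pairwise alignment $A_{\{h,c\}}$ by an $\generalcost$-optimal pairwise alignment cannot increase the star score, which only uses that $\generalcost\in\{\costA{\gamma},\costN{\gamma}\}$ is a bona fide cost on alignments of $2$-sequences, and (ii) that the ordering of indices inside $A_{\{h,c\}}$ may be ignored because $\opt$ is symmetric — both already available from the preceding discussion and from~\cite{araujo2006}.
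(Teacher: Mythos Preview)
Your proposal is correct and follows essentially the same averaging argument as the paper: sum the star costs over all $k$ choices of center and observe that each unordered pair contributes twice, yielding $k\cdot\optStar(S)\le 2\,\Opt(S)$. The only cosmetic difference is that you build each $c$-star directly from the induced pairwise alignments $A_{\{h,c\}}$ of the optimal multiple alignment, whereas the paper passes through the pairwise optima $\opt(s_h,s_c)$ before bounding by $\generalcost[A_{\{h,i\}}]$; both routes use the symmetry of $\opt$ and arrive at the same double-counting identity.
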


\begin{proof}
  Let $\mystar$ be a $\generalcost$-optimal star of $S$ and
  $c$ its center, and $A$ an
  $\generalcost$-optimal alignment of $S$. Then,
  \begin{align}
    k \cdot \optStar(S) &= 
    k \cdot \cost(\mystar) = 
    \sum_{h = 1}^{k} \cost(\mystar)
    = \sum_{h = 1}^{k} \sum_{h \not= c} \generalcost[\mystar_{h}]
    = \sum_{h = 1}^{k} \sum_{h \not= c} \opt (s_h, s_c) \label{eq:star1}\\
    & \le 2 \cdot \sum_{h=1}^{k-1} \sum_{i=h+1}^k \opt (s_h, s_i)\label{eq:star111}\\
    &\le 2 \cdot \sum_{i= 1}^{k} \sum_{h \not= i} \generalcost[A_{\{ h, i \}}] = 2 \, \Generalcost[A] =  2 \cdot \Opt(S)\,, \label{eq:star1111}
  \end{align}
  where~(\ref{eq:star1}) follows from the definition of a
  star and from the optimality of star $\mystar$;
  (\ref{eq:star111}) follows from triangle inequality of $\opt$;
  and (\ref{eq:star1111}) follows from the optimality of alignment $A$.
  
  Therefore, $\optStar \le (2/k) \cdot \opt(S)$.
\end{proof}

Let $A = [s', t']$ 
be an alignment of a 2-sequence $s, t$.  We say that a column $j$ is \emph{splittable
  in $A$} if $s'(j) \not= \gap$, $t'(j) \not= \gap$ and $\min\{
\pont{\gamma}{t'(j)}{\gap}, \pont{\gamma}{s'(j)}{\gap}\} \le
\pont{\gamma}{s'(j)}{t'(j)}$. Let $J := \{ j_i \in \mathbb{N}: 1 \le
j_{1} < \cdots < j_{m} \le \abs{A} \mbox{and $j_i$ is splittable in
  $A$}\}$. An \emph{$A$-splitting} is the alignment
\[
\left[\begin{array}{cccccccccc}
s'(1\!:\!j_{1}-1) & s'(j_{1}) & \gap &
s'(j_{1}+1\!:\!j_{2}-1) & s'(j_{2}) & \gap &
\ldots & s'(j_m) & - & s'(j_{m} + 1\!:\!\abs{A})\\
t'(1\!:\!j_{1}-1) & \gap & t'(j_{1}) &
t'(j_{1}+1\!:\!j_{2}-1) & \gap & t'(j_{2}) &
\ldots & - & t'(j_{m}) & t'(j_{m} + 1\!:\!\abs{A})
\end{array}\right].
\]
We say that $J$ is \emph{required to split $A$}. The following
proposition is used to check properties of an $A$-splitting.

\begin{prop}\label{fact:approx:0}
Consider $\gamma \in \metricA$ and $\A, \B \in \Sigma$. If
$\pont{\gamma}{\A}{\gap} > \pont{\gamma}{\A}{\B}$ or
$\pont{\gamma}{\A}{\gap} > \pont{\gamma}{\B}{\A}$, then
$\pont{\gamma}{\A}{\B} = \pont{\gamma}{\B}{\A}$.
\end{prop}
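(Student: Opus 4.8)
The plan is to reduce both alternatives in the hypothesis to property~(c) of the definition of $\metricA$, which says that a substitution whose score is strictly below the ``gap through an intermediate'' cost is automatically symmetric. The only additional ingredient is property~(a), namely that all gap scores are strictly positive: $\pont{\gamma}{\C}{\gap} = \pont{\gamma}{\gap}{\C} > 0$ for every $\C \in \Sigma$.

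First I would treat the case $\pont{\gamma}{\A}{\gap} > \pont{\gamma}{\A}{\B}$. Since $\pont{\gamma}{\gap}{\B} > 0$ by~(a), we obtain
\[
\pont{\gamma}{\A}{\B} < \pont{\gamma}{\A}{\gap} \le \pont{\gamma}{\A}{\gap} + \pont{\gamma}{\gap}{\B},
\]
so property~(c) applies directly and yields $\pont{\gamma}{\A}{\B} = \pont{\gamma}{\B}{\A}$.

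For the remaining case $\pont{\gamma}{\A}{\gap} > \pont{\gamma}{\B}{\A}$, I would invoke property~(c) with the roles of $\A$ and $\B$ interchanged. Using $\pont{\gamma}{\gap}{\A} = \pont{\gamma}{\A}{\gap}$ and $\pont{\gamma}{\B}{\gap} > 0$, both from~(a),
\[
\pont{\gamma}{\B}{\A} < \pont{\gamma}{\A}{\gap} = \pont{\gamma}{\gap}{\A} \le \pont{\gamma}{\B}{\gap} + \pont{\gamma}{\gap}{\A},
\]
and the swapped instance of~(c) gives $\pont{\gamma}{\B}{\A} = \pont{\gamma}{\A}{\B}$. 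With both cases settled the proposition follows. There is no genuine obstacle here; the only point requiring care is to keep track of which symbol plays the role of the ``left'' and which the ``right'' argument when applying~(c), so that the strict inequality feeding into~(c) is the correct one.
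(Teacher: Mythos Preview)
Your proof is correct and follows essentially the same approach as the paper's: in each case you use property~(a) of $\metricA$ to ensure the relevant gap score is positive, then bound the substitution score strictly below $\pont{\gamma}{\A}{\gap}+\pont{\gamma}{\gap}{\B}$ (respectively $\pont{\gamma}{\B}{\gap}+\pont{\gamma}{\gap}{\A}$) and invoke property~(c). The paper's argument is identical up to the order in which the inequalities are written.
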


\begin{proof}
Since $\gamma \in \metricA$, we have that $\pont{\gamma}{\A}{\gap} =
\pont{\gamma}{\gap}{\A} > 0$ and $\pont{\gamma}{\B}{\gap} =
\pont{\gamma}{\gap}{\B} >0$. Suppose that $\pont{\gamma}{\A}{\gap} >
\pont{\gamma}{\A}{\B}$. Then,
$\pont{\gamma}{\A}{\gap}+\pont{\gamma}{\gap}{\B} >
\pont{\gamma}{\A}{\gap} > \pont{\gamma}{\A}{\B}$, and we have that
$\pont{\gamma}{\A}{\B} = \pont{\gamma}{\B}{\A}$ since $\gamma \in
\metricA$. Assume now that $\pont{\gamma}{\A}{\gap} >
\pont{\gamma}{\B}{\A}$. It follows that $\pont{\gamma}{\B}{\gap} +
\pont{\gamma}{\gap}{\A} > \pont{\gamma}{\gap}{\A} =
\pont{\gamma}{\A}{\gap} > \pont{\gamma}{\B}{\A}$, which implies that
$\pont{\gamma}{\A}{\B} = \pont{\gamma}{\B}{\A}$ since $\gamma \in
\metricA$.
\end{proof}

\newcommand{\secstar}{\protect\ensuremath{Y}}

Let $\mystar = \{\mystar_1, \ldots, \mystar_{c-1}, \mystar_{c+1},
\mystar_k\}$ be a $c$-star. A $\mystar$-\emph{starsplitting} is the
$c$-star $\secstar = \{ \secstar_1, \ldots, \secstar_{c-1},
\secstar_{c+1}, \secstar_k \}$ where $\secstar_j$ is the
$\mystar_j$-splitting for each $j$. The next result shows that the
$\generalcost$-score of the star $\secstar$ is bounded by the
$\generalcost$-score of star $\mystar$ when $\gamma \in \metricA$ and
$\generalcost = \costA{\gamma}$ or $\gamma \in \metricN$ and
$\generalcost = \costN{\gamma}$.

\begin{lem}\label{proposition:approx:1}
  Let $S$ be a $k$-sequence, $\mystar$ be a
  star of $S$, $\secstar$ be the $\mystar$-starsplitting and
  $\generalcost$ be a function to score alignments. Consider $\gamma
  \in \metricA$ and $\generalcost = \costA{\gamma}$ or $\generalcost =
  \costN{\gamma}$. Then, $\secstar$ is also a $c$-star and
  \[
  \cost(\secstar) \le 3 \cdot \cost(\mystar)\,.
  \]
\end{lem}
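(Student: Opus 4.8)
The plan is to bound the $\generalcost$-score of each split leg $\secstar_j$ against the score of the original leg $\mystar_j$, and then sum over $j \neq c$. Since $\cost(\secstar) = \sum_{j \neq c} \generalcost[\secstar_j]$ and likewise for $\mystar$, it suffices to show $\generalcost[\secstar_j] \le 3\,\generalcost[\mystar_j]$ for each $j$. (That $\secstar$ is again a $c$-star is immediate: an $A$-splitting of an alignment of $s_j, s_c$ is still an alignment of $s_j, s_c$, since splitting a column into two columns — one with the symbol and a gap, the other with a gap and the symbol — inserts only spaces and never creates an all-gap column.) So the real content is a per-alignment statement: if $A$ is an alignment of a 2-sequence and $A'$ is its $A$-splitting, then $\generalcost[A'] \le 3\,\generalcost[A]$.

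First I would analyze what splitting one splittable column $j$ does to the cost. Write $\A = s'(j)$, $\B = t'(j)$, both in $\Sigma$, with $\min\{\pont{\gamma}{\A}{\gap}, \pont{\gamma}{\gap}{\B}\} \le \pont{\gamma}{\A}{\B}$ (using $\pont{\gamma}{\B}{\gap}=\pont{\gamma}{\gap}{\B}$ etc.\ from $\gamma \in \metricA$). Splitting replaces the single contribution $\pont{\gamma}{\A}{\B}$ by the two contributions $\pont{\gamma}{\A}{\gap} + \pont{\gamma}{\gap}{\B}$. For the $\costA{\gamma}$ case I want $\pont{\gamma}{\A}{\gap} + \pont{\gamma}{\gap}{\B} \le 3\,\pont{\gamma}{\A}{\B}$. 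The column being splittable gives WLOG $\pont{\gamma}{\A}{\gap} \le \pont{\gamma}{\A}{\B}$; it remains to bound $\pont{\gamma}{\gap}{\B} = \pont{\gamma}{\B}{\gap}$. Here I would use property (d) of $\metricA$, $\pont{\gamma}{\B}{\gap} \le \pont{\gamma}{\B}{\A} + \pont{\gamma}{\A}{\gap}$, together with Proposition~\ref{fact:approx:0}: if $\pont{\gamma}{\A}{\gap} > \pont{\gamma}{\A}{\B}$ were to fail to help, symmetry $\pont{\gamma}{\A}{\B} = \pont{\gamma}{\B}{\A}$ kicks in — more carefully, since $\pont{\gamma}{\A}{\gap}\le\pont{\gamma}{\A}{\B}$ I cannot invoke the proposition from that inequality, so instead I distinguish whether $\pont{\gamma}{\B}{\A} \le \pont{\gamma}{\A}{\B}$ or not; in the latter case $\pont{\gamma}{\B}{\A} > \pont{\gamma}{\A}{\B} \ge \pont{\gamma}{\A}{\gap}$ triggers the proposition and forces $\pont{\gamma}{\A}{\B}=\pont{\gamma}{\B}{\A}$, a contradiction, so in fact $\pont{\gamma}{\B}{\A} \le \pont{\gamma}{\A}{\B}$. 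Then $\pont{\gamma}{\B}{\gap} \le \pont{\gamma}{\B}{\A} + \pont{\gamma}{\A}{\gap} \le 2\,\pont{\gamma}{\A}{\B}$, hence $\pont{\gamma}{\A}{\gap}+\pont{\gamma}{\B}{\gap} \le 3\,\pont{\gamma}{\A}{\B}$, which is the column-wise bound. Columns not in $J$ are untouched, and the untouched columns carry over verbatim, so summing column-wise yields $\costA{\gamma}[A'] \le 3\,\costA{\gamma}[A]$, with equality-or-less on each term.

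For the $\costN{\gamma}$ case I would leverage the $\costA{\gamma}$ bound together with the extra hypothesis $\gamma \in \metricN$, i.e.\ $\pont{\gamma}{\A}{\gap} \le 2\,\pont{\gamma}{\B}{\gap}$ for all $\A,\B$. Writing $|A| = \ell$ and letting $m = |J|$ be the number of split columns, the splitting has length $\ell + m$, so $\costN{\gamma}[A'] = \costA{\gamma}[A']/(\ell+m)$. From the column analysis, $\costA{\gamma}[A'] = \costA{\gamma}[A] + \sum_{j \in J}\big(\pont{\gamma}{s'(j)}{\gap}+\pont{\gamma}{\gap}{t'(j)} - \pont{\gamma}{s'(j)}{t'(j)}\big)$; I want to compare this, divided by $\ell + m$, with $3\,\costA{\gamma}[A]/\ell$. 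The clean way is to show that each split column's per-column "price" is controlled: since the column was splittable, $\pont{\gamma}{s'(j)}{t'(j)} \ge \min\{\pont{\gamma}{s'(j)}{\gap},\pont{\gamma}{\gap}{t'(j)}\} \ge \tfrac12\max\{\pont{\gamma}{s'(j)}{\gap},\pont{\gamma}{\gap}{t'(j)}\}$ by the $\metricN$ condition, so $\pont{\gamma}{s'(j)}{\gap}+\pont{\gamma}{\gap}{t'(j)} \le 3\,\pont{\gamma}{s'(j)}{t'(j)}$ again but now each of the two new columns individually costs at most $2\,\pont{\gamma}{s'(j)}{t'(j)} \le 2\,\costA{\gamma}[A(j)]$ while replacing a column of cost $\costA{\gamma}[A(j)]$. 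I would then argue the average can only increase by a bounded factor: the worst case is all $\ell$ columns splittable with the old value tiny relative to gap costs, but the $\metricN$ bound caps the blow-up of each column at a factor $3$, and splitting increases the denominator too, so $\costN{\gamma}[A'] \le 3\,\costN{\gamma}[A]$ follows from the elementary fact that if $a_j' \le 3 a_j$ termwise and we add $m$ nonnegative terms to both numerator and denominator in a way that each added pair sums to at most the tripled original term, the ratio of sums is at most tripled. The main obstacle I anticipate is precisely this last averaging step for $z$-normalized scores: the inequality $\costA{\gamma}[A']/(\ell+m) \le 3\costA{\gamma}[A]/\ell$ is \emph{not} a formal consequence of the termwise bound alone (ratios don't behave that way under term addition), so the argument must use that the \emph{added} length $m$ and the \emph{added} cost are linked through the same splittable columns — I would make this precise by writing $\costA{\gamma}[A'] = \sum_{j \notin J}\costA{\gamma}[A(j)] + \sum_{j \in J}\big(\pont{\gamma}{s'(j)}{\gap}+\pont{\gamma}{\gap}{t'(j)}\big)$ and pairing, for each $j \in J$, the old length-$1$ cost-$c_j$ column with the new length-$2$ cost-$\le 3c_j$ pair, then invoking mediant-type monotonicity blockwise. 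This blockwise mediant argument is the delicate part and is where the $\metricN$ hypothesis is genuinely needed beyond $\metricA$.
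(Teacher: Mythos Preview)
Your per-leg reduction and the column-wise bound for $\costA{\gamma}$ are the right plan, and this matches the paper's strategy. However, your Case~2 in the $\costA{\gamma}$ argument does not go through: you assume $\pont{\gamma}{\B}{\A} > \pont{\gamma}{\A}{\B} \ge \pont{\gamma}{\A}{\gap}$ and claim this triggers Proposition~\ref{fact:approx:0}, but that proposition's hypothesis is that $\pont{\gamma}{\A}{\gap}$ be \emph{strictly larger} than one of $\pont{\gamma}{\A}{\B}$, $\pont{\gamma}{\B}{\A}$ --- you have the reverse inequalities, so nothing fires. A clean repair: with $a = \pont{\gamma}{\A}{\gap}$, $b = \pont{\gamma}{\B}{\gap}$, $c = \pont{\gamma}{\A}{\B}$ and (from splittability) WLOG $a \le c$, split instead on $b$. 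If $b \le c$ then $a+b \le 2c$ and you are done. If $b > c$, then $\pont{\gamma}{\B}{\gap} > \pont{\gamma}{\A}{\B}$ \emph{does} trigger the proposition (applied with $\B$ in the first slot), forcing $\pont{\gamma}{\B}{\A} = c$; now property~(d) of $\metricA$ gives $b \le \pont{\gamma}{\B}{\A} + a = c + a \le 2c$, hence $a + b \le 3c$ as needed.

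For $\costN{\gamma}$ you are making the argument far harder than it is, and you are importing the hypothesis $\gamma \in \metricN$ that the lemma does not assume. The paper's route is one line: splitting only \emph{adds} columns, so $|\secstar_h| \ge |\mystar_h|$, and therefore
\[
\costN{\gamma}[\secstar_h] \;=\; \frac{\costA{\gamma}[\secstar_h]}{|\secstar_h|} \;\le\; \frac{3\,\costA{\gamma}[\mystar_h]}{|\secstar_h|} \;\le\; \frac{3\,\costA{\gamma}[\mystar_h]}{|\mystar_h|} \;=\; 3\,\costN{\gamma}[\mystar_h]\,.
\]
No mediant bookkeeping and no $\metricN$ condition are needed --- the larger denominator can only help. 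Your worry that ``ratios don't behave that way under term addition'' disappears once you compare whole numerator to whole denominator rather than arguing block by block.
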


\begin{proof}
  Consider an alignment $\mystar_{h} = [s', t'] \in \mystar$ and a
  set $J$ which is 
  required to split $\mystar_h$. Then, 
  \begin{align}
  \costA{\gamma}[\secstar_h] 
  &= 
  \costA{\gamma}[\mystar_h]
  + \sum_{j \in J}
  \left( \pont{\gamma}{s'(j)}{\gap} + \pont{\gamma}{\gap}{t'(j)} -
  \pont{\gamma}{s'(j)}{t'(j)} \right) \nonumber \\
  &\le \costA{\gamma}[\mystar_h]
  + \sum_{j \in J}
  \left(
  \pont{\gamma}{s'(j)}{\gap} +
  \pont{\gamma}{\gap}{t'(j)}
  \right) \le 
  \costA{\gamma}[\mystar_h]
  + 2 \cdot \sum_{j \in J} \min \{
    \pont{\gamma}{s'(j)}{\gap},
  \pont{\gamma}{\gap}{t'(j)}
  \}
  \label{proposition:approx:1:exp:2} \\
  &
  \le 
  \costA{\gamma}[\mystar_h]
  + 2 \cdot \sum_{j \in J} \pont{\gamma}{s'(j)}{t'(j)},
  \label{proposition:approx:1:exp:5} \le
  \costA{\gamma}[\mystar_h]
  + 2 \cdot \costA{\gamma}[\mystar_h] =
  3 \cdot \costA{\gamma}[\mystar_h]\,,
  \end{align}
  where~(\ref{proposition:approx:1:exp:2})
  hold because, since $\gamma \in \metricA$,
  $\pont{\gamma}{s'(j)}{t'(j)}$ and 
  (\ref{proposition:approx:1:exp:5}) hold because $j$ is splittable.
  Thus, $\costA{\gamma}[\secstar_h] \le 3 \cdot \costA{\gamma}[\mystar_h]$.
  Furthermore,
  \begin{align*}
    \costN{\gamma}[\secstar_h] &= \frac{\costA{\gamma}[\secstar_h]}
          {\abs{\secstar_h}} \le \frac{3 \cdot
            \costA{\gamma}[\mystar_h]} {\abs{\secstar_h}}
          \le \frac{3 \cdot\costA{\gamma}[\mystar_h]}
          {\abs{\mystar_h}} = 3 \cdot \costN{\gamma}[\mystar_h]\,.
  \end{align*}
  Hence,
  $\generalcost(\secstar_{h}) \le 3 \cdot
  \generalcost[\mystar_h]$ when
  .$\generalcost = \costA{\gamma}$
  or $\generalcost = \costA{\gamma}$ and $\gamma \in \metricA$
  which implies that
  \[
  \cost (\secstar) = \sum_{h \not= c} \generalcost [\secstar_{h}]
  \le
  3 \cdot \sum_{h \not= c} \generalcost [\mystar_{h}] =
  3 \cdot \cost (\mystar)\,.
  \]
\end{proof}

Notice that the time consumption for computing an $\mystar$-splitting
from $\mystar$ is $O(kn)$ when $\abs{s} \le n$ for each $s \in S$.

Considering a star $\mystar$ with center $c$ of
$S = s_1, \ldots, s_k$, there can
exist many compatible alignments with a star $\secstar$
which is a $\mystar$-splitting. Let \textsc{CompatibleAlign} be a
subroutine that receives the star $\secstar$ and returns an
alignment $A$ compatible with $\secstar$. It is quite simple: if
symbols $s_h (j_1)$ and $s_c(j_2)$ are aligned in $\mystar_h$, they
are also aligned in $A$. Otherwise, $s_h (j)$ aligns only with $\gap$
in $A$. This property is enough to guarantee the approximation factor
of $\MSA$ and $\NMSA\text{-}2$.
As an example, for $S = \A\A\A, \B\B\B\B\B, \C\C, \D\D\D, \E\E\E\E\E\E$
and
\[
\mystar=
\left\{
\left[
\begin{array}{cccc}
\A & \A & \A & \gap\\
\gap & \D & \D & \D \\
\end{array}
\right],
\left[
\begin{array}{cccccc}
\B & \B & \gap & \B & \B & \B\\
\gap & \D & \D & \gap & \gap & \D
\end{array}
\right],
\left[
\begin{array}{ccc}
\C & \C & \gap\\
\D & \D & \D
\end{array}
\right],
\left[
\begin{array}{ccccccc}
\gap & \gap & \gap & \D & \D & \D& \gap \\
\E & \E &\E & \E & \gap &\E & \E
\end{array}
\right]
\right\}
\]
a star with center $4$, we obtain the alignment
\[
  \left[
    \begin{array}{ccccccccccc}
      \A &
      \gap &
      \gap &
      \gap &
      \gap &
      \A &
      \A &
      \gap &
      \gap &
      \gap &
      \gap \\
      \gap &
      \B &
      \gap &
      \gap &
      \gap &
      \B &
      \gap &
      \B &
      \B &
      \B &
      \gap\\
      \gap &
      \gap &
      \gap &
      \gap &
      \gap &
      \C &
      \C &
      \gap &
      \gap &
      \gap &
      \gap\\
      \gap &
      \gap &
      \gap &
      \gap &
      \gap &
      \D &
      \D &
      \gap &
      \gap &
      \D &
      \gap\\
      \gap &
      \gap &
      \E &
      \E &
      \E &
      \E &
      \gap &
      \gap &
      \gap &
      \E &
      \E
\end{array}
\right]\,.
\]

Let $\qmax := \max_{\A \in \Sigma}\{\pont{\gamma}{\A}{\gap},
\pont{\gamma}{\gap}{\A} \}$ and consider the following result.

\begin{prop}\label{proposition:approx:2}
  Let $S$ be a $k$-sequence, $\mystar$ be a star of $S$ with center $c$ and
  $\secstar$ be the $\mystar$-starsplitting. Assume that $\gamma \in
  \metricA$ and that $\textsc{Compati\-bleAlign}(\secstar)$ returns $A
  = [s'_{1}, \ldots, s'_{k}]$. If $h \not= c$ and $i \not= c$, we
  have that
  \begin{enumerate}
  \item[(i)] $\pont{\gamma}{s'_{h}(j)}{s'_{i}(j)} \le
    \pont{\gamma}{s'_ {h}(j)}{s'_{c}(j)} +
    \pont{\gamma}{s'_{c}(j)}{s'_{i}(j)}$ for each $j = 1, \ldots,
    \abs{A}$, and
  \item[(ii)] $\costN{\gamma}[A_{\{h, i\}}] \le 2\cdot\qmax$.
  \end{enumerate}
\end{prop}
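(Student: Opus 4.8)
The plan is to first record two structural features of the alignment $A=[s_1',\ldots,s_k']$ returned by \textsc{CompatibleAlign}($\secstar$), then prove~(i) by a short case analysis, and finally derive~(ii) column by column from~(i).

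First I would describe the shape of $A$. By the way \textsc{CompatibleAlign} works, every column $j$ of $A$ is of exactly one of two kinds: a \emph{center column}, in which $s_c'(j)\in\Sigma$ and, for each $h\neq c$, $s_h'(j)$ is either $\gap$ or the symbol of $s_h$ that is aligned with $s_c'(j)$ in $\secstar_h$; or an \emph{insertion column}, in which $s_c'(j)=\gap$ and at most one of the rows $h\neq c$ carries a non-space symbol. I would then observe that, since $\secstar_h$ is a $\mystar_h$-splitting, it has no splittable column; hence in a center column $j$ with $s_h'(j)=:\alpha\in\Sigma$ and $s_c'(j)=:\beta\in\Sigma$ the pair $\{\alpha,\beta\}$, which is aligned in $\secstar_h$, is not splittable, so $\min\{\pont{\gamma}{\alpha}{\gap},\pont{\gamma}{\beta}{\gap}\}>\pont{\gamma}{\alpha}{\beta}$, and Proposition~\ref{fact:approx:0} gives $\pont{\gamma}{\alpha}{\beta}=\pont{\gamma}{\beta}{\alpha}$; the two orderings $h<c$ and $h>c$ in the definition of $\secstar_h$ must be checked separately but lead to the same conclusion. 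Call this property $(\star)$.

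For~(i), fix $j$ and write $a=s_h'(j)$, $b=s_c'(j)$, $d=s_i'(j)$ (recall $h\neq c$, $i\neq c$, and we may assume $h\neq i$). If $b=\gap$, then $j$ is an insertion column, so at least one of $a,d$ is $\gap$ and the inequality $\pont{\gamma}{a}{d}\le\pont{\gamma}{a}{b}+\pont{\gamma}{b}{d}$ is immediate from $\pont{\gamma}{\gap}{\gap}=0$ and nonnegativity of the entries of $\gamma$. If $b\in\Sigma$ and at least one of $a,d$ is $\gap$, the inequality reduces to property~(d) of $\metricA$, together with property~(a) for $\pont{\gamma}{b}{\gap}=\pont{\gamma}{\gap}{b}$ and $(\star)$ to symmetrize whichever of $\pont{\gamma}{a}{b}$, $\pont{\gamma}{b}{d}$ shows up with the ``wrong'' orientation. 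The essential case is $a,b,d\in\Sigma$: property~(e) of $\metricA$ yields $\min\{\pont{\gamma}{a}{d},\,\pont{\gamma}{a}{\gap}+\pont{\gamma}{\gap}{d}\}\le\pont{\gamma}{a}{b}+\pont{\gamma}{b}{d}$, while applying $(\star)$ to row $h$ and to row $i$ gives $\pont{\gamma}{a}{\gap}>\pont{\gamma}{a}{b}$ and $\pont{\gamma}{\gap}{d}=\pont{\gamma}{d}{\gap}>\pont{\gamma}{b}{d}$, so the second argument of the minimum strictly exceeds the right-hand side and cannot be the minimizer; hence $\pont{\gamma}{a}{d}$ is, which is the claim. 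I expect this last step to be the main obstacle: property~(e) only bounds a minimum of two quantities, so one genuinely needs the non-splittability of $\secstar$ to discard the ``detour through a space'' branch, and one must also be careful about the orientation issues handled by Proposition~\ref{fact:approx:0}.

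For~(ii), if $|A_{\{h,i\}}|=0$ then $\costN{\gamma}[A_{\{h,i\}}]=0\le2\qmax$; otherwise I would bound $\costA{\gamma}[A_{\{h,i\}}]$ one column at a time and divide by $|A_{\{h,i\}}|$. The columns of $A_{\{h,i\}}$ are exactly the columns $j$ of $A$ with $(s_h'(j),s_i'(j))\neq(\gap,\gap)$. If exactly one of $s_h'(j),s_i'(j)$ is a space, that column contributes $\pont{\gamma}{x}{\gap}$ or $\pont{\gamma}{\gap}{x}$ for some $x\in\Sigma$, hence at most $\qmax\le2\qmax$. If both $s_h'(j),s_i'(j)\in\Sigma$, then $j$ is not an insertion column, so $s_c'(j)=:\beta\in\Sigma$, and by~(i) followed by $(\star)$ we get $\pont{\gamma}{s_h'(j)}{s_i'(j)}\le\pont{\gamma}{s_h'(j)}{\beta}+\pont{\gamma}{\beta}{s_i'(j)}<\pont{\gamma}{s_h'(j)}{\gap}+\pont{\gamma}{s_i'(j)}{\gap}\le2\qmax$. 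Thus every column of $A_{\{h,i\}}$ contributes at most $2\qmax$, whence $\costN{\gamma}[A_{\{h,i\}}]=\costA{\gamma}[A_{\{h,i\}}]/|A_{\{h,i\}}|\le2\qmax$, as required.
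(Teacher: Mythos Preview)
Your proof is correct and follows essentially the same approach as the paper's. The key idea in both arguments is identical: in the main case $a,b,d\in\Sigma$, property~(e) of $\metricA$ only bounds $\min\{\pont{\gamma}{a}{d},\,\pont{\gamma}{a}{\gap}+\pont{\gamma}{\gap}{d}\}$, and one uses non-splittability of the columns of $\secstar$ (together with Proposition~\ref{fact:approx:0} to handle orientation) to show that the gap-detour term strictly exceeds $\pont{\gamma}{a}{b}+\pont{\gamma}{b}{d}$, forcing $\pont{\gamma}{a}{d}$ to be the minimizer. Your treatment is somewhat more explicit than the paper's: you spell out the dichotomy between ``center columns'' and ``insertion columns'' of $A$, which the paper leaves implicit in the phrase ``by the alignment construction in \textsc{CompatibleAlign}, $\C\neq\gap$'', and you are more careful in the boundary cases where one of $a,d$ is a space (the paper dispatches these with ``can be checked by definition of $\gamma\in\metricA$'', whereas you correctly note that property~(d) plus $(\star)$ is what is actually needed). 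Part~(ii) is handled identically in both: a column-by-column bound of $2\qmax$ using~(i) and non-splittability, followed by division by $|A_{\{h,i\}}|$.
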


\begin{proof}
  Assume that $s'_h(j) = \A, s'_i(j) = \B$ and $s'_c(j) = \C$.
  
  First we show that (\textit{i}) $\pont{\gamma}{\A}{\B} \le
  \pont{\gamma}{\A}{\C} + \pont{\gamma}{\C}{\B}$ for each $j = 1,
  \ldots, \abs{A}$, by analyzing all possible values of $\A, \B$ and
  $\C$. The case when $\A = \gap$ or $\B = \gap$ can be checked by
  definition of $\gamma \in \metricA$. Thus, we assume that $\A \not=
  \gap$ and $\B \not= \gap$, which implies by the alignment
  construction in \textsc{CompatibleAlign}, that $\C \not= \gap$.
  Since $\A \not= \gap$, $\B \not= \gap$, $\C \not= \gap$ and
  $\secstar$ is a starsplitting, we have that
  $\pont{\gamma}{\A}{\gap} > \pont{\gamma}{\C}{\A}$,
  $\pont{\gamma}{\B}{ \gap} > \pont{\gamma}{\C}{\B}$ and, since
  $\gamma \in \metricA$, $\pont{\gamma}{\gap}{\B} =
  \pont{\gamma}{\B}{\gap} > \pont{\gamma}{\C}{\B}$. Since
  $\pont{\gamma}{\A}{\gap} > \pont{\gamma}{\C}{\A}$, it follows from
  Proposition~\ref{fact:approx:0} that $\pont{\gamma}{\A}{\gap} >
  \pont{\gamma}{\A}{\C}$. Hence, $\pont{\gamma}{\A}{\gap} +
  \pont{\gamma}{\gap}{\B} > \pont{\gamma}{\A}{\C} +
  \pont{\gamma}{\C}{\B}$, which implies from the definition of
  $\metricA$ that $\pont{\gamma}{\A}{\B} \le \pont{\gamma}{\A}{\C}
  +\pont{\gamma}{\C}{\B}$.

  Finally, we show~(\textit{ii}). Here, it is enough to prove that
  $\pont{\gamma}{\A}{\B} \le 2 \cdot \qmax$ for each column $[\A, \B]$
  of $\costN{\gamma}[A_{\{h, i\}}]$. Again, the case when $\A = \gap$
  or $\B = \gap$ can easily be checked. Thus, assume that $\A \not=
  \gap$ and $\B \not= \gap$ which implies by construction that $\C
  \not= \gap$. Since $\secstar$ is a splitting, it follows that
  $\pont{\gamma }{\C}{\A} < \pont{\gamma}{\A}{\gap}$ and
  $\pont{\gamma}{\C}{\B} < \pont{\gamma}{\B}{\gap}$. Since
  $\pont{\gamma}{\C}{\A} < \pont{\gamma}{\A}{\gap}$, it follows from
  Preposition~\ref{fact:approx:0} that $\pont{\gamma}{\A}{\C} <
  \pont{\gamma}{\A}{\gap}$. It follows from~(\textit{i}) that
  $\pont{\gamma}{\A}{\B} \le \pont{\gamma}{\A}{\C} +
  \pont{\gamma}{\C}{\B} < \pont{\gamma}{\A}{\gap} +
  \pont{\gamma}{\B}{\gap} \le \qmax + \qmax = 2 \cdot
  \qmax$. Consequently, we have that $\costN{\gamma}[A_{\{h, i \}}] =
  \costA{\gamma}[A_{\{h, i \}}]/\abs{A_{\{h, i \}}} \le 2 \cdot \qmax
  \abs{A_{\{ h, i \}}}/ \abs{A_{ \{ h, i \}}} = 2 \cdot \qmax$.
\end{proof}

\begin{lem}\label{lemma:approx:3}
  Let $S$ be a $k$-sequence, $\mystar$ a star of $S$ with center $c$, 
  $\secstar$ a $\mystar$-starsplitting and
  $A = \textsc{CompatibleAlign}(\secstar)$. Then,
  \begin{enumerate}
  \item[(i)] $\costA{\gamma}[A_{\{h, i\}}] \le \costA{\gamma}[A_{\{h,
      c\}}] + \costA{\gamma}[A_{\{c, i\}}]$\, when $\gamma \in
  \metricA$  and
  \item[(ii)] $\costN{\gamma}[A_{\{h, i\}}] \le 2 \cdot \Big(
    \costN{\gamma}[A_{\{h, c\}}] + \costN{\gamma}[A_{\{c, i\}}] \Big)$\, when $\gamma \in
  \metricN$,
  \end{enumerate}
  for each $h < i$, $h \not= c$, $i \not= c$.
\end{lem}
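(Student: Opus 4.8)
\medskip

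The plan is to prove both parts by a column‑by‑column comparison inside $A = [s'_1,\ldots,s'_k]$, driven by Proposition~\ref{proposition:approx:2}. I will use repeatedly that an all‑$\gap$ column scores $\pont{\gamma}{\gap}{\gap}=0$, so that for any pair $\{p,q\}$ one has $\costA{\gamma}[A_{\{p,q\}}]=\sum_{j=1}^{\abs{A}}\pont{\gamma}{s'_p(j)}{s'_q(j)}$, and also that, by the way \textsc{CompatibleAlign} builds $A$, the induced alignment $A_{\{h,c\}}$ is, as a multiset of columns, exactly the starsplitting edge $\secstar_h$ (likewise $A_{\{c,i\}}$ and $\secstar_i$). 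For part~(i) I would simply sum the inequality of Proposition~\ref{proposition:approx:2}(i) over all columns $j=1,\ldots,\abs{A}$: the left‑hand side is $\costA{\gamma}[A_{\{h,i\}}]$ and the right‑hand side is $\costA{\gamma}[A_{\{h,c\}}]+\costA{\gamma}[A_{\{c,i\}}]$.

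For part~(ii), dividing~(i) by $\abs{A_{\{h,i\}}}$ is not enough, since $\abs{A_{\{h,i\}}}$ can be far smaller than $\abs{A_{\{h,c\}}}$ and $\abs{A_{\{c,i\}}}$ (many positions of $s_c$ may be matched by neither $s_h$ nor $s_i$ in the star). The refinement is to peel off the contribution $G_0:=\sum_{j\in Q_0}\pont{\gamma}{\gap}{s'_c(j)}$, where $Q_0$ is the set of columns $j$ with $s'_h(j)=s'_i(j)=\gap$ and $s'_c(j)\neq\gap$ and $t:=\abs{Q_0}$. These columns disappear in $A_{\{h,i\}}$ but occur in both $A_{\{h,c\}}$ and $A_{\{c,i\}}$, so applying Proposition~\ref{proposition:approx:2}(i) only over the columns surviving in $A_{\{h,i\}}$ gives the sharper bound $\costA{\gamma}[A_{\{h,i\}}]\le(\costA{\gamma}[A_{\{h,c\}}]-G_0)+(\costA{\gamma}[A_{\{c,i\}}]-G_0)$. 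I would then prove $(\costA{\gamma}[A_{\{h,c\}}]-G_0)/\abs{A_{\{h,i\}}}\le 2\,\costN{\gamma}[A_{\{h,c\}}]$ together with its $\{c,i\}$‑analogue and add them. Writing $a:=\costA{\gamma}[A_{\{h,c\}}]$, $A_1:=\abs{A_{\{h,c\}}}$, $X:=\abs{A_{\{h,i\}}}$, the first of these is exactly $(a-G_0)A_1\le 2aX$, which I would obtain from four facts: (1) $A_1\le X+t$, because every column of $A_{\{h,c\}}$ that is not in $Q_0$ is also a column of $A_{\{h,i\}}$; (2) every column of $A_{\{h,c\}}=\secstar_h$ costs at most $\qmax$ — a matched column survives the splitting only when its cost is strictly below both of the corresponding gap costs, and every gap column costs at most $\qmax$ — so $a-G_0\le(A_1-t)\qmax\le X\qmax$; (3) the defining inequality of $\metricN$, $\pont{\gamma}{\A}{\gap}\le 2\pont{\gamma}{\B}{\gap}$, forces each $Q_0$‑column to cost at least $\qmax/2$, whence $t\le 2G_0/\qmax$; (4) $G_0\le a$, since the $Q_0$‑columns lie inside $A_{\{h,c\}}$. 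Chaining: $(a-G_0)A_1\le(a-G_0)(X+t)=(a-G_0)X+(a-G_0)t\le(a-G_0)X+(X\qmax)(2G_0/\qmax)=aX+G_0X\le 2aX$.

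The delicate point is entirely in part~(ii): recognizing that the naive length comparison fails and that one must separate out $G_0$, and then keeping the relations between $\abs{A_{\{h,i\}}}$, $\abs{A_{\{h,c\}}}$ and $t$ straight. The two hypotheses enter precisely at step~(2) (the starsplitting property, giving the per‑column bound $\qmax$ — this is the only place the hypothesis ``$\secstar$ is a $\mystar$‑starsplitting'' is really used) and at step~(3) (the $\metricN$ gap‑ratio, bounding the $Q_0$‑columns from below), so I would state those as small claims before running the final chain of inequalities. The edge cases in which $s_h$, $s_i$ or $s_c$ is empty (so $X=0$ or $A_1=0$) make the statement trivial and can be dismissed first.
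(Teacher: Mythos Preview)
Your proposal is correct. Part~(i) is essentially the paper's argument: sum Proposition~\ref{proposition:approx:2}(i) columnwise, with the convention $\pont{\gamma}{\gap}{\gap}=0$. For part~(ii) you take a somewhat different route. Both you and the paper first sharpen~(i) by stripping off the contribution of the columns $J=Q_0$ where $s'_h(j)=s'_i(j)=\gap$ and $s'_c(j)\neq\gap$, and both use the $\metricN$ inequality to bound each such column from below by $\qmax/2$. The difference is in how the lengths are reconciled. The paper invokes Proposition~\ref{proposition:approx:2}(ii) (the bound $\costN{\gamma}[A_{\{h,i\}}]\le 2\qmax$) to justify adding $2\qmax\abs{J}$ to the numerator and $\abs{J}$ to the denominator, then passes through the identity $\abs{A_{\{h,i\}}}+\abs{J}=\abs{A_{\{h,i,c\}}}$ and finishes with $\abs{A_{\{h,c\}}},\abs{A_{\{c,i\}}}\le\abs{A_{\{h,i,c\}}}$. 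You instead treat the two halves separately, using the length comparison $\abs{A_{\{h,c\}}}\le\abs{A_{\{h,i\}}}+t$ together with a direct per-column bound $\le\qmax$ on $A_{\{h,c\}}=\secstar_h$ (which follows from the starsplitting construction, bypassing Proposition~\ref{proposition:approx:2}(ii)). Your argument is slightly more self-contained in that it does not need Proposition~\ref{proposition:approx:2}(ii) as a separate statement; the paper's is a bit slicker because the passage through $\abs{A_{\{h,i,c\}}}$ handles both halves at once and avoids the four-step algebraic chain.
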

\begin{proof}
Consider $A = [s'_{1}, \ldots, s'_{k}]$, $h,i \in \NN$ and $J = \{j :
s'_{c}(j) \not= \gap \mbox{ and } s'_{h}(j) = s'_{i }(j) = \gap \}$. Then, 
\begin{align*}
  \costA{\gamma}[A_{\{h, i\}}] + \sum_{j \in J}
  (\pont{\gamma}{\gap}{s'_{c}(j)} + \pont{\gamma}{s'_{c}(j)}{\gap})
  &= \sum_ {j \not\in J} \pont{\gamma}{s'_{h}(j)}{s'_{i}(j)} + \sum_{j \in J} (\pont{\gamma}{\gap}{s'_{c}(j)} + \pont{\gamma}{s'_{c}(j)}{\gap}) \\
  &\le \sum_{j \not\in J} \left( \pont{\gamma}{s'_{h}(j)}{s'_{c}(j)} + \pont{\gamma}{s'_{c}(j)}{s'_{i}(j)} \right) + \sum_{j \in J} (\pont{\gamma}{\gap}{s'_{c}(j)} +
  \pont{\gamma}{s'_{c}(j)}{\gap}) \\
  &= \costA{\gamma}[A_{\{h, c\}}] + \costA{\gamma}[A_{\{c, i\}}]\,,
\end{align*}
where the inequality holds due to
Proposition~\ref{proposition:approx:2}.  Therefore,
\begin{align}
\costA{\gamma}[A_{\{h, i\}}] &\le \costA{\gamma}[A_{\{h, c\}}] +
\costA{ \gamma}[A_{\{c, i\}}] - \sum_{j \in
  X}(\pont{\gamma}{\gap}{s'_{c}(j)} +
\pont{\gamma}{s'_{c}(j)}{\gap})\,.
\label{exp:approx:5}
\end{align}
Since $\gamma \in \metricA$, we have that
$\pont{\gamma}{\gap}{s'_{c}(j)}, \pont{\gamma}{s'_{c}(j )}{\gap} >
0$. It follows from~(\ref{exp:approx:5}) that (\textit{i}) is proven.

For (\textit{ii}), observe that, by definition of $\metricA$, we have
that
\[
\qmax = \max_{\sigma \in \Sigma} \big\{ \pont{\gamma}{\sigma}{\gap},
\pont{\gamma}{\gap}{\sigma} \big\} = \max_{\sigma \in \Sigma} \big\{
\pont{\gamma}{\sigma}{\gap} \big\} \le 2\,
\pont{\gamma}{s'_{c}(j)}{\gap} = \pont{\gamma}{\gap}{s'_{c}(j)} +
\pont{\gamma}{s'_{c}(j)}{\gap} 
\]
for every $j$. Furthermore, following these statements, we have that
\begin{align}
\costN{\gamma}[A_{\{h, i\}}] &= \frac{\costA{\gamma}[A_{\{h,
      i\}}]}{|A_{\{h, i\}}|} \nonumber \\
&\le 
\frac{\costA{\gamma}[A_{\{h, i\}}] + 2 \cdot \qmax \abs{J}}{|A_{\{h , i\}}| +
\abs{J}} \; \le \; 2 \cdot \frac{\costA{\gamma}[A_{\{h, i\}}] + \qmax \abs{J}}{|A_{\{h, i\}}| + \abs{J}}\label{exp1:proposition:approx0}
\\
&\le 
2 \cdot  \frac{
\costA{\gamma}[A_{\{h, c\}}] +
\costA{\gamma}[A_{\{c, i\}}]
- \sum_{j \in J}(\pont{\gamma}{\gap}{s'_{c}(j)} + \pont{\gamma}{s'_{c}(j) }{\gap}) + \qmax \abs{J}} 
{|A_{\{h, i, c\}}| - \abs{J} + \abs{J}}
\label{exp1:proposition:approx1}\\
&\le 
2 \cdot \frac{\costA{\gamma}[A_{\{h, c\}}] + \costA{\gamma}[A_{\{c, i\}}] -
\qmax \abs{J}+ \qmax\abs{J}}
{|A_{\{h, i, c\}}| - \abs{J}+ \abs{J}} =
2\cdot \left( \frac{\costA{\gamma}[A_{\{h, c\}}]}{|[A_{\{h, i, c\}}]|}+
\frac{\costA{\gamma}[A_{\{c, i\}}]}{|[A_{\{h, i, c\}}]|}\right)
\label{exp1:proposition:approx2}\\
&\le 
2 \cdot
\left( \frac{\costA{\gamma}[A_{\{h, c\}}]}{|A_{\{h, c\}}|}+
\frac{\costA{\gamma}[A_{\{c, i\}}]}{|A_{\{c, i\}}|} \right) =
2 \cdot \Big( \costN{\gamma}[A_{\{h, c\}}]+\costN{\gamma}[A_{\{c, i\}}] \Big)\,, \label{exp1:proposition:approx55}
\end{align}
where the first inequality of~(\ref{exp1:proposition:approx0}) is a
consequence of Proposition~\ref{proposition:approx:2} and the second
inequality follows since every entry of $\gamma$ is nonnegative,
(\ref{exp1:proposition:approx1}) follows from~(\ref{exp:approx:5}) and
from $|A_{\{h, i\}}| = |A_{\{h, i, c\}}| -
\abs{J}$,~(\ref{exp1:proposition:approx2}) follows as a consequence of
$\gamma \in \metricN$, and (\ref{exp1:proposition:approx55})
follows as a consequence of $|A_{\{h, c\}}| \le |A_{\{h, i, c\}}|$ and
$|A_{\{c, i\}}| \le |A_{\{h, i, c\}}|$.
\end{proof}

Observe now that the running time of \textsc{CompatibleAlign} is
$O(k^{2}n)$.

\begin{algorithm}\caption{}\label{alg-aprox-SP}
\begin{algorithmic}[1]
\REQUIRE $k$-sequence $S$ \\ \ENSURE
$\generalcost[A]$ such that $A \in \mathcal{A}_S$ and\\
$\costSP{\gamma}[A] \le 6 \cdot \DistanceSP{\gamma}(S)$ if
$\generalcost = \costA{\gamma}$ and $\gamma \in \metricA$, and\\ $\costSPN{\gamma}{2}[A] \le 12
\cdot \DistanceNSP{\gamma}{2}(S)$ if $\generalcost = \costN{\gamma}$ and $\gamma \in \metricN$.

\medskip

\STATE Let $\mystar$ be a $\generalcost$-optimal star of $S$ with center $c$
\STATE Compute the $\mystar$-splitting $\secstar$
\STATE $A \gets \textsc{CompatibleAlign}(\secstar)$
\STATE \textbf{return} $\generalcost[A]$
\end{algorithmic}
\end{algorithm}

\begin{thm}
  Let $S$ be a $k$-sequence and $\gamma$ be a
  scoring matrix. Then,
  Algorithm~\ref{alg-aprox-SP} computes $\generalcost[A]$ correctly, 
  \begin{enumerate}
  \item[(i)] in $O(k^{2}n^{2})$-time such that $\costSP{\gamma}[A] \le
    6 \cdot \DistanceSP{\gamma}(S)$, if $\generalcost =
    \costA{\gamma}$ and $\gamma = \metricA$,
  \item[(ii)] in $O(k^{2}n^{3})$-time such that
    $\costSPN{\gamma}{2}[A] \le 12 \cdot \DistanceNSP{\gamma}{2}(S)$,
    if $\generalcost = \costN{\gamma}$ and $\gamma \in \metricN$\,,
  \end{enumerate}
  where $A$ is the alignment of $S$ computed by the algorithm.
\end{thm}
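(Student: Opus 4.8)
The plan is to read the running time off the four lines of Algorithm~\ref{alg-aprox-SP} and to obtain the approximation guarantee by chaining the three structural lemmas already proved — Lemma~\ref{lemma:approx:2}, Lemma~\ref{proposition:approx:1}, and Lemma~\ref{lemma:approx:3} — with a Gusfield-style split of the (normalized) SP-score of the output alignment. Throughout, write $\generalcost=\costA{\gamma}$ with $\gamma\in\metricA$ for part~(i) and $\generalcost=\costN{\gamma}$ with $\gamma\in\metricN$ for part~(ii). Since $\metricN\subseteq\metricA$, every lemma cited below applies in both cases, and by the definition of $\Generalcost$ together with criteria~(\ref{exp:def-sp}) and~(\ref{criterion:V2}) we have $\Generalcost[A]=\costSP{\gamma}[A]$ and $\Opt(S)=\DistanceSP{\gamma}(S)$ in case~(i), while $\Generalcost[A]=\costSPN{\gamma}{2}[A]$ and $\Opt(S)=\DistanceNSP{\gamma}{2}(S)$ in case~(ii).

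For the running time: line~1 computes, for all $O(k^2)$ pairs, the value $\opt(s_h,s_c)$ and a realizing alignment, at cost $O(n^2)$ per pair when $\generalcost=\costA{\gamma}$ (Needleman--Wunsch) and $O(n^3)$ per pair when $\generalcost=\costN{\gamma}$ (Marzal--Vidal), hence $O(k^2n^2)$ resp.\ $O(k^2n^3)$ overall; line~2 costs $O(kn)$ and line~3 costs $O(k^2n)$ as noted earlier, and line~4 evaluates $\generalcost$ on the $O(kn)$ columns of $A$. Line~1 dominates, giving the stated $O(k^2n^2)$ and $O(k^2n^3)$ bounds. Correctness of the returned value is immediate: \textsc{CompatibleAlign} returns a genuine alignment $A\in\mathcal{A}_S$ and line~4 returns its $\costSP{\gamma}$- (resp.\ $\costSPN{\gamma}{2}$-) score; the substance of the theorem is the approximation factor.

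For the factor, let $\mystar$ be the $\generalcost$-optimal $c$-star fixed in line~1 and $\secstar$ its starsplitting. Optimality of $\mystar$ and Lemma~\ref{lemma:approx:2} give $\cost(\mystar)=\optStar(S)\le\frac{2}{k}\Opt(S)$, and Lemma~\ref{proposition:approx:1} gives $\cost(\secstar)\le 3\cost(\mystar)\le\frac{6}{k}\Opt(S)$. Because $A$ is compatible with $\secstar$, its induced alignment on each pair containing the center $c$ is exactly the corresponding edge of $\secstar$, so $\sum_{h\neq c}\generalcost[A_{\{h,c\}}]=\cost(\secstar)$. I then decompose
\[
\Generalcost[A]=\sum_{h\neq c}\generalcost[A_{\{h,c\}}]+\sum_{\substack{1\le h<i\le k\\ h,i\neq c}}\generalcost[A_{\{h,i\}}],
\]
bound each non-center term with Lemma~\ref{lemma:approx:3} — by $\costA{\gamma}[A_{\{h,c\}}]+\costA{\gamma}[A_{\{c,i\}}]$ in case~(i), by $2\bigl(\costN{\gamma}[A_{\{h,c\}}]+\costN{\gamma}[A_{\{c,i\}}]\bigr)$ in case~(ii) — and observe that each fixed index $j\neq c$ is an endpoint of exactly $k-2$ of the non-center pairs, so each $\generalcost[A_{\{j,c\}}]$ is charged $k-2$ times. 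Hence the second sum is at most $(k-2)\cost(\secstar)$ in case~(i) and $2(k-2)\cost(\secstar)$ in case~(ii), which yields $\costSP{\gamma}[A]\le(k-1)\cost(\secstar)\le\frac{6(k-1)}{k}\DistanceSP{\gamma}(S)\le 6\DistanceSP{\gamma}(S)$ and $\costSPN{\gamma}{2}[A]\le(2k-3)\cost(\secstar)\le\frac{6(2k-3)}{k}\DistanceNSP{\gamma}{2}(S)=\bigl(12-\frac{18}{k}\bigr)\DistanceNSP{\gamma}{2}(S)\le 12\DistanceNSP{\gamma}{2}(S)$.

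The delicate point I expect to spend the most care on is the double-sum bookkeeping: checking that ``$A$ compatible with $\secstar$'' really forces $A_{\{h,c\}}=\secstar_h$ as induced alignments (the row-order convention for $A_I$ makes this clean), that the center/non-center decomposition is exact, and that the multiplicity $k-2$ is charged symmetrically to both endpoints of each non-center pair, so that the constants come out as $k-1$ and $2k-3$ — and therefore exactly $6$ and (a hair below) $12$ — rather than off by a factor. All metric content (triangle inequalities, the factor-$2$ normalized triangle-type inequality, and the $\le 2\qmax$ bound on induced columns) has already been packaged into Lemmas~\ref{lemma:approx:2}--\ref{lemma:approx:3} and Proposition~\ref{proposition:approx:2}, so no new inequality about $\gamma$ is needed here.
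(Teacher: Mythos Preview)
Your proof is correct and follows essentially the same route as the paper: chain Lemma~\ref{lemma:approx:2}, Lemma~\ref{proposition:approx:1}, and Lemma~\ref{lemma:approx:3} to bound $\Generalcost[A]$ by a multiple of $\cost(\secstar)$, then by a multiple of $\Opt(S)$. The only difference is bookkeeping: the paper applies the bound of Lemma~\ref{lemma:approx:3} uniformly to \emph{all} pairs (the center pairs being trivial), obtaining $(k-1)\cost(\secstar)$ and $2(k-1)\cost(\secstar)$ in cases~(i) and~(ii), whereas you separate out the center terms first and apply the lemma only to non-center pairs, obtaining $(k-1)\cost(\secstar)$ and $(2k-3)\cost(\secstar)$ --- a hair sharper in case~(ii), yielding $12-18/k$ rather than the paper's $12-12/k$, both of course $\le 12$.

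One small remark on timing: your ``line~1 dominates'' glosses over the cost of line~4, which is $O(k^3n)$ (there are $\binom{k}{2}$ induced pairwise alignments, each of length $O(kn)$). The paper's own proof in fact concludes $O(k^2n^2+k^3n)$ and $O(k^2n^3+k^3n)$, so the theorem's clean $O(k^2n^2)$ and $O(k^2n^3)$ bounds implicitly assume $k=O(n)$; your write-up inherits the same implicit assumption.
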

\begin{proof}
Clearly, the value returned by the Algorithm~\ref{alg-aprox-SP} is a
score of an alignment of $S$.

We show then that the approximation factor is as expected.
Let $c$ be a center of the stars $X$ and $Y$ found in the first two steps of Algorithm~\ref{alg-aprox-SP}.
Notice then that
\begin{align}
\sum_{h=1}^{k-1} \sum_{i = h+1}^{k}
\big(\generalcost [A_{ \{ h, c \} }] +
\generalcost [A_{ \{ c, i \} }] \Big) &= 
(k-1) \cdot \cost(\secstar)\label{exp:approx:2} \\
&\le 3 \cdot (k-1) \cdot \cost(D(\mystar))\label{exp:approx:3} \\
&\le 3 \cdot (k-1) \cdot \frac{2}{k} \cdot
\opt(S) = 6 \cdot \frac{k-1}{h} \cdot \opt(S) \le 6 \cdot \opt (S)\,,\label{exp:approx:4}
\end{align}
where the equality~\eqref{exp:approx:2} follows since
$c$ is the center of star  
$\secstar$ which is a
compatible with alignment $A$,~\eqref{exp:approx:3} follows from
Lemma~\ref{proposition:approx:1} and~\eqref{exp:approx:4} follows from
Lemma~\ref{lemma:approx:2}.

Suppose then that $\generalcost = \costA{\gamma}$ and $\gamma \in
\metricA$. Thus,
\begin{align*}
\costSP{\gamma}[A] &= 
\sum_{h=1}^{k-1} \sum_{i = h+1}^{k}
\costA{\gamma}[A_{\{h, i\}}] \\
&\le 
\sum_{h=1}^{k-1} \sum_{i = h+1}^{k}
\Big(\costA{\gamma}[A_{\{h, c\}}] +
\costA{\gamma} [A_{\{c, i\}}]\Big)
\le 6 \cdot \DistanceSP{\gamma}(S)\,,
\end{align*}
where the first inequality follows from Lemma~\ref{lemma:approx:3} and
the second follows from Equation~(\ref{exp:approx:4}).

Suppose now that $\generalcost = \costN{\gamma}$ and $\gamma \in \metricN$.
Thus, 
\begin{align*}
\costSPN{2}{\gamma}[A] &= 
\sum_{h=1}^{k-1} \sum_{i = h+1}^{k}
\costN{\gamma}[A_{\{h, i\}}] \\
&\le 
2 \cdot \sum_{h=1}^{k-1} \sum_{i = h+1}^{k}
\Big(\costN{\gamma}[A_{\{h, c\}}] +
\costN{\gamma}[A_{\{c, i\}}]\Big)
\\
&\le  2 \cdot 6 \cdot \DistanceNSP{2}{\gamma} (S) = 12 \cdot \DistanceNSP{2}{\gamma}(S)\,,
\end{align*}
where, similarly, the first inequality follows from
Lemma~\ref{lemma:approx:3} and the second follows from
Equation~(\ref{exp:approx:4}).

The time required to find an optimal $\generalcost$-star is the time
to compute the pairwise alignments of $S$, which is ${k \choose 2}
O(n^{2})$ if $\generalcost = \costA{\gamma}$ and it is ${k \choose 2}
O(n^{3})$ if $\generalcost = \costN{\gamma}$. Additionally, we have to
consider the time to determine the optimal star, which is $O (k^{2})$,
implying that the time required to compute line 1 of
Algorithm~\ref{alg-aprox-SP} is ${k \choose 2} O(n^{2}) + O(k^2) =
O(k^{2} n^ {2})$ if $\generalcost = \costA{\gamma}$ and ${k \choose 2}
O(n^{3}) + O(k^2) = O(k^{2} n^ {3})$ if $\generalcost =
\costN{\gamma}$. The time spent to compute lines 2 and 3 are $O(kn)$
and $O(k^{2}n)$, respectively, and to compute line 4 is $O(k^{3}n)$,
since we have to compute the score of ${k \choose 2} = O(k^{2})$
pairwise alignments of length $O(kn)$. Therefore, the total time spent
by the algorithm is $O(k^{2 } n^{2} + k^{3} n)$ if $\generalcost =
\costA{\gamma}$ and $O(k^{2 } n^{3} + k^{3} n)$ if $\generalcost =
\costN{\gamma}$.
\end{proof}

\section{Conclusion and future work}\label{sec:concl}

We presented and discussed multiple aspects of normalized multiple
sequence alignment (NMSA). We defined three new criteria for computing
normalized scores when aligning multiple sequences, showing the
NP-hardness and exact algorithms for solving the $\NMSA\text{-}z$ given each
criterion $z = 1, 2, 3$. In addition, we adapted an existing
2-approximation algorithm for $\MSA$ when the scoring matrix $\gamma$
is in the classical class $\metricC$, leading to a 6-approximation
algorithm for $\MSA$ when $\gamma$ is in the broader class $\metricA
\supseteq \metricC$ and to a 12-approximation for $\NMSA\text{-}2$ when
$\gamma$ is in $\metricN \subseteq \metricA$, a slightly more
restricted class such that the cost of deletion for any symbol is at
most twice the cost for any other.
We summarize these contributions in Table=ref{tableconclusion}.

\begin{table}[htpb]
\begin{center}
\begin{tabular}{c|c|c}
\textbf{problems} & \textbf{time of exact algorithms} & 
\textbf{time of approximation algorithm}\\
\hline \hline
$\MSA$ & --- & $O (k^{2} n^{2} + k^{3} n)$\\
\hline
$\NMSA\text{-}1$ & $O (2^{k}k^{3} (n+1)^{k+1})$ &
--- \\
\hline
$\NMSA\text{-}2$ &
$O \left(\left( 1 + \frac{1}{2n+1} \right)^{k} (2n + 1)^{k^{2}}k^{2}\right)$ &
$O (k^{2} n^{2} + k^{3} n)$
\\
\hline
$\NMSA\text{-}3$ & $O (2^{k}  k^{4} (n+1)^{k+1})$ & ---\\
\hline \hline
\end{tabular}
\end{center}
\caption{We are considering a $k$-sequence where each sequence has
  maximum length of $n$;
  all the problem decision version are NP-complete.}
\label{tabelaCarlinhosVarias}
\end{table}

This work is an effort to expand the boundaries of multiple sequence
alignment algorithms towards normalization, an unexplored domain that
can produce results with higher accuracy in some applications. In
future work, we will implement our algorithms in order to verify how
large are the sequences our algorithms are able to handle. Also, we
plan to perform practical experiments, measuring how well alignments
provided by our algorithms and other MSA algorithms agree with multiple
alignment benchmarks. In addition, we intend to measure the accuracy
of phylogenetic tree reconstruction based on our alignments for
simulated and real genomes. Finally, we will work on heuristics and
parallel versions of our algorithms in order to faster process large
datasets.

\bibliographystyle{alpha}
\bibliography{bibliography}

\end{document}